\newtheorem{theorem}{Theorem}
\newtheorem{definition}{Definition}
\newtheorem{lemma}{Lemma}
\newtheorem{proposition}{Proposition}
\newtheorem*{theorem*}{Theorem}
\newcommand{\ket}[1]{\ensuremath{\left|#1\right\rangle}}
\newcommand{\bra}[1]{\ensuremath{\left\langle#1\right|}}
\newcommand{\ketbra}[2]{|#1\rangle\langle#2|}
\def\opone{\leavevmode\hbox{\small1\normalsize\kern-.33em1}}
\newcommand{\dd}{\mathrm{d}}
\newcommand{\tr}{\mathrm{tr}}
\newcommand{\Var}{\mathrm{Var}}
\newcommand{\bfx}{\mathbf{x}}
\newcommand{\bfp}{\mathbf{p}}
\newcommand{\bfa}{\mathbf{a}}
\newcommand{\bfb}{\mathbf{b}}
\newcommand{\bfR}{\mathbf{R}}
\newcommand{\bfr}{\mathbf{r}}
\newcommand{\bfd}{\mathbf{d}}
\newcommand{\bfc}{\mathbf{c}}
\newcommand{\bfs}{\mathbf{s}}
\newcommand{\bfu}{\mathbf{u}}
\newcommand{\bfv}{\mathbf{v}}
\newcommand{\bff}{\mathbf{f}}
\newcommand{\bfg}{\mathbf{g}}
\newcommand{\bfgamma}{\boldsymbol{\gamma}}
\newcommand{\bfnu}{\boldsymbol{\nu}}
\newcommand{\rhoth}{\rho_{\text{th}}}
\newcommand{\E}{\mathbb{E}}
\newcommand{\cG}{\mathcal{G}}
\newcommand{\cQ}{\mathcal{Q}}
\newcommand{\cE}{\mathcal{E}}
\newcommand{\cB}{\mathcal{B}}
\newcommand{\cL}{\mathcal{L}}
\newcommand{\cK}{\mathcal{K}}
\newcommand{\fS}{\mathfrak{S}}
\begin{document}

\title{Quantum Maximal Correlation for Gaussian States}
\date{\today}
\author{Salman Beigi}
\affiliation{School of Mathematics, Institute for Research in Fundamental Sciences (IPM), P.O. Box 19395-5746, Tehran, Iran}

\author{Saleh Rahimi-Keshari}
\affiliation{School of Physics, Institute for Research in Fundamental Sciences (IPM), P.O. Box 19395-5531, Tehran, Iran}
\affiliation{Department of Physics, University of Tehran, P.O. Box 14395-547, Tehran, Iran}

\begin{abstract}
	We compute the quantum maximal correlation for bipartite Gaussian states of continuous-variable systems. Quantum maximal correlation is a measure of correlation with the monotonicity and tensorization properties that can be used to study whether an arbitrary number of copies of a resource state can be locally transformed into a target state without classical communication, known as the local state transformation problem. We show that the required optimization for computing the quantum maximal correlation of Gaussian states can be restricted to local operators that are linear in terms of phase-space quadrature operators. This allows us to derive a closed-form expression for the quantum maximal correlation in terms of the covariance matrix of Gaussian states. Moreover, we define Gaussian maximal correlation based on considering the class of local hermitian operators that are linear in terms of phase-space quadrature operators associated with local homodyne measurements. This measure satisfies the tensorization property and can be used for the Gaussian version of the local state transformation problem when both resource and target states are Gaussian. We also generalize these measures to the multipartite case. Specifically, we define the quantum maximal correlation ribbon and then characterize it for multipartite Gaussian states.
\end{abstract}
\maketitle


\section{Introduction}\label{sec:introduction}

The problem of preparing a desired bipartite quantum state from some available resource states under certain operations is of great foundational and practical interest in quantum information science. This problem has been extensively studied under the class of local operations and classical communication in the context of entanglement distillation, where two parties aim to prepare a highly-entangled state using copies of a weakly entangled state~\cite{BennettPRA1996,BennettPRL1996,BennettPRA1996-2,HorodeckiRMP2009}. The optimal \emph{rate} of entanglement distillation for pure states equals the entanglement entropy~\cite{BennettPRA1996}, and there are mixed entangled states that are not distillable~\cite{HorodeckiPRL98}. 

A more recent version of this problem is \emph{local state transformation} under local operations and \emph{without} classical communication~\cite{Beigi13}. See Figure~\ref{fig:setup} for a precise description of the problem. This problem is highly non-trivial even if the goal is to generate only a single copy of the \emph{target state} $\sigma_{A'B'}$ 
using arbitrarily many copies of the \emph{resource state} $\rho_{AB}$. The difficulty remains even if the target state is not entangled, or even in the fully classical setting~\cite{GhaziFOCS2016}; see also~\cite{QinYao2021}.

To study the local state transformation problem we need measures of correlation that are monotone under local operations and remain unchanged when computed on multiple copies of a bipartite state. The latter crucial property is called \emph{tensorization}, and is required since in local state transformation we assume the availability of arbitrary many copies of the resource state while we aim to generate only a single copy of the target. There are resource measures, based on certain free operations, for Gaussian states~\cite{LamiPRA2018}, quantum coherence~\cite{LamiPRL2019}, and  nonclassicality of quantum states~\cite{LeePRR2022} that satisfy the tensorization property. However, this property is not satisfied by most measures of correlation, such as mutual information and entanglement measures, which makes them inapplicable to the local state transformation problem. \emph{Quantum maximal correlation} was introduced as a measure that satisfies both the monotonicity and the tensorization properties, and therefore is suitable for proving bounds on this problem~\cite{Beigi13}. In particular, based on these properties, one can see that local state transformation is not possible if the maximal correlation of the target state is larger than that of the resource state; see Section~\ref{sec:MC} for more details.
	
Quantum maximal correlation of a bipartite quantum state $\rho_{AB}$ is defined as $\sup |\tr(\rho_{AB} X_A^\dagger\otimes Y_B)|$, where the supremum is taken over all local operators $X_A, Y_B$ that satisfy $\tr(\rho_A X_A) = \rho(\rho_BY_B) =0$ and $\tr(\rho_A X_A^\dagger X_A) = \rho(\rho_BY_B^\dagger Y_B) =1$. It is shown that computing quantum maximal correlation is a tractable problem for systems with a finite-dimensional Hilbert space~\cite{Beigi13}. However, in general, it is not clear how to calculate quantum maximal correlation for states of continuous-variable systems with infinite-dimensional Hilbert spaces, as the space of local operators $X_A, Y_B$ becomes intractable. Of particular interest is the class of Gaussian states that are readily available in the laboratory and can be used as resource states to prepare other states.

In this paper, we compute the quantum maximal correlation for Gaussian states of continuous-variable systems. This measure enables us to study the local state transform problem when either the resource state or the target state is Gaussian. In this case, we show that for Gaussian states it is sufficient to optimize over local operators $X_A, Y_B$ that are linear in terms of phase-space quadrature operators. This turns the inherently infinite-dimensional problem of computing quantum maximal correlation into a finite dimensional one; see Theorem~\ref{thm:main-mc} for the statement of our main result. Moreover, we define \emph{Gaussian maximal correlation}, as a measure based on hermitian and linear local operators in terms of quadrature operators, corresponding to local homodyne measurements. This measure can be used in the Gaussian local state transform scenarios, where the target and resource states are Gaussian. In particular, this shows that copies of weakly-correlated Gaussian states cannot be locally transformed to a highly-correlated Gaussian state. This result should be compared to previous results showing that entanglement in Gaussian states cannot be distilled using Gaussian operations~\cite{EiserScheelPlenio,GiedkeCirac02,FiurasekPRL2002,GiedkeKraus14}.

We also generalize the quantum maximal correlation to the multipartite setting. We define an invariant of multipartite quantum states called \emph{quantum maximal correlation ribbon} that, similar to the bipartite case, satisfies the monotonicity and the tensorization properties. We also show that to compute the maximal correlation ribbon for multipartite Gaussian states it suffices to restrict to local operators that are linear in terms of quadrature operators.

\begin{figure}
	\begin{center}
		\includegraphics[scale=0.53]{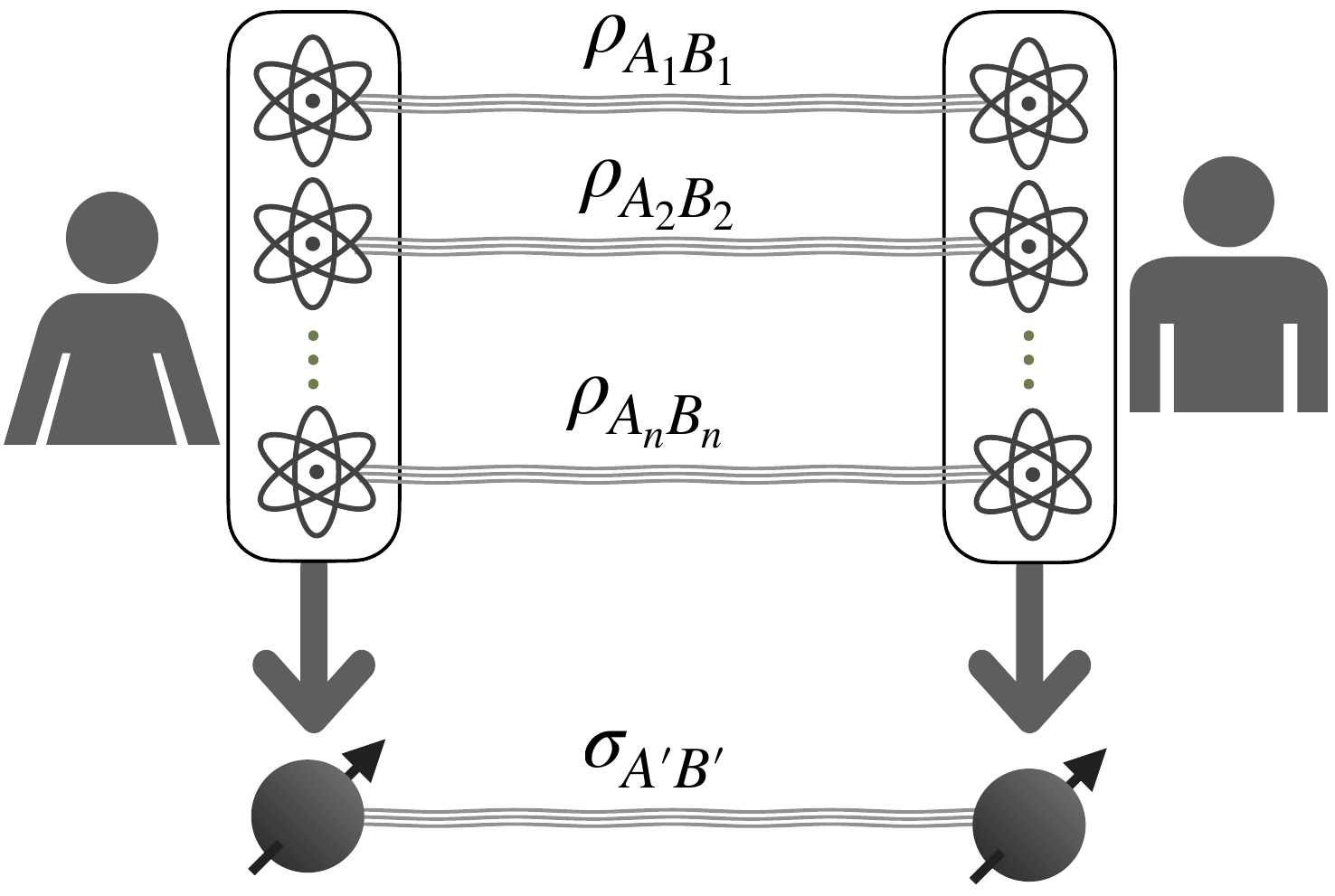}
		\caption{\footnotesize  The local state transformation problem: two parties share an arbitrary number of copies of a resource state $\rho_{AB}$, and their goal is to generate a single copy of a target state $\sigma_{A'B'}$. They are allowed to apply local quantum operations, but classical communication is forbidden. Quantum maximal correlation, a useful measure to study the local state transformation problem, is monotone under local operations and remains unchanged when computed on multiple copies of the state. The latter is known as the tensorization property. The local state transformation is not possible if the quantum maximal correlation of $\sigma_{A'B'}$ is larger than that of $\rho_{AB}$. 
		}
		\label{fig:setup}
	\end{center}
\end{figure}  

\medskip

\noindent
\emph{Structure of the paper:}  In Section~\ref{sec:MC}, we review the notion of quantum maximal correlation from~\cite{Beigi13}. In Section~\ref{sec:GQS}, we review the definition of Gaussian states and some of their main properties. We note that the objective function in quantum maximal correlation is a bilinear form which can be thought of as an inner product. Then, to compute maximal correlation it would be useful to find orthonormal bases for the space of local operators $X_A, Y_B$. Following this point of view, Section~\ref{sec:orth-basis} is devoted to introduce such an orthonormal basis for Gaussian states which might be of independent interest. Results of this section will be used to prove our main result in Section~\ref{sec:bipart-Gaussian}. We introduce the Gaussian maximal correlation in Section~\ref{sec:GMC}. Considering examples of Gaussian states in Section~\ref{sec:Examples}, we further illustrate various features of these correlation measures. We also generalize the definition of maximal correlation for multipartite states in Section~\ref{sec:MC-ribbon-Gaussian} by introducing the quantum maximal correlation ribbon. We show that maximal correlation ribbon satisfies the monotonicity and the tensorization properties, and compute it for multipartite Gaussian states. Detailed proofs of the results in the multipartite case are left for the appendices. We conclude the paper in Section~\ref{sec:discussion}.

\section{Review of Quantum Maximal Correlation}\label{sec:MC}

Given a bipartite probability distribution $p_{AB}$, its maximal correlation is defined as the maximum of the \emph{Pearson correlation coefficient} over all functions of random variables $A$ and $B$~\cite{Hirschfeld1935,Gebelein1941,Renyi1959Measures,Witsenhausen1975}. That is, the classical maximal correlation is given by\footnote{In the classical literature, maximal correlation is usually denoted by $\rho$. Here in the quantum case, following~\cite{Beigi13}, we save $\rho$ for density matrices and denote maximal correlation by $\mu$.}
\begin{align}\label{eq:MC-classic}
\mu(A, B) = \sup \E[f_A g_B],
\end{align}
where the supremum is taken over all real functions $f_A$ and $g_B$ with zero mean and unit variance, i.e., $\E[f_A]=\E[g_B] =0$ and $\E[f_A^2] = \E[g_B^2]=1$. The classical maximal correlation is zero if and only if the two random variables are independent, and equals one if they have a ``common bit," meaning that there are non-trivial functions $f, g$ such that $f(A)= g(B)$.

Maximal correlation for Gaussian distributions is first computed in~\cite{Lancaster}. The main result of~\cite{Lancaster} is that in the computation of maximal correlation for a Gaussian distribution $p_{AB}$ over $\mathbb R^2$, it suffices to restrict the optimization in~\eqref{eq:MC-classic} to functions $f_A$ and $g_B$ that are linear in $A$ and $B$, respectively. Based on this, there are essentially unique choices of linear functions $f_A$ and $g_B$ and then $\mu(A, B)$ for such a Gaussian distribution $p_{AB}$ equals
\begin{equation}\label{eq:ClasMC-Gaus}
\mu(A, B) =\frac{|\text{Cov}[A,B]|}{\sqrt{\text{Var}[A]\text{Var}[B]}},
\end{equation}
where $\text{Cov}[A,B]=\E[A B]-\E[A]\E[B]$ and $\text{Var}[A]=\text{Cov}[A,A]$ is the variance.

Maximal correlation in the quantum case can be defined by replacing functions in Eq.~(\ref{eq:MC-classic}) with  local operators~\cite{Beigi13}. Given a density operator $\rho_{AB}$ describing the joint state of a bipartite quantum system, quantum maximal correlation is defined as
\begin{align}\label{eq:def-max-cor}
\mu(A, B) := \sup_{X_A, Y_B} & |\tr\big(\rho_{AB} X_A^\dagger\otimes Y_B\big)|\\
& ~\tr(\rho_A X_A) = \rho(\rho_BY_B) =0,\nonumber\\
& ~\tr(\rho_A X_A^\dagger X_A) = \rho(\rho_BY_B^\dagger Y_B) =1,\nonumber
\end{align}
where the supremum is over all choices of local operators $X_A, Y_B$. Note that, in this definition, these operators are not necessarily hermitian, hence we maximize over the modulus of the objective function $\tr(\rho_{AB} X_A^\dagger\otimes Y_B)$. In fact, as shown in~\cite{Beigi13}\footnote{See the updated arXiv version of~\cite{Beigi13}.} and discussed in Section~\ref{sec:Examples}, for some quantum states non-hermitian operators are required to further optimize the quantum maximal correlation.

An application of the Cauchy-Schwarz inequality shows that $0\leq \mu(A, B)\leq 1$. It can be shown that $\mu(A, B)=0$ if and only if $\rho_{AB}=\rho_A\otimes \rho_B$ is a product state. Note that if $\rho_{AB}$ is not product, there always exist local measurements whose outcomes are correlated and hence their maximal correlation is nonzero. Also, restricting $X_A$ and $Y_B$ to local hermitian operators, one can verify that $\mu(A, B)=1$ if and only if there exist non-trivial local measurements described by operators $\{\Pi_{A,n}\}_{n=1}^{d}$ and $\{\Pi_{B,m}\}_{m=1}^{d}$, where $d\geq2$, such that the joint outcome probability distribution is perfectly correlated~\cite{Beigi13}: 
\begin{equation}\label{eq:perfect-shar-rand}
p(n,m)=\tr\big(\rho_{AB} \Pi_{A,n}\otimes \Pi_{B,m}\big)=p_n\delta_{n,m}.
\end{equation}
This form means that the shared state between two parties can be used to extract perfect shared randomness through local measurements. 
Based on this, we can see that $\mu(A, B)=1$ for any pure entangled state $\ket{\psi_{AB}}=\sum_n c_n\ket{n_A,n_B}$ with $\{\ket{n_A}\}$ and $\{\ket{n_B}\}$ being the Schmidt bases, and its completely dephased version in the Schmidt bases, $\rho_{AB}=\sum_n |c_n|^2 \ketbra{n_A}{n_A}\otimes\ketbra{n_B}{n_B}$, which is a separable state. In these cases, considering observables $N_A=\sum_n h_n\ketbra{n_A}{n_A}$ and $N_B=\sum_n h_n\ketbra{n_B}{n_B}$ for subsystems $A$ and $B$, the optimal local operators in~\eqref{eq:def-max-cor} can be found as  $X_A=(N_A-\langle N_A\rangle)/\Delta N_A$ and $Y_B=(N_B-\langle N_B\rangle)/\Delta N_B$ with the identical means $\langle N_A\rangle=\langle N_B\rangle=\sum_n |c_n|^2 h_n$, and the identical standard deviations, $\Delta N_A=\Delta N_B=(\sum_n |c_n|^2 h_n^2-\langle N_A\rangle^2)^{1/2}$.

Maximal correlation satisfies two crucial properties
~\cite{Beigi13}:
\begin{itemize}
\item \textbf{(Data processing)} If $\sigma_{A'B'} = \Phi_{A\to A'}\otimes \Psi_{B\to B'}(\rho_{AB})$, where $\Phi_{A\to A'}$ and $\Psi_{B\to B'}$ are local quantum operations (quantum completely-positive trace preserving (cptp) super-operators) then 
$$\mu(A', B') \leq \mu(A, B).$$

\item \textbf{(Tensorization)} For any bipartite density matrix $\rho_{AB}$ and any integer $n$ we have
$$\mu(A^n, B^n) = \mu(A, B),$$
where the left hand side is the maximal correlation of the state $\rho_{A^nB^n}=\rho_{AB}^{\otimes n}$ considered as a bipartite state. 
\end{itemize}

The first property says that maximal correlation is really a measure of correlation, and does not increase under local operations. The second property, however, is an intriguing property saying that no matter how many copies of $\rho_{AB}$ are shared between two parties, their maximal correlation remains the same. This is unlike most measures of correlations (such as mutual information and entanglement entropy) that scale when the number of copies increase. 

The above two properties of maximal correlation make it suitable for proving impossibility of certain local state transformations. Suppose that $\rho_{AB}$ and $\sigma_{A'B'}$ are two bipartite state with $\mu(A, B)<\mu(A', B')$. Then, it is not possible to transform $\rho_{AB}$ to $\sigma_{A'B'}$ under local operations even if an arbitrary many copies of $\rho_{AB}$ is available.
This is because if there exist local operations $\Phi_{A^n\to A'}, \Psi_{B^n\to B'}$ acting on $n$ copies of $\rho_{AB}$ such that $\sigma_{A'B'} = \Phi_{A^n\to A'}\otimes \Psi_{B^n\to B'}(\rho_{AB}^{\otimes n})$, then we must have 
$$\mu(A, B) = \mu(A^n, B^n)\geq \mu(A', B'),$$
where the equality is due to the tensorization property and the inequality follows from the data processing inequality.

Let us examine the example of  noisy Bell states. Let 
\begin{equation}\label{eq:Werner state}
\rho_{AB}^{(\kappa)} = \kappa\ketbra{\psi}{\psi} + (1-\kappa) \frac 14 {\bm I}, 
\end{equation}
be a mixture of the Bell state $\ket\psi = \frac{1}{\sqrt 2} (\ket{00} + \ket{11})$ and the maximally mixed state ${\bm I}/4$ for two qubits with $0\leq\kappa\leq1$.  It is not hard to verify that $\mu(A, B) = \kappa$; see~\cite{Beigi13}. Thus, by the above observation, local transformation of $\rho_{AB}^{(\kappa)}$ to $\rho_{AB}^{(\kappa')}$ is impossible if $\kappa'>\kappa$ even with arbitrary many copies of $\rho_{AB}^{(\kappa)}$. 

It is shown in~\cite[Theorem 1]{Beigi13} that the maximal correlation $\mu(A, B)$ of $\rho_{AB}$ is equal to the \emph{second Schmidt coefficient} of some vector associated to $\rho_{AB}$ in a tensor product Hilbert space. This makes the problem of computing $\mu(A, B)$ tractable when the dimensions of the local Hilbert spaces of $A, B$ are finite. In the infinite dimensional case, however, computation of maximal correlation remains a challenge in general.  

Our main result in this paper is to compute the maximal correlation for Gaussian quantum states, which are of great practical interest in quantum information processing. Similar to the classical maximal correlation for Gaussian probability distributions, we show that in the computation of $\mu(A, B)$ for Gaussian state $\rho_{AB}$, the optimization in~\eqref{eq:def-max-cor} can be restricted to operators $X_A, Y_B$ that are linear in local creation and annihilation operators.

\section{Gaussian Quantum States}\label{sec:GQS}

In this section we review the  definition and some basic properties of Gaussian states. For more details we refer to~\cite{Serafini17, Adesso+14}.

The Hilbert space of an $m$-mode bosonic quantum system  is the space of square-integrable functions on $\mathbb R^{2m}$. We denote the annihilation and creation operators of the $j$-th mode by $\bfa_j = (\bfx_j+i\bfp_j)/\sqrt 2$ and $\bfa_j^\dagger = (\bfx_j-i\bfp_j)/\sqrt 2$, where $ \bfx_j$ and $\bfp_j$ are the phase-space quadrature operators, similar to the position and momentum operators of the quantum harmonic oscillator, that satisfy the commutation relations\footnote{We assume that $\hbar=1$.}
$$[\bfx_j, \bfp_k] = i\delta_{jk}.$$
We denote by 
$$\mathbf{ R}=(\bfx_1, \bfp_1, \dots, \bfx_m, \bfp_m)^\top,$$
the column-vector consisting of quadrature operators. Then, the above commutation relations can be summarized as 
\begin{align}\label{eq:CCR}
[\bfR, \bfR^\top] = i\Omega_m,
\end{align}
where $[\bfR, \bfR^\top]$ is understood as coordinate-wise commutataion and $\Omega_m= \Omega_1\oplus\cdots \oplus \Omega_1$ with
$$\Omega_1 = \begin{pmatrix}
0 & 1 \\
-1 & 0
\end{pmatrix}.
$$
For $\mathbf r =(r_1, \dots, r_{2m})^\top\in \mathbb R^{2m}$, the $m$-mode \emph{displacement operator} (also called the \emph{Weyl operaator}) $D_{\mathbf r}$ is defined by
\begin{align}\label{eq:disp-op}
	\begin{split}
D_{\bfr} &= e^{-i\bfr^\top \Omega_m \bfR}
= e^{i\sum_{j=1}^m (r_{2j} \bfx_{j}-r_{2j-1}\bfp_j )}\\
& = \bigotimes_{j=1}^m e^{(r_{2j} \bfx_{j}-r_{2j-1}\bfp_j)}
= \bigotimes_{j=1}^m D_{(r_{2j-1}, r_{2j})},
	\end{split}
\end{align}
which is the tensor product of the displacement operators for each mode, and hence is local. We note that $D_{\bfr}^{\dagger} = D_{-\bfr}$. Moreover, as a consequence of the Baker-Campbell-Hausdorff (BCH) formula\footnote{If $[X, [X, Y]]=[Y, [X, Y]]=0$, then $e^{X+Y} = e^Xe^Y e^{-\frac 12[X, Y]}$.} we have
$$D_{\bfr}D_{\bfs} = e^{-i \frac 12\bfr^\top\Omega_m \bfs} D_{\bfr+\bfs}.$$
This, in particular, means that $D_{\bfr}D_{\bfr}^\dagger=D_{\bfr}D_{-\bfr} = D_{\mathbf{0}}= I$, i.e., the displacement operator is unitary. 

A crucial property of $D_{\bfr}$ is that\footnote{For the proof note that if $[X, [X, Y]] = 0$, then $e^{X}Ye^{-X}= Y+[X, Y]$. Also, see~\cite[Equation (3.16)]{Serafini17}.} 
\begin{align}\label{eq:disp-equation}
D_{\bfr} \bfR D_{\bfr}^\dagger= \bfR-\bfr.
\end{align}
This means that $D_{\bfr}\bfx_j D_{\bfr}^\dagger = \bfx_j - r_{2j-1}$ and $D_{\bfr} \bfp_j D_{\bfr}^\dagger = \bfp_j - r_{2j}$, i.e., $D_{\bfr}$ displaces quadrature operators under conjugation, thus the name.

For an arbitrary $m$-mode density operator $\rho = \rho_{A_1, \dots, A_m}$, the vector of \emph{first-order moments} in $ \mathbb R^{2m}$ is defined as 
\begin{align}\label{eq:def-r-rho}
\bfd(\rho) = \tr(\rho \bfR),
\end{align}
which contains the canonical mean values $\bfd(\rho)_{2j-1} = \tr(\rho \bfx_j)$ and $\bfd(\rho)_{2j} = \tr(\rho\bfp_j)$. The \emph{covariance matrix}, containing the \emph{second-order moments}, is defined by
\begin{equation}\label{eq:CovMat-def}
\bfgamma(\rho) = \tr[\rho\{(\bfR- \bfd(\rho)), (\bfR- \bfd(\rho))^\top\}],
\end{equation}
where $\{\cdot, \cdot\}$ denotes anti-commutation, and as before $\{(\bfR- \bfd(\rho)), (\bfR- \bfd(\rho))^\top\}$ is understood as coordinate-wise anti-commutation. Thus, $\bfgamma(\rho)$ is a $(2m)\times (2m)$ matrix, that by definition is real and symmetric. Furthermore, it can be shown (see~\cite[Equation (3.77)]{Serafini17}) that as a consequence of the canonical commutation relations~\eqref{eq:CCR} we have
\begin{align}\label{eq:uncertainty}
\bfgamma(\rho)+i\Omega_m\geq 0.
\end{align}
This, in particular, means that $\bfgamma(\rho)$ is positive definite.\footnote{Taking the transpose of~\eqref{eq:uncertainty} we find that  $\bfgamma(\rho)-i\Omega_m\geq 0$. Summing this with the original equation gives $\bfgamma(\rho)\geq 0$. To verify that $\bfgamma(\rho)$ does not have a zero eigenvalue, suppose that $\bfgamma(\rho)\bfr=0$ and for $\bfs=\bfr - \epsilon i \Omega_m \bfr$ write down the condition $\bfs^\dagger \bfgamma(\rho)\bfs \geq -i\bfs^\dagger \Omega_m \bfs$ to conclude that $\bfr=0$.} 

Let $\bfgamma_j(\rho)$ be the $j$-th $2\times 2$ block on the diagonal of $\bfgamma(\rho)$. Then, by definition, the covariance matrix of $\rho_{A_j}$, the marginal state on the $j$-th mode, equals $\bfgamma(\rho_{A_j})=\bfgamma_j(\rho)$. Similarly, $\bfd(\rho_{A_j})$, the first-order moments of the marginal state, equals the $j$-th pair of components of $\bfd(\rho)$.

Let us examine the effect of the displacement operator on the first and second moments of a density operator. Let $\rho'=D_{\bfr}^\dagger\rho D_{\bfr}$. Then, by~\eqref{eq:disp-equation} we have
\begin{align}\label{eq:disp-shift-FM}
\bfd(\rho')  = \tr(\rho'\bfR)=\tr(\rho D_{\bfr}\bfR D_{\bfr}^\dagger)= \bfd(\rho)- \bfr.
\end{align}
This implies that  $D_{\bfr}(\bfR-\bfd(\rho')) D_{\bfr}^\dagger = \bfR - \bfd(\rho)$ and therefore by \eqref{eq:CovMat-def} we get $\bfgamma(\rho')=\bfgamma(\rho)$. Thus, the application of the local unitary $D_{\bfr}$ shifts the first-order moment of $\rho$ but does not change the covariance matrix.

The \emph{characteristic function} of a density operator $\rho$ is defined by 
\begin{align*}
\chi(\bfr) = \tr(\rho D_{\bfr}). 
\end{align*}
Characteristic function fully determines a density operator via 
$$\rho = \frac{1}{(2\pi)^m} \int_{\mathbb R^{2m}} \!\!\chi(\bfr) D_{-\bfr}\, \dd^{2m} \bfr.$$
The \emph{Wigner function} function~\cite{Wigner} is then defined as the Fourier transform of the characteristic function
$$W(\bfs) = \frac{1}{(2\pi)^{2m}}\int_{\mathbb R^{2m}}\!\! \chi(\bfr) e^{i\bfr^\top \Omega_m \bfs}\dd^{2m} \bfr.$$
Applying the inverse Fourier transform we obtain
$$\tr\big(\rho e^{-i\bfr^\top \Omega_m \bfR}\big) =\chi(\bfr) = \int_{\mathbb R^{2m}}\!\! W(\bfs) e^{-i\bfr^\top \Omega_m \bfs}\dd^{2m} \bfs.$$
We note that this equation holds for any $\bfr\in \mathbb R^{2m}$. Thus, thinking of both sides as functions of $\bfr$, considering their Taylor expansion and comparing corresponding terms, we realize that the same equation holds for all $\bfr$ beyond real ones. That is, for any
complex $\bfc\in \mathbb C^{2m}$, we have 
\begin{align}\label{eq:tr-Wigner}
\tr(\rho e^{\bfc^\top  \bfR}) = \int_{\mathbb R^{2m}}\!\! W(\bfs) e^{\bfc^\top  \bfs}\dd^{2m} \bfs.
\end{align}

A quantum state is called \emph{Gaussian} if its Wigner function is Gaussian. In this case, the Wigner function is specified by the first-order moments $\bfd=\bfd(\rho)$ and the covariance matrix $\bfgamma=\bfgamma(\rho)$:
\begin{align}\label{eq:gaussian-W}
W(\bfs) = \frac{1}{\pi^m \sqrt{\det \bfgamma}} e^{-(\bfs-\bfd)^\top \bfgamma^{-1}(\bfs-\bfd)}.
\end{align}
By using the Wigner function, one can verify that the purity of Gaussian states can be calculated as $\tr(\rho^2)=(2\pi)^m\int_{\mathbb R^{2m}} W^2(\bfs) \dd^{2m}\bfs=1/\sqrt{\det \gamma}$. Therefore, pure Gaussian states have $\det \gamma=1$.
 
The Wigner function of a marginal state is the marginal distribution of the Wigner function of the joint state. Hence, marginal states of Gaussian state are also Gaussian. Note also that, due to the uncertainty principle, the Wigner function cannot be viewed as the joint probability distribution associated with local measurements.

Coherent states are well-known examples of a single-mode Gaussian states, which are displaced vacuum states. For a complex number $\alpha=(x+ip)/\sqrt{2}$ let $\ket{\alpha}=D_{\alpha}\ket{0}$ where $D_{\alpha}=\exp(\alpha \bfa^\dagger-\bar\alpha\bfa)=D_{(x, p)}$. Then,  $\bfd(\ket{\alpha})=(x,p)^\top$ and $\bfgamma(\ket \alpha)=I$, where $I$ is the $2\times2$ identity matrix\footnote{In this paper, $I_n$ denotes the $n\times n$ identity matrix but we drop the subindex for $n=2$.}. Other important examples are squeezed-vacuum states $\ket{z}=\exp(z(\bfa^{\dagger2}-\bfa^2)/2)\ket{0}$ with squeezing parameter $z$ and thermal states $\rho_{\text{th}}=\int  e^{-|\alpha|^2/\bar{n}}/(\bar{n}\pi)\ketbra{\alpha}{\alpha} d^2\alpha$ with mean-photon number $\bar{n}$. The first-order moments of these states are zero and their covariance matrices are $\bfgamma(\ket{z})=\text{diag}(e^{2z},e^{-2z})$ and $\bfgamma(\rho_{\text{th}})=(2\bar{n}+1)I$.

In general, Gaussian unitary operators that map Gaussian states to Gaussian ones can be described in terms of a displacement operator and a unitary operator associated to Hamiltonians that are quadratic in terms of quadrature operators~\cite[Chapter 5]{Serafini17}. The latter 
can be written in the form of $U_H=e^{-\frac 12 i\bfR^\top H\bfR}$ in which $H$ is a $(2m)\times (2m)$ real symmetric matrix. Letting $\rho' = U_H \rho U_H^\dagger$, it can be shown that  
\begin{align}\label{eq:S_H-r}
\bfd(\rho') = S_H \bfd(\rho),
\end{align}
and 
\begin{align}\label{eq:S_H-gamma}
\bfgamma(\rho') = S_H \bfgamma(\rho)S_H^\top,
\end{align}
where $S_H = e^{\Omega_m H}$ is a \emph{symplectic matrix} (satisfying $S_H\Omega_m S_H^\top = \Omega_m$). 
Examples of single-mode unitary operators are the phase rotation with $H_\text{Ro}=\theta(I-Y)$ and the symplectic matrix $S_{\text{Ro}}=\cos(\theta) I+i\sin(\theta) Y$, and squeezing  with $H_\text{Sq}=zX$ and the symplectic matrix $S_{\text{Sq}}=\text{diag}(e^z,e^{-z})$, where here $Y$ and $X$ are the Pauli matrices. Using these two unitary operations and displacement, any single-mode pure Gaussian state can be transformed to the vacuum state.

The following proposition provides a \emph{standard form} for Gaussian states under local unitaries.

\begin{proposition}\label{prop:standard-form}
Let $\rho=\rho_{A_1, \dots, A_m}$ be an $m$-mode Gaussian state. Then, there exists a \emph{local} Gaussian unitary $V$ such that for $\rho'=V\rho V^\dagger$ we have $\bfd(\rho') = 0$ and  $\bfgamma_j(\rho') = \bfgamma(\rho'_{A_j}) =\lambda_j I$ with $\lambda_j\geq 1$. If $m=2$ we can further assume that the covariance matrix can be in the standard form
\begin{equation}\label{eq:CM-StandForm}
	\bfgamma(\rho') = \begin{pmatrix}
\lambda_1 I & \bfnu\\
\bfnu^\top & \lambda_2 I
\end{pmatrix},
\end{equation}
with $\bfnu=\text{diag}(\nu_1,\nu_2)$. 
\end{proposition}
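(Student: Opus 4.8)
The plan is to realize $V$ as a composition of a single \emph{block-diagonal} symplectic Gaussian unitary, which normalizes the diagonal blocks of the covariance matrix, followed by a local displacement, which kills the first moment. Recall from the discussion around \eqref{eq:S_H-r}--\eqref{eq:S_H-gamma} that a Gaussian unitary $U_H$ transforms moments by $\bfd\mapsto S_H\bfd$ and $\bfgamma\mapsto S_H\bfgamma S_H^\top$ with $S_H=e^{\Omega_m H}$ symplectic, and that when $H=H_1\oplus\cdots\oplus H_m$ is block-diagonal (one $2\times 2$ block per mode) the operator $U_H$ factorizes as a tensor product over the modes and is therefore local; conversely every block-diagonal symplectic matrix (in particular every product of one-mode rotations and squeezings) is implemented by such a local $U_H$ (see~\cite[Chapter 5]{Serafini17}). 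Thus the problem reduces to choosing a block-diagonal symplectic $S=S_1\oplus\cdots\oplus S_m$ and a displacement vector.

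For the first step, the marginal covariance matrix $\bfgamma_j(\rho)=\bfgamma(\rho_{A_j})$ is a $2\times 2$ positive-definite matrix, so by the one-mode symplectic (Williamson) normal form~\cite[Eq.~(3.25)]{Serafini17} there is $S_j\in Sp(2,\mathbb R)$ with $S_j\,\bfgamma_j(\rho)\,S_j^\top=\lambda_j I$. Applying $U_H$ with $S_H=S_1\oplus\cdots\oplus S_m$ brings every diagonal block of the covariance matrix to the form $\lambda_j I$. The inequality $\lambda_j\geq 1$ is then immediate from the uncertainty relation~\eqref{eq:uncertainty}: restricting $\bfgamma+i\Omega_m\geq 0$ to the two coordinates of mode $j$ gives $\lambda_j I+i\Omega_1\geq 0$, and since $i\Omega_1$ has eigenvalues $\pm 1$ this forces $\lambda_j\geq 1$. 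Finally, applying the displacement $D_{\bfd'}$ with $\bfd'$ the (transformed) first moment is local and, by~\eqref{eq:disp-shift-FM} and the line below it, shifts the first moment to zero while leaving the covariance matrix unchanged. Composing these two local Gaussian unitaries gives the claimed $V$ for general $m$.

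For $m=2$ we exploit the remaining freedom. After the previous step the diagonal blocks are $\lambda_1 I$ and $\lambda_2 I$ and the off-diagonal block is some real $2\times 2$ matrix $C$. A block-diagonal symplectic $S_1\oplus S_2$ preserves $\lambda_j I$ precisely when $S_j\,\lambda_j I\,S_j^\top=\lambda_j I$, i.e. when $S_j$ is orthogonal; since a $2\times 2$ symplectic matrix has determinant one, this means $S_j=R(\theta_j)$ is a rotation. Such a transformation sends $C\mapsto R(\theta_1)\,C\,R(\theta_2)^\top$. Using the real singular value decomposition in its rotation form---every real $2\times 2$ matrix factors as $R(\alpha)\,\mathrm{diag}(\nu_1,\nu_2)\,R(\beta)$ with $R(\alpha),R(\beta)$ proper rotations and $\nu_1,\nu_2$ real---one chooses $\theta_1=-\alpha$ and $\theta_2=\beta$, which turns $C$ into $\mathrm{diag}(\nu_1,\nu_2)=\bfnu$ and yields the standard form~\eqref{eq:CM-StandForm}.

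The only genuinely delicate point is this last step: the orthogonal matrices diagonalizing $C$ must be taken to be genuine rotations rather than reflections, since in two dimensions only the former are symplectic. This is exactly why one uses the sign-absorbed version of the SVD and why the entries $\nu_1,\nu_2$ cannot in general be assumed nonnegative. All the remaining ingredients---the moment transformation rules, the one-mode normal form, and the uncertainty bound---are standard and enter only through the citations above.
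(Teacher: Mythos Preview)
Your proof is correct and follows essentially the same approach as the paper: both use a block-diagonal symplectic to bring each local block to $\lambda_j I$ and a local displacement to zero the first moment, and for $m=2$ both use additional local phase rotations to diagonalize the off-diagonal block. Your version is slightly more explicit on two points the paper glosses over---you invoke the rotation-form SVD to justify the diagonalization of the off-diagonal block and you flag why reflections are not allowed---but the structure and ingredients are the same (the paper simply swaps the order of the displacement and the symplectic step, which is immaterial).
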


\begin{proof}
	First, by applying an appropriate displacement operator, that is a local unitary, we can shift the first-order moments to zero without changing the covariance matrix. Next, to bring the state into the desired form, we use a local unitary operator $U_H=e^{-\frac 12 i\bfR^\top H\bfR}$, where $H$ is block-diagonal with $2\times 2$ blocks on the diagonal (one block for each mode). This means that block-diagonal symplectic matrices correspond to local unitaries. 

Now, suppose that $S_H = \text{diag}(S_1, \dots, S_m)$ is a block-diagonal symplectic matrix with $S_j$'s being $2\times 2$ symplectic matrices to be determined. Also, let $\bfgamma_j(\rho) = \bfgamma(\rho_{A_j})$ be the $j$-th $2\times 2$ block on the diagonal of the covariance matrix. By~\eqref{eq:S_H-gamma} we know that the application of the local unitary $U_H$ would change $\bfgamma_j(\rho)$ to $S_j \bfgamma_j(\rho) S_j^\top$. By choosing each local unitary to be a phase rotation that diagonalizes $\bfgamma_j(\rho)$, followed by a squeezing operator that makes the diagonal elements equal, we get $\bfgamma'_j=S_{\text{Sq}j}S_{\text{Ro}j}\bfgamma_j S_{\text{Ro}j}^\top S_{\text{Sq}j}^\top =\lambda_j I$ with $\lambda_j\geq 1$. Note that $\lambda_j=1$ means that the marginal state $\rho'_{A_j}$ is the vacuum state, which is pure and hence cannot be correlated with the other modes. Also,
$\lambda_j>1$ implies a thermal marginal state with mean-photon number $\bar{n}=(\lambda_j-1)/2$.
Putting these together we find that there is a local Gaussian unitary operator $V= U_H D_\bfr$, consisting of local displacements, phase rotations and squeezing, that brings the covariance matrix to the desired form with marginal thermal states.

For $m=2$, the above procedure gives a covariance matrix of the form 
$$\begin{pmatrix}
\lambda_1 I & \bfnu'\\
\bfnu'^\top & \lambda_2 I
\end{pmatrix}.$$
However, by including further local phase-rotation operations with the block-diagonal symplectic matrix $S_{H'}=\text{diag}(S'_{\text{Ro}1},S'_{\text{Ro}2})$ into $V$, we can get 
$$\begin{pmatrix}
	\lambda_1 I & S'_{\text{Ro}1} \bfnu' {S'_{\text{Ro}2}}\!^\top\\
	S'_{\text{Ro}2}\bfnu'^\top {S'_{\text{Ro}1}}\!^\top & \lambda_2 I
\end{pmatrix},$$
where $S'_{\text{Ro}1} \bfnu' {S'_{\text{Ro}2}}\!^\top=\bfnu$ is diagonal. Thus, we obtain the standard form of the covariance matrix for bipartite Gaussian states.
\end{proof}

Another useful phase-space quasiprobability distribution is the Glauber-Sudarshan $P$-function~\cite{Glauber1963,Sudarshan1963} that, in terms of the characteristic function, is given by
\begin{equation}\label{eq:GSP-function}
	P(\bfs) = \frac{1}{(2\pi)^{2m}}\int_{\mathbb R^{2m}}\!\!\chi(\bfr) e^{\bfr^\top\bfr/4} e^{i\bfr^\top \Omega_m \bfs} \,\dd^{2m}\bfr.
\end{equation}
Using this distribution, a density operator of an $m$-mode system can be expressed in terms of $m$-mode coherent states
\begin{equation}
	\rho=\int_{\mathbb C^{m}}\!\!P(\bm\alpha) \ketbra{\bm\alpha}{\bm\alpha}\, \dd^{2m}\bm\alpha,
\end{equation}
where $\bm\alpha=(\alpha_1,\alpha_2,\dots,\alpha_m)^\top\in \mathbb C^{m}$ and by using $\alpha_j=(s_{2j-1}+is_{2j})/\sqrt{2}$ we have $P(\bm\alpha)=2^m P(\bfs)$. For most quantum states the Glauber-Sudarshan $P$-function either takes negative values or is a highly-singular function, existing as a generalized distribution~\cite{Cahill1965}; these states are known as \emph{nonclassical} states~\cite{Mandel1986}. Quantum states whose $P(\bm\alpha)$ is a probability density distribution are known as \emph{classical} states~\cite{Titulaer1965}. For Gaussian states, if 
\begin{equation}
	\tilde\bfgamma(\rho)=\bfgamma(\rho)- I_{2m}\geq0,
\end{equation} 
where $I_{2m}$ is the $2m\times 2m$ identity matrix, then the Fourier transform~\eqref{eq:GSP-function} exists and the Glauber-Sudarshan $P$-function is a Gaussian distribution,
\begin{align}\label{eq:gaussian-P}
	P(\bfs) = \frac{1}{\pi^m \sqrt{\det \tilde\bfgamma}} e^{-(\bfs-\bfd)^\top \tilde\bfgamma^{-1}(\bfs-\bfd)}.
\end{align}
In this case, the state is classical and separable.

\section{An Orthonormal Basis for Local Operators}\label{sec:orth-basis}

In this section, we derive an orthonormal basis for the space of (local) operators with respect to a Gaussian state. Let $\rho$ be an arbitrary quantum state which for the sake of simplicity, is assumed to be full-rank. Then, for any pair of operators $X, X'$ we define
\begin{align}\label{eq:def-inner-prod}
\langle X, X'\rangle_\rho:= \tr(\rho X^\dagger X').
\end{align}
$\langle \cdot, \cdot \rangle_\rho$ satisfies all the properties of an inner product: it is linear in the second argument and anti-linear in the first argument; also, $\langle X, X\rangle_{\rho}\geq 0$ and equality holds iff\footnote{Note that we assume that $\rho$ is full-rank} $X=0$. Note that the objective function in the maximal correlation~\eqref{eq:def-max-cor} can be written in terms of this inner product:
$\big|\langle X_A, Y_B \rangle_{\rho_{AB}}\big|$.
Thus, to compute the maximization in~\eqref{eq:def-max-cor} it would be helpful to compute an orthonormal basis for the space of local operators $X_A, Y_B$ with respect to the inner products associated to
the marginal states $\rho_A=\tr_B(\rho_{AB})$ and $\rho_B=\tr_A(\rho_{AB})$, respectively.

Let $\rho_{AB}$ be a two-mode Gaussian state. It is clear from the definition that local unitaries do not change the maximal correlation. Therefore, by using Proposition~\ref{prop:standard-form} and without loss of generality, we assume that the covariance matrix of $\rho_{AB}$ is in the standard form~\eqref{eq:CM-StandForm} with $\bfd(\rho_{AB})=0$.

In the following, we consider a single-mode thermal state $\rhoth$ with $\bfd(\rhoth)=0$ and $\bfgamma(\rhoth)=\lambda I$ with $\lambda> 1$. This can be either of the marginal states $\rho_A$ or $\rho_B$ of the Gaussian state $\rho_{AB}$.

For any $\bfr \in \mathbb R^{2}$, let 
\begin{align}\label{eq:def-C-r}
C_\bfr =\exp\!\left(\bfr^\top \Upsilon \bfR -\frac \lambda 4 \bfr^\top \Upsilon \Upsilon^\top \bfr\right)\!,
\end{align}
where $\Upsilon$ is a $2\times 2$ matrix to be determined. Using the BCH formula, we have
$$C_{\bfr}^\dagger C_{\bfs} = \eta\, e^{(\bfr^\top\bar \Upsilon+\bfs^\top \Upsilon)\bfR},$$
where $\bar \Upsilon$ is the entry-wise complex conjugate of $\Upsilon$, and
$$\eta=\exp\!\bigg(\!\!\! -\frac{\lambda}{4}\big(\bfr^\top \bar \Upsilon \bar \Upsilon^\top \bfr + \bfs^\top  \Upsilon  \Upsilon^\top \bfs\big) + \frac i 2 \bfr^\top \bar \Upsilon \Omega_1 \Upsilon^\top \bfs\bigg).$$
Then, by using~\eqref{eq:tr-Wigner} and~\eqref{eq:gaussian-W}, we have
\begin{align}\label{eq:tr-rho-CdC}
\tr\big(\rhoth C_{\bfr}^\dagger C_{\bfs}\big) &= \eta \int_{\mathbb R^2}\! W(\bfu)\,  e^{(\bfr^\top\bar \Upsilon+\bfs^\top \Upsilon)\bfu }\,\dd^2 \bfu\nonumber\\
& =\exp\!\left({\frac \lambda 2\bfr^\top \bar \Upsilon \Upsilon^\top \bfs + \frac i 2 \bfr^\top \bar \Upsilon \Omega_1 \Upsilon^\top \bfs}\right)\!,
\end{align}
where $W(\bfu)=\exp(-\bfu^\top \bfu/\lambda)/(\pi\lambda)$ is the Wigner function of the thermal state.
Now let 
\begin{align}\label{eq:def-M}
\Upsilon= \frac{1}{\sqrt 2}\begin{pmatrix}
1 & -i\\
1 & i
\end{pmatrix},
\end{align}
which is unitary and gives
$$\bar \Upsilon \Omega_1 \Upsilon^\top = \begin{pmatrix}
-i & 0\\
0 & i
\end{pmatrix}. $$
This choice of $\Upsilon $ simplifies \eqref{eq:tr-rho-CdC} as
\begin{align} \label{eq:sigma-CC}
\tr\big(\rhoth C_{\bfr}^\dagger C_{\bfs}\big) 
&=\exp\!\bigg(\!\frac{(\lambda+1)r_1 s_1}{2}+\frac{(\lambda-1)r_2 s_2}{2}\bigg)\\
&= \sum_{k, \ell=0}^\infty \frac{1}{k! \ell!} \bigg(\!\frac{(\lambda+1)r_1 s_1}{2}\!\bigg)^{\! k}\!\bigg(\!\frac{(\lambda-1)r_2 s_2}{2}\!\bigg)^{\!\ell}\!.\nonumber
\end{align}

On the other hand, we define the operators $H_{k, \ell}$ by expanding $C_\bfr$ as a function of $\bfr=(r_1, r_2)^\top$:
\begin{align}\label{eq:def-H-k-ell}
C_\bfr = \sum_{k, \ell=0}^\infty     \frac{1}{\sqrt{k!\ell!}} \zeta_0^k(\lambda)\zeta_1^\ell(\lambda) r_1^k r_2^{\ell}  H_{k, \ell},
\end{align}
with 
\begin{align}\label{eq:def-zeta}
\zeta_0(\lambda) = \sqrt{ \frac{(\lambda+1) }{2}}, \quad \zeta_1(\lambda) = \sqrt{ \frac{(\lambda-1) }{2}}.
\end{align}
We note that $H_{k, \ell}$ is a polynomial of the 
quadrature operators $\bfx$ and $\bfp$ of degree $k+\ell$. For instance, $H_{0,0}=I$ and 
\begin{align}\label{eq:H-t=1}
H_{1, 0} = \frac{1}{\sqrt{\lambda+1}}(\bfx-i\bfp), ~ H_{0, 1} = \frac{1}{\sqrt{\lambda-1}}(\bfx+i\bfp).
\end{align}
By using the expansion~\eqref{eq:def-H-k-ell}, we can then write
\begin{align*}
\tr\big(\rhoth &C_{\bfr}^\dagger C_{\bfs}\big) \\
&= \sum_{ \mathclap{k, \ell, k', \ell'}}  \frac{\zeta_0^{k+k'}(\lambda)\zeta_1^{\ell+\ell'}(\lambda)}{\sqrt{k!\ell!k'!\ell'!}} r_1^k r_2^{\ell} s_1^{k'}s_2^{\ell'} \tr\big(\rhoth H^\dagger_{k, \ell} H_{k', \ell'}\big).
\end{align*}
Comparing this equation with~\eqref{eq:sigma-CC} we find that 
\begin{align}\label{eq:orth-rel-H}
\big\langle H_{k, \ell}, H_{k', \ell'}\big\rangle_{\rhoth}\! = \tr\big(\rhoth H_{k, \ell}^\dagger H_{k', \ell'}\big) = \delta_{k, k'}\delta_{\ell, \ell'}.
\end{align}

\begin{theorem}\label{thm:orth-basis}
Let $\rhoth$ be a single-mode thermal state with $\bfd(\rhoth)=0$ and $\bfgamma(\rhoth)= \lambda I$ with $\lambda> 1$. Define operators $H_{k, \ell}$ via~\eqref{eq:def-H-k-ell} where $C_{\bfr}$ is given in~\eqref{eq:def-C-r}. Then,  
\begin{align}\label{eq:basis-H-k-ell}
\Gamma=\{H_{k, \ell}:\, k, \ell\geq 0\},
\end{align} 
is an orthonormal basis for the space of operators with respect to the inner product $\langle \cdot, \cdot\rangle_{\rhoth}$.
\end{theorem}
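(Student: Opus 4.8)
The orthonormality of $\Gamma$ has already been established in~\eqref{eq:orth-rel-H}, so the entire content of the theorem is \emph{completeness}: the closed linear span of $\{H_{k,\ell}\}$ equals the Hilbert space $\mathcal H$ obtained by completing the operators under $\langle\cdot,\cdot\rangle_\rhoth$. Since $\rhoth$ is full rank, I would identify $\mathcal H$ with the Hilbert--Schmidt operators via $X\mapsto X\rhoth^{1/2}$; each $H_{k,\ell}$ is then a genuine unit vector of $\mathcal H$ by~\eqref{eq:orth-rel-H}, even though as an operator it is an unbounded polynomial in $\bfx,\bfp$. The plan is to show that the closed span of $\{H_{k,\ell}\}$ contains every displacement operator $D_\bfs$, $\bfs\in\mathbb R^2$, and then to invoke the completeness of displacement operators, i.e.\ that $\tr(D_{-\bfs}G)=0$ for all $\bfs$ forces the trace-class operator $G$ to vanish (this is the inversion formula recalled in Section~\ref{sec:GQS}).

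\emph{Step 1: complexify the generating operator.} Because the $H_{k,\ell}$ are orthonormal, the partial sums of the series~\eqref{eq:def-H-k-ell}, evaluated at an arbitrary complex argument $\bfc=(c_1,c_2)\in\mathbb C^2$, have squared $\mathcal H$-norm converging to $\exp\!\big(\zeta_0^2(\lambda)|c_1|^2+\zeta_1^2(\lambda)|c_2|^2\big)<\infty$. Hence the series converges in $\mathcal H$ and defines a vector $C_\bfc$ for every complex $\bfc$; for real $\bfc$ this is the $\mathcal H$-vector of the operator~\eqref{eq:def-C-r}, consistently with the $\bfc=\bfr$ specialization of the computation leading to~\eqref{eq:sigma-CC}. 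Consequently, if $X\in\mathcal H$ is orthogonal to all $H_{k,\ell}$, then taking the (continuous) inner product against this convergent series term by term gives $\langle C_\bfc,X\rangle_\rhoth=0$ for every $\bfc\in\mathbb C^2$.

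\emph{Step 2: recover displacements and conclude.} A short BCH manipulation, using $\Upsilon\bfR=(\bfa^\dagger,\bfa)^\top$ and $\Upsilon\Upsilon^\top=\left(\begin{smallmatrix}0&1\\1&0\end{smallmatrix}\right)$, rewrites~\eqref{eq:def-C-r} as $C_\bfr=\exp\!\big(r_1\bfa^\dagger+r_2\bfa-\tfrac{\lambda}{2}r_1r_2\big)$; specializing to $\bfc=(\alpha,-\bar\alpha)$ with $\alpha=(x+ip)/\sqrt2$ gives $C_{(\alpha,-\bar\alpha)}=e^{\lambda|\alpha|^2/2}\,D_{(x,p)}$ at the operator level. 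The matching statement in $\mathcal H$---that the convergent series of Step~1 has limit $e^{\lambda|\alpha|^2/2}D_{(x,p)}\rhoth^{1/2}$---follows by testing the partial sums on the dense domain of finite linear combinations of Fock states, which is left invariant by $\rhoth^{1/2}$ and on which the exponential series converges strongly, so the strong limit agrees with the Hilbert--Schmidt limit. Hence $\langle D_\bfs,X\rangle_\rhoth=0$ for all $\bfs\in\mathbb R^2$. Reading this as $\tr\!\big(D_{-\bfs}\,(X\rhoth^{1/2})\rhoth^{1/2}\big)$---the characteristic function, at $-\bfs$, of the trace-class operator $(X\rhoth^{1/2})\rhoth^{1/2}$---the inversion formula forces this operator to be $0$, and since $\rhoth^{1/2}$ has dense range we conclude $X=0$. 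With orthonormality, this proves that $\Gamma$ is an orthonormal basis.

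I expect the genuine work to be the convergence bookkeeping in Steps~1--2: moving rigorously between the formal power-series manipulations (expanding $\exp$ into the $H_{k,\ell}$ and re-summing) and honest convergence in $\mathcal H$, given that the $H_{k,\ell}$ and displacement operators are unbounded; everything else is either already proved, elementary linear algebra, or a direct appeal to the quoted inversion formula. A more elementary route to completeness replaces the displacement step by a dimension count: for each $n$ the $\binom{n+2}{2}$ operators $\{H_{k,\ell}:k+\ell\le n\}$ are linearly independent (by orthonormality) and have degree $\le n$ in $\bfx,\bfp$, hence form a basis of the $\binom{n+2}{2}$-dimensional space of such polynomials, so completeness reduces to the density of polynomials in $\bfx,\bfp$ in $\mathcal H$---which one still obtains from the displacement operators as above.
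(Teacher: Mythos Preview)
Your proposal is correct. Interestingly, your closing ``more elementary route'' is precisely the paper's proof: it observes that $\{H_{k,\ell}:k+\ell\le t\}$ are linearly independent (by orthonormality) and lie in the space $\mathcal V_t$ of polynomials in $\bfx,\bfp$ of degree at most $t$, so by a dimension count they span $\mathcal V_t$; taking the union over $t$ gives all polynomials, and one then cites that polynomials (equivalently, displacement operators) are dense. Your main route via complexification of $C_\bfr$ and the explicit identification $C_{(\alpha,-\bar\alpha)}=e^{\lambda|\alpha|^2/2}D_{(x,p)}$ is a genuinely different and self-contained argument: it bypasses the dimension count entirely and produces the displacement operators directly as $\mathcal H$-limits of finite combinations of $H_{k,\ell}$, at the price of the convergence bookkeeping you flag in Steps~1--2. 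The paper's version is shorter and avoids that analysis, but your version has the merit of making the link between the generating family $C_\bfr$ and the displacement operators fully explicit, which is conceptually pleasant and could be useful if one later wanted quantitative approximation statements rather than mere density.
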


\begin{proof}
We have already shown in~\eqref{eq:orth-rel-H} that $\{H_{k, \ell}:\, k, \ell\geq 0\}$ is an orthonormal set. It remains to show that $\{H_{k, \ell}:\, k, \ell\geq 0\}$ span the whole space of operators. 

Let $\mathcal V_t$ be the space of of polynomials of operators $\bfx, \bfp$ of degree \emph{at most} $t$. In other words,
$$\mathcal V_t = \text{span}\{I, \bfx, \bfp, \bfx^2, \bfx\bfp, \bfp^2, \dots, \bfx\bfp^{t-1},\bfp^t\}.$$
We note that, as mentioned above, $H_{k, \ell}$ is a polynomial of operators $\bfx, \bfp$ of degree $k+\ell$. Thus,  
$$\text{span}\{H_{k, \ell}:\, k+\ell\leq t\}\subseteq \mathcal V_t.$$ 
On the other hand, by the orthogonality relations already established, we know that $\{H_{k, \ell}:\, k+\ell\leq t\}$ is an independent set. Thus, computing the dimensions of both sides in the above inclusion, we find that $\text{span}\{H_{k, \ell}:\, k+\ell\leq t\}= \mathcal V_t$. Thus, taking union over $t\geq 1$, we find that $\text{span}(\Gamma)$ is equal to $\mathcal V_{\infty}=\cup_t \mathcal V_t$, i.e., the space of all polynomials of the quadrature operators. On the other hand, we know that the later set spans the whole space of operators.\footnote{This is essentially the content of the Stone-von Neumann theorem~\cite[Chapter 14]{Hall13}.} In fact, $\mathcal V_\infty$ contains the displacement operators, and any bounded operator can be expressed in terms of displacement operators~\cite{Cahill1969-I}.
\end{proof}

\section{Maximal Correlation for Bipartite Gaussian States}\label{sec:bipart-Gaussian}

As discussed, the maximal correlation is invariant under local unitary transformations. Hence, in order to compute the maximal correlation for bipartite Gaussian states, we can restrict to Gaussian states in the standard form~\eqref{eq:CM-StandForm} through Proposition~\ref{prop:standard-form}. Let $\rho_{AB}$ be a bipartite Gaussian state that is in the standard form with first moment $\bfd(\rho_{AB})=0$ and the covariance matrix
\begin{align}\label{eq:cov-matrix-mc}
\bfgamma_{AB} = \bfgamma(\rho_{AB}) = \begin{pmatrix}
\lambda_A I &  \bfnu\\
 \bfnu^\top & \lambda_B I
\end{pmatrix}, \quad  \boldsymbol{\nu}=\begin{pmatrix}
\nu_1 & 0\\
0 & \nu_2
\end{pmatrix}.
\end{align} 
Hence, the marginal states are thermal states with the covariance matrices $\bfgamma(\rho_A) =\lambda_A I$, $\quad \bfgamma(\rho_A) =\lambda_B I$ and $\bfd(\rho_A) = \bfd(\rho_B)=0$.

According to Theorem~\ref{thm:orth-basis}, we know that the sets 
$$\{H_{k, \ell}^A:\, k, \ell\geq 0\}
~\text{ and } 
~\{H_{k, \ell}^B:\, k, \ell\geq 0\},$$
are orthonormal bases with respect to the inner products $\langle \cdot, \cdot\rangle_{\rho_A}$ and $\langle \cdot, \cdot\rangle_{\rho_B}$, respectively. Here $H_{k, \ell}^A$'s and $H_{k, \ell}^B$'s are defined in terms of the corresponding quadrature operators and the parameter $\lambda_A$ and $\lambda_B$, respectively. To compute the maximal correlation $\mu(A, B)$, we use the above bases to expand the local operator $X_A$ and $Y_B$. Then, the objective function in the definition~\eqref{eq:def-max-cor}, which is equal to $|\langle X_A, Y_B\rangle_{\rho_{AB}}|$, can be expressed in terms of the inner products $\langle H_{k, \ell}^A, H_{k', \ell'}^B\rangle_{\rho_{AB}}$, which we compute in the following.

Let $C_{\bfr}^A$ and $C_{\bfr}^B$ be the operator~\eqref{eq:def-C-r} in terms of the modal quadrature operators $\bfR_A=(\bfx_A, \bfp_A)^{\top}$ and $\bfR_B=(\bfx_B, \bfp_B)^{\top}$, respectively.
As $C_{\bfr}^A$ and $C_{\bfs}^B$ are local operators and commute, we have
$$(C_\bfs^{A})^\dagger C_{\bfr}^B = \tau\, e^{\bfu^\top \widetilde \Upsilon \bfR_{AB}},$$
where 
$\tau=\exp\left({-\frac {\lambda_A}{4} \bfr^\top \bar \Upsilon \bar \Upsilon^\top \bfr-\frac {\lambda_B}{4} \bfs^\top  \Upsilon  \Upsilon^\top \bfs}\right),$
and
$$\bfu=\begin{pmatrix}
\bfr\\ \bfs
\end{pmatrix}, ~\text{ and } ~\widetilde \Upsilon = \begin{pmatrix}
\bar \Upsilon & 0\\ 
0 &  \Upsilon
\end{pmatrix}.$$
Thus, by using the Wigner function~\eqref{eq:gaussian-W}, we can compute the following inner product
\begin{align*}
\langle  C_{\bfr}^A, C_{\bfs}^B\rangle_{\rho_{AB}} &=   \tau \, \tr(\rho_{AB} e^{\bfu^\top \widetilde \Upsilon \bfR_{AB}})\\
 &= \tau \int_{\mathbb R^4} W_{\rho_{AB}}(\bfv) e^{ \bfu^\top \widetilde \Upsilon \bfv} \dd^4 \bfv\\
& = \tau\, e^{\frac 14 \bfu^\top \widetilde \Upsilon \bfgamma_{AB} \widetilde \Upsilon^\top \bfu}
 = e^{\frac 12 \bfr^\top \bar \Upsilon\bfnu  \Upsilon^\top \bfs}\\
& = \sum_{\mathclap{\substack{p_{00}, p_{01}\\ p_{10}, p_{11}}}}
\frac{\omega_{00}^{p_{00}} \omega_{01}^{p_{01}} \omega_{10}^{p_{10}} \omega_{11}^{p_{11}} }{p_{00}!p_{01}!p_{10}!p_{11}!}
r_1^{p_{00}+p_{01}} r_2^{p_{11}+p_{10}}\\ &\qquad\quad\times s_1^{p_{00}+p_{10}} s_2^{p_{01}+p_{11}}
\end{align*} 
where $\omega_{00}, \dots, \omega_{11}$ are entries of $\frac 12\bar \Upsilon\bfnu  \Upsilon^\top$ given by
$$
\begin{pmatrix}
\omega_{00} & \omega_{01}\\
\omega_{10} & \omega_{11}
\end{pmatrix} = \frac 12\bar \Upsilon\bfnu  \Upsilon^\top = \frac 14\begin{pmatrix}
\nu_1+\nu_2 & \nu_1-\nu_2\\
\nu_1-\nu_2 & \nu_1+\nu_2
\end{pmatrix}.
$$
On the other hand, using the expansion~\eqref{eq:def-H-k-ell}, we have 
\begin{align*}
\langle  C_{\bfs}^A, & C_{\bfr}^B\rangle_{\rho_{AB}}  \\
&= \sum_{\substack {k, \ell \\ k', \ell'}} \frac{\alpha_0^{k}\alpha_1^{\ell}\beta_0^{k'}\beta_1^{\ell'}}{\sqrt{k!\ell!k'!\ell'!}} r_1^{k}r_2^{\ell} s_1^{k'}s_2^{\ell'} \langle H_{k, \ell}^A, H_{k', \ell'}^B\rangle_{\rho_{AB}} ,
\end{align*}
where $\alpha_0=\zeta_0(\lambda_A)$, $\alpha_1=\zeta_1(\lambda_A)$,  $\beta_0=\zeta_0(\lambda_B)$ and $\beta_1=\zeta_1(\lambda_B)$ are given in~\eqref{eq:def-zeta}. Comparing the above equations 
yields
\begin{align}\label{eq:inner-prod-sum-zero}
\langle H_{k, \ell}^A, H_{k', \ell'}^B\rangle_{\rho_{AB}}=0, \quad \text{ if }\quad k+\ell\neq k'+\ell'.
\end{align}
Moreover, if $k+\ell = k'+\ell'$, then 
\begin{align*}
\langle  H_{k, \ell}^A,& H_{k', \ell'}^B\rangle_{\rho_{AB}} =  \frac{\sqrt{k!\ell!k'!\ell'!}}{\alpha_0^{k}\alpha_1^{\ell}\beta_0^{k'}\beta_1^{\ell'}}~~ \sum_{\mathclap{\substack{p_{00}+ p_{01}=k\\ p_{10}+p_{11}=\ell \\ p_{00}+ p_{10}=k'\\ p_{01}+p_{11}=\ell' } } } \quad
\frac{\omega_{00}^{p_{00}} \omega_{01}^{p_{01}} \omega_{10}^{p_{10}} \omega_{11}^{p_{11}} }{p_{00}!p_{01}!p_{10}!p_{11}!}.
\end{align*}

Now let $X_A, Y_B$ be arbitrary operators that satisfy the conditions $\tr(\rho_AX_A)=\tr(\rho_B Y_B)=0$ and $\tr(\rho_AX_A X_A^\dagger)=\tr(\rho_B Y_B Y^\dagger_B)=1$. Consider the expansion of these operators in the orthonormal bases $\{H_{k, \ell}^A:\, k, \ell\geq 0\}$ and $\{H_{k', \ell'}^B:\, k', \ell'\geq 0\}$ as follows:
$$X_A = \sum_{k, \ell} f_{k, \ell}   H_{k, \ell}^A, ~\text{ and }~ Y_B = \sum_{k', \ell'} g_{k, \ell}   H_{k', \ell'}^B.$$
The first condition on $X_A, Y_B$ means that $\langle I_A, X_A\rangle_{\rho_A}= \langle I_B, Y_B\rangle_{\rho_B}=0$. As $I_A=H_{0,0}^A $ and $I_B=H_{0,0}^B$, we find that $f_{0,0}=g_{0,0}=0$. The second condition on $X_A, Y_B$ can also be written as $\langle X_A, X_A\rangle_{\rho_A} =\langle Y_B, Y_B\rangle_{\rho_B}=1$. Thus, 
$$\|\bff\|^2= \|\bfg\|^2=1,$$
where $\bff{=} (f_{0,1}, f_{1, 0}, f_{1, 1}, \dots)^\top$ and $\bfg= (g_{0,1}, g_{1, 0}, g_{1, 1}, \dots)^\top$ are the vectors of coefficients excluding the first ones $f_{0,0}, g_{0,0}$ that are zero.
Hence, the inner product can be written as
\begin{equation}\label{eq:innerXY-expan}
\langle X_A, Y_B\rangle_{\rho_{AB}} = \sum_{\substack{k, \ell\\ k', \ell'}} \bar f_{k, \ell} g_{k', \ell'}  \langle H_{k, \ell}^A, H_{k', \ell'}^B\rangle_{\rho_{AB}}.
\end{equation}

Let $\widehat{\mathcal Q}$ be the matrix whose rows and column are indexed by pairs $\{(k, \ell):\, k+\ell\geq 1\}$, and whose $\big((k', \ell'), (k, \ell) \big)$-th entry is equal to   
\begin{align}\label{eq:def-matrix-Q}
\widehat{\cQ}_{(k', \ell'), (k, \ell) }=\langle H_{k, \ell}^A, H_{k', \ell'}^B\rangle_{\rho_{AB}}.
\end{align}
We note that by~\eqref{eq:inner-prod-sum-zero}, $\widehat{\cQ}$ is a block-diagonal matrix whose $t$-th block $\cQ^{(t)}$ is associated with pairs $(k, \ell)$ with $k+\ell=t$ and is of size $(t+1)\times (t+1)$:
\begin{align}\label{eq:Q-block}
\widehat\cQ= \begin{pmatrix}
\cQ^{(1)} & 0 & 0 & \cdots\\
0 & \cQ^{(2)} & 0 & \cdots\\
0 & 0 & \cQ^{(3)} & \cdots\\
\vdots & \vdots & \vdots & \ddots
\end{pmatrix}.
\end{align}
The matrix elements in the $t$-th block, for $0\leq \ell, \ell' \leq t$, are given by
\begin{align}
(\cQ^{(t)})_{\ell, \ell'} 
&\!= \langle H_{t-\ell, \ell}^A, H_{t-\ell', \ell'}^B\rangle_{\rho_{AB}}\nonumber\\
&\!= \frac{\sqrt{(t-\ell)!\ell!(t-\ell')!\ell'!}}{\alpha_0^{k}\alpha_1^{\ell}\beta_0^{k'}\beta_1^{\ell'}} \sum_{\mathclap{\substack{ p_{10}+p_{11}=\ell \\  p_{01}+p_{11}=\ell'\\ p_{00}+p_{01}+p_{10}+p_{11}=t } } } \quad
\!\frac{\omega_{00}^{p_{00}} \omega_{01}^{p_{01}} \omega_{10}^{p_{10}} \omega_{11}^{p_{11}} }{p_{00}!p_{01}!p_{10}!p_{11}!}. 
\label{eq:def-Q-t-ell}
\end{align}
In particular, 
the first block reads
\begin{align}\label{eq:cQ-1}
\cQ^{(1)} &
= \begin{pmatrix}
\frac{\omega_{00}}{\alpha_0\beta_0} & \frac{\omega_{01}}{\alpha_0\beta_1} \\
\frac{\omega_{10}}{\alpha_1\beta_0} & \frac{\omega_{11}}{\alpha_1\beta_1}
\end{pmatrix}\nonumber\\
&=\frac 12\! \begin{pmatrix}
\frac{(\nu_1+\nu_2)}{\sqrt{(\lambda_A+1)(\lambda_B+1) }} & \frac{(\nu_1-\nu_2)}{\sqrt{(\lambda_A+1)(\lambda_B-1) }}\\
\frac{(\nu_1-\nu_2)}{\sqrt{(\lambda_A-1)(\lambda_B+1) }} & \frac{(\nu_1+\nu_2)}{\sqrt{(\lambda_A-1)(\lambda_B-1) }}
\end{pmatrix}\nonumber\\
&=\frac 12 \begin{pmatrix}
\alpha_0^{-1} & 0\\
0 & \alpha_1^{-1}
\end{pmatrix}  \bar \Upsilon \bfnu \Upsilon^\top \begin{pmatrix}
\beta_0^{-1} & 0\\
0 & \beta_1^{-1}
\end{pmatrix}.
\end{align}

Now we can state the main result of this paper.

\begin{theorem}\label{thm:main-mc}
Let $\rho_{AB}$ be a Gaussian state with $\bfd(\rho_{AB})=0$ and covariance matrix~\eqref{eq:cov-matrix-mc}. Let $\{H_{k, \ell}^A:\, k, \ell\geq 0\}$ and $\{H_{k', \ell'}^B:\, k', \ell'\geq 0\}$ be orthonormal bases for the spaces of local operators of modes $A$ and $B$, respectively,  constructed 
in Theorem~\ref{thm:orth-basis}. Let $\widehat \cQ$ be the matrix consisting of inner products of operators in $\{H_{k, \ell}^A:\, k+ \ell> 0\}$ and $\{H_{k', \ell'}^B:\, k'+ \ell'> 0\}$ defined as in~\eqref{eq:def-matrix-Q} and with the block structure given in~\eqref{eq:Q-block} and~\eqref{eq:def-Q-t-ell}. Then, the maximal correlation for the Gaussian state is given by
$$\mu(A, B) = \|\cQ^{(1)}\|.$$
Equivalently, this means that in computing the maximal correlation in~\eqref{eq:def-max-cor} we may restrict to $X_A, Y_B$ that are linear in quadrature operators.
\end{theorem}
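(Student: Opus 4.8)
The plan is to convert the supremum in~\eqref{eq:def-max-cor} into an operator norm and then show that, among the blocks of the resulting matrix, the first one dominates. (We may assume $\lambda_A,\lambda_B>1$; if $\lambda_j=1$ the $j$-th marginal is the vacuum, hence pure and uncorrelated, so $\mu(A,B)=0=\|\cQ^{(1)}\|$.) Expanding $X_A=\sum_{k,\ell}f_{k,\ell}H^A_{k,\ell}$ and $Y_B=\sum_{k',\ell'}g_{k',\ell'}H^B_{k',\ell'}$ in the orthonormal bases of Theorem~\ref{thm:orth-basis}, the constraints become $f_{0,0}=g_{0,0}=0$ and $\|\bff\|=\|\bfg\|=1$, and by~\eqref{eq:innerXY-expan} the objective is $\langle\bff,\widehat\cQ^{\,\top}\bfg\rangle$ in the standard inner product on coefficient vectors. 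Since $\langle\cdot,\cdot\rangle_{\rho_{AB}}$ is a bounded bilinear form (the Cauchy--Schwarz bound already used for $\mu\le1$) and polynomials in the quadrature operators are among the allowed operators and are dense in the relevant Hilbert-space completions, the supremum over operators equals the one over unit coefficient vectors, so $\mu(A,B)=\|\widehat\cQ\|$. By~\eqref{eq:inner-prod-sum-zero}, $\widehat\cQ$ is block-diagonal with blocks $\cQ^{(t)}$ ($t\ge1$), hence $\mu(A,B)=\sup_{t\ge1}\|\cQ^{(t)}\|$; in particular $\|\cQ^{(1)}\|\le\mu(A,B)\le1$.

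The key step is to show that each $\cQ^{(t)}$ is a compression of the $t$-fold tensor power of $\cQ^{(1)}$. Fix $t\ge1$ and let $\mathcal V^A_t=\mathrm{span}\{H^A_{k,\ell}:k+\ell=t\}$, of dimension $t+1$. From~\eqref{eq:def-H-k-ell} the part of $C^A_\bfr$ homogeneous of degree $t$ in $\bfr$ is $\sum_{k+\ell=t}\tfrac{\alpha_0^k\alpha_1^\ell}{\sqrt{k!\ell!}}r_1^kr_2^\ell H^A_{k,\ell}$, and by~\eqref{eq:tr-rho-CdC} its inner products with respect to $\rho_A$ agree with those of $\tfrac1{\sqrt{t!}}(D_A\bfr)^{\otimes t}$, where $D_A=\mathrm{diag}(\alpha_0,\alpha_1)$. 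Hence $H^A_{k,\ell}$ maps to the normalized symmetric tensor with $k$ factors $e_1$ and $\ell$ factors $e_2$, giving an isometric isomorphism $\mathcal V^A_t\cong\mathrm{Sym}^t(\mathbb C^2)\subseteq(\mathbb C^2)^{\otimes t}$, and likewise for $B$ with $D_B=\mathrm{diag}(\beta_0,\beta_1)$. Using $\langle C^A_\bfr,C^B_\bfs\rangle_{\rho_{AB}}=e^{\frac12\bfr^\top\bar\Upsilon\bfnu\Upsilon^\top\bfs}$ and $\cQ^{(1)}=\tfrac12D_A^{-1}\bar\Upsilon\bfnu\Upsilon^\top D_B^{-1}$ (Section~\ref{sec:bipart-Gaussian} and~\eqref{eq:cQ-1}), the bihomogeneous degree-$(t,t)$ part of this kernel is $\tfrac1{t!}\big((D_A\bfr)^\top\cQ^{(1)}(D_B\bfs)\big)^t$, which is precisely the matrix element of $(\cQ^{(1)})^{\otimes t}$ between the symmetric tensors attached to $C^A_\bfr$ and $C^B_\bfs$. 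Therefore, in these orthonormal bases, $\cQ^{(t)}$ coincides with $P_{\mathrm{sym}}(\cQ^{(1)})^{\otimes t}P_{\mathrm{sym}}$ acting on $\mathrm{Sym}^t(\mathbb C^2)$, where $P_{\mathrm{sym}}$ is the orthogonal projection onto the symmetric subspace; consequently $\|\cQ^{(t)}\|\le\|(\cQ^{(1)})^{\otimes t}\|=\|\cQ^{(1)}\|^t$. (Alternatively, the entries~\eqref{eq:def-Q-t-ell} can be matched directly against those of the compressed tensor power, both being the same multinomial sums.)

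Combining the two steps and using $\|\cQ^{(1)}\|\le1$ with $t\ge1$ yields $\|\cQ^{(t)}\|\le\|\cQ^{(1)}\|^t\le\|\cQ^{(1)}\|$ for every $t$, hence $\mu(A,B)=\sup_{t\ge1}\|\cQ^{(t)}\|=\|\cQ^{(1)}\|$; the value is attained on the $2\times2$ block $\cQ^{(1)}$, i.e.\ by local operators that are the appropriate linear combinations of $H_{1,0}$ and $H_{0,1}$ and hence linear in the quadrature operators, which establishes the ``equivalently'' clause. I expect the main obstacle to be the second step: recognizing that the degree-$t$ sector of the operator algebra carries a symmetric-tensor-power structure so that $\cQ^{(t)}$ becomes a compression of $(\cQ^{(1)})^{\otimes t}$. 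Once this is established, the contractivity $\|\cQ^{(1)}\|\le1$, which comes for free from Cauchy--Schwarz, is exactly what forces the first block to dominate all the others.
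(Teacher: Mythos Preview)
Your proposal is correct and follows essentially the same approach as the paper: reduce to $\mu(A,B)=\sup_t\|\cQ^{(t)}\|$, identify $\cQ^{(t)}$ as a compression of $(\cQ^{(1)})^{\otimes t}$, and use $\|\cQ^{(1)}\|\le 1$ to conclude. The only difference is presentational: the paper constructs the explicit $(t{+}1)\times 2^t$ matrix $S$ with entries $s_{\ell,\bfb}=\delta_{\ell,|\bfb|}\sqrt{\ell!(t-\ell)!/t!}$ and verifies $S(\cQ^{(1)})^{\otimes t}S^\dagger=\cQ^{(t)}$ entry by entry, whereas you recognize $S^\dagger$ as the isometric embedding of $\mathrm{Sym}^t(\mathbb C^2)$ into $(\mathbb C^2)^{\otimes t}$ and obtain the same identity by matching the degree-$t$ parts of the generating kernels---your parenthetical remark about matching multinomial sums is precisely the paper's computation.
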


\begin{proof}
By using \eqref{eq:innerXY-expan} and \eqref{eq:Q-block}, the maximal correlation~\eqref{eq:def-max-cor} for a Gaussian state reads 
$$\mu(A, B) = \|\widehat\cQ\|=\max_{\|\bff\|=\|\bfg\|=1} |\bff^\dagger\widehat \cQ \bfg|,$$
where $ \|\widehat\cQ\|$ is the operator norm of $\widehat\cQ$. Given the block structure of $\widehat\cQ$ we know that $\|\widehat\cQ\| = \max_t \|\cQ^{(t)}\|$. Thus, to prove the theorem we need to show that 
$$\|\cQ^{(1)}\|\geq \|\cQ^{(t)}\|,\qquad \forall t>1.$$
To this end, we derive an equivalent representation of the matrices $\cQ^{(t)}$.

For $0\leq \ell\leq t$ and $\bfb=(b_1, \dots, b_t)\in \{0,1\}^t$ with $|\bfb|=\sum_i b_i$ define
\begin{align}\label{eq:def-S-matrix}
s_{\ell,\bfb} = \delta_{\ell, |\bfb|} \sqrt{\frac{\ell! (t-\ell)!}{t!}},
\end{align}
and let $S$ be the matrix of size $(t+1)\times 2^t$ with entries $s_{j, \bfb}$.
Observe that 
\begin{align*}
(SS^\dagger)_{\ell, \ell'} &= \sum_{\bfb} s_{\ell, \bfb} s_{\ell', \bfb}\\
& = \sum_{\bfb} \delta_{\ell, {|\bfb|}} \delta_{\ell', {|\bfb|}}  \frac{\ell! (t-\ell)!}{t!}\\
& = \delta_{\ell, \ell'}\sum_{\bfb: {|\bfb|}=\ell}  \frac{\ell! (t-\ell)!}{t!}\\
& =\delta_{\ell, \ell'}.
\end{align*}
Thus $SS^\dagger=I_{t+1}$ with $I_{t+1}$ being $(t+1)\times(t+1)$ identity matrix. Next, to simplify the notation we use 
$$\cQ=\cQ^{(1)},$$ 
and compute
\begin{align*}
(S&\cQ^{\otimes t} S^{\dagger})_{\ell, \ell'} \\
& = \sum_{\bfb, \bfb'} s_{\ell , \bfb} s_{\ell' , \bfb'} \cQ^{\otimes t}_{\bfb, \bfb'} \\
&= \sum_{\bfb, \bfb'}  \delta_{\ell, {|\bfb|}} \delta_{\ell', {|\bfb'|}} \frac{\sqrt{\ell! (t-\ell)!\ell'!(t-\ell')!} }{t!} \cQ^{\otimes t}_{\bfb, \bfb'}.
\end{align*} 
For a given $\bfb, \bfb'\in \{0,1\}^t$ define $p_{aa'}$ 
as follows:
\begin{align} \label{eq:def-j-aa'}
p_{aa'} = \{i:\, (b_i, b'_i) = (a,a')\}, \quad (a, a')\in \{0,1\}^2.
\end{align}
Hence, by using \eqref{eq:cQ-1}, we get
\begin{align*}
\cQ^{\otimes t}_{\bfb, \bfb'}& = \cQ_{00}^{p_{00}} \cQ_{01}^{p_{01}}\cQ_{10}^{p_{10}}\cQ_{11}^{p_{11}}\\
&= \frac{   \omega_{00}^{p_{00}} \omega_{01}^{p_{01}} \omega_{10}^{p_{10}} \omega_{11}^{p_{11}}    }{     \alpha_{0}^{p_{00}+p_{01}} \alpha_{1}^{p_{10}+p_{11}}   \beta_{0}^{p_{00}+p_{10}}     \beta_{1}^{p_{01}+p_{11}}    }.
\end{align*}
On the other hand, we note that if ${|\bfb|}=\ell$ and ${|\bfb'|}=\ell'$, then 
\begin{equation}
\ell=p_{10}+p_{11}, ~ \text{ and } ~ \ell'=p_{01}+p_{11}.
\end{equation}
For a fixed tuple $(p_{00}, p_{01}, p_{10}, p_{11})$ satisfying these equations, we can see that there are 
$$\binom{t}{p_{00}, p_{01}, p_{10}, p_{11}} = \frac{t!}{p_{00}! p_{01}! p_{10}! p_{11}!},$$
pairs of $(\bfb, \bfb')$ that satisfy~\eqref{eq:def-j-aa'}. Therefore, putting these together and comparing with~\eqref{eq:def-Q-t-ell}, we obtain
\begin{widetext}
\begin{align*}
(S\cQ^{\otimes t} S^{\dagger})_{\ell, \ell'} & = ~~\sum_{\mathclap{\substack{p_{10}+p_{11}=\ell\\ p_{01}+p_{11}=\ell'\\ p_{00}+p_{01}+p_{10}+p_{11}=t}}} ~~\binom{t}{p_{00}, p_{01}, p_{10}, p_{11}}  \frac{\sqrt{\ell! (t-\ell)!\ell'!(t-\ell')!} }{t!} \cQ_{11}^{p_{00}} \cQ_{12}^{p_{01}}\cQ_{21}^{p_{10}}\cQ_{22}^{p_{11}}\\
& ~~= \sum_{\mathclap{\substack{p_{10}+p_{11}=\ell\\ p_{01}+p_{11}=\ell'\\ p_{00}+p_{01}+p_{10}+p_{11}=t}}} ~~ \frac{\sqrt{\ell! (t-\ell)!\ell'!(t-\ell')!} }{p_{00}! p_{01}! p_{10}! p_{11}!} \frac{   \omega_{00}^{p_{00}} \omega_{01}^{p_{01}} \omega_{10}^{p_{10}} \omega_{11}^{p_{11}}    }{     \alpha_{0}^{p_{00}+p_{01}} \alpha_{1}^{p_{10}+p_{11}}   \beta_{0}^{p_{00}+p_{10}}     \beta_{1}^{p_{01}+p_{11}}    }\\
& = \cQ^{(t)}_{\ell, \ell'}.
\end{align*}
\end{widetext}
Therefore, $S\cQ^{\otimes t} S^{\dagger}= \cQ^{(t)}$ and
\begin{align*}
\|\cQ^{(t)}\| &=\|S\cQ^{\otimes t} S^{\dagger}\|\\
& \leq \|S\|\cdot \|S^{\dagger}\|\cdot \|\cQ^{\otimes t}\|\\
& =  \|\cQ^{\otimes t}\| = \|\cQ\|^t,
\end{align*}
where in the second line we use the fact that $SS^\dagger=I_{t+1}$ which gives $\|S\|=\|S^\dagger\|= 1$. Next, we note that $\|\cQ\|\leq 1$ because
\begin{align*}
\|\cQ\| & =\max_{\bff, \bfg\neq 0} \frac{| \bff^\dagger \cQ \bfg |}{\|\bff\|\cdot \|\bfg\|}\\
& = \max_{\bff, \bfg\neq 0} \frac{| \langle f_0H_{1,0}^A + f_1H_{0,1}^A, g_0H_{1,0}^B + g_1H_{0,1}^B\rangle_{\rho_{AB}}  |}{\|\bff\|\cdot \|\bfg\|}\\
& \leq  \max_{\bff, \bfg\neq 0} \frac{\|  f_0H_{1,0}^A + f_1H_{0,1}^A \|\cdot \|g_0H_{1,0}^B + g_1H_{0,1}^B \|}{\|\bff\|\cdot \|\bfg\|}\\
&= 1,
\end{align*}
where in the third line we use the Cauchy-Schwarz inequality and in the last line we use the fact that $\{H_{1,0}^A + H_{0,1}^A\}$ and $\{H_{1,0}^B + H_{0,1}^B\}$ are orthonormal. Thus,
$$\|\cQ^{(t)}\|  \leq \|\cQ\|^t \leq \|\cQ\|= \|\cQ^{(1)}\|,$$
and 
therefore the maximal correlation for a Gaussian state in the standard form becomes
\begin{align}\label{eq:MC-Gaussian-SF}
\mu(A, B) &=\|\cQ^{(1)}\|=\max_{\|\bff\|=\|\bfg\|=1} | \bff^\dagger \cQ^{(1)} \bfg |.
\end{align}
This implies that the optimal local operators by using~\eqref{eq:H-t=1} can be expressed as a linear combination of quadrature operators
\begin{align}
X_A&=f_0H_{1,0}^A + f_1H_{0,1}^A = \alpha_x \bfx_A + \alpha_p \bfp_A,\label{eq:X-optimal}\\
Y_B&=g_0H_{1,0}^B + g_1H_{0,1}^B = \beta_x \bfx_B + \beta_p \bfp_B,\label{eq:Y-optimal}
\end{align}
where the coefficients are determined by~\eqref{eq:MC-Gaussian-SF}. Note that, in general, the coefficients may be complex numbers, and hence the optimal operators may not correspond to local physical observables.
\end{proof}

\section{Gaussian maximal correlation}\label{sec:GMC}
 
In this section, we define another measure of correlation for bipartite Gaussian states by restricting the optimization in \eqref{eq:def-max-cor} to Gaussian observables~\cite{Holevo2019}, i.e., operators that are Hermitian and linear in terms of quadrature operators and correspond to homodyne measurements. We refer to this new correlation measure as the \emph{Gaussian maximal correlation} which is given by
\begin{equation}
\mu_{G}(A, B) = \sup_{X_A, Y_B} \tr(\rho_{AB} X_A\otimes Y_B),
\end{equation}
where $\rho_{AB}$ is a bipartite Gaussian state, and $X_A, Y_B$ are local Hermitian observables that are linear in terms of quadrature operators and satisfy $\tr(\rho_A X_A)= \tr(\rho_B Y_B)=0$ and $\tr(\rho_AX_A^2)=\tr(\rho_B Y_B^2)=1$. Note that by definition, $\mu_{G}(A, B)\leq \mu(A, B)$.

Suppose that the covariance matrix of $\rho_{AB}$ is
\begin{align*}
	\bfgamma_{AB} = \bfgamma(\rho_{AB}) = \begin{pmatrix}
		\bfgamma_A  &  \bfnu_{AB}\\
		\bfnu_{AB}^\top & \bfgamma_B
	\end{pmatrix},
\end{align*}
and for simplicity assume that $\bfd(\rho_{AB})=0$. By the restrictions on $X_A, Y_B$, there are real vectors $\bfr_A, \bfr_B$ such that $X_A=\bfr_A^\top \bfR_A$, $Y_B=\bfr_B^\top \bfR_B$ and 
\begin{align*} 
	1&=\tr\big(\rho_A X_A^2\big)\\
	&=\frac 1 2 \tr\big( \rho_A \{X_A, X_A\}\big)\\
	&=\frac 1 2 \bfr_A^\top \tr\big( \rho_A \{\bfR_A, \bfR_A\}\big)\bfr_A\\
	&=\frac 1 2 \bfr_A^\top \bfgamma_A \bfr_A,
\end{align*}
and similarly $\bfr_B^\top \bfgamma_B \bfr_B =2$. Note that $\tr(\rho_A X_A) = \tr(\rho_B Y_B)=0$ is automatically satisfied since $\bfd(\rho_{AB})=0$.
Next, similar to the above computation, it is easily verified that $ 2\tr(\sigma_{AB} X_A\otimes Y_B)=\bfr_A^\top \bfnu_{AB} \bfr_B$. Therefore, rescaling $\bfr_A, \bfr_B$ by a factor of $\sqrt{2}$, the Gaussian maximal correlation reads
\begin{align}\label{eq:GMC}
	\mu_{G}(A, B) = \max_{\bfr_A, \bfr_B}~ &~ \bfr_A^\top \bfnu_{AB} \bfr_B\\
	& ~\bfr_A^\top \bfgamma_A \bfr_A=1\nonumber\\
	& ~\bfr_B^\top \bfgamma_B \bfr_B=1.\nonumber
\end{align}
Writing the above optimization in terms of $\bfs_A= \bfgamma_A^{1/2}\bfr_A$ and $\bfs_B= \bfgamma_B^{1/2}\bfr_B$, we find that 
\begin{align}\label{eq:mu-G-norm}
\mu_G(A, B)=\|\bfgamma_A^{-1/2}\bfnu_{AB}\bfgamma_B^{-1/2}\|.
\end{align}
Therefore, for a Gaussian state in the standard form \eqref{eq:cov-matrix-mc} with $|\nu_1|\geq|\nu_2|$, which can always be accomplished by relabeling the phase-space quadratures, the Gaussian maximal correlation becomes
\begin{align}\label{eq:GMC-SF}
	\mu_{G}(A, B) = \frac{|\nu_1|}{\sqrt{\lambda_A \lambda_B}}.
\end{align}
Note that the joint probability distribution $p(x_A,x_B)$ associated with local homodyne measurements on the $x$-quadratures is Gaussian with $\text{Cov}[x_A,x_B]=\nu_1$,  $\text{Var}[x_A]=\lambda_A$, and $\text{Var}[x_B]=\lambda_B$. Therefore, comparing this equation with \eqref{eq:ClasMC-Gaus}, we observe that the Gaussian maximal correlation is in fact the classical maximal correlation of the outcome probability distribution of optimal local homodyne measurements. However, unlike the classical case, \eqref{eq:GMC-SF} shows that  $\mu_{G}(A, B)$ cannot be equal to one, as due to the uncertainty relation~\eqref{eq:uncertainty} the covariance matrix of physical states cannot have a zero eigenvalue.

Considering the characterization of Gaussian cptp maps~\cite{Serafini17}, one can verify that $\mu_{G}(A, B)$ is monotone under local Gaussian cptp maps; the point is that under such local super-operators in the Heisenberg picture, linear operators are mapped to linear operators. Therefore, correlations between phase-space quadratures cannot be increased under Gaussian operations. Gaussian maximal correlation also satisfies the tensorization property. To verify this, it suffices to use~\eqref{eq:mu-G-norm}, the fact that $\bfgamma(\rho_{AB}\otimes \rho_{A'B'}) = \bfgamma(\rho_{AB})\oplus \bfgamma(\rho_{A'B'})$ and $\|\boldsymbol{\xi}\oplus \boldsymbol{\eta}\|= \max\{\|\boldsymbol{\xi}\|, \|\boldsymbol{\eta}\|\}$. Therefore, the Gaussian maximal correlation satisfies both the monotonicity and tensorization properties, and therefore can be used to study the local state transformation problem when both resource and target states are Gaussian. For example, this result shows that $n$ copies of a Gaussian in the standard form with the Gaussian maximal correlation~\eqref{eq:GMC-SF}, cannot be locally transformed into another Gaussian state with the same marginal states but with $|\nu_1'|>|\nu_1|$.

We state yet another reformulation of the Gaussian maximal correlation. Note that 
\begin{align*}
(\bfgamma_{A}\oplus \bfgamma_{B})^{-1/2} &\bfgamma_{AB} (\bfgamma_{A}\oplus \bfgamma_{B})^{-1/2} \\
&= \begin{pmatrix}
I_A & \bfgamma_A^{-1/2}\bfnu_{AB}\bfgamma_B^{-1/2}\\
\bfgamma_B^{-1/2}\bfnu_{AB}^\top \bfgamma_A^{-1/2} & I_B
\end{pmatrix}.
\end{align*}
Then, using~\eqref{eq:mu-G-norm} we fine that $(\bfgamma_{A}\oplus \bfgamma_{B})^{-1/2} \bfgamma_{AB} (\bfgamma_{A}\oplus \bfgamma_{B})^{-1/2}\geq 1- \mu_G(A, B)$. Therefore, we can write
$\mu_{G}(A, B)=1-V(A,B)$ where $V(A,B)=\max\{q\leq1:\bfgamma_{AB}\geq q(\bfgamma_{A}\oplus \bfgamma_{B})\}$. This  formulation of the Gaussian maximal correlation is related to the measure for Gaussian resources~\cite{LamiPRA2018} and is reminiscent of the entanglement measure introduced in~\cite{GiedkeCirac02} for Gaussian states.
Indeed, the measure of~\cite{GiedkeCirac02} is defined by $V'(A, B)=\max\{q\leq1:\bfgamma_{AB}\geq q(\bfgamma'_{A}\oplus \bfgamma'_{B})\}$ where $\bfgamma'_A, \bfgamma'_B$ are covariance matrices of \emph{some} quantum states, while in our case they are the covariance matrices of the marginal states of $\rho_{AB}$. 

We finish this section by emphasizing on the fact that the Gaussian maximal correlation is monotone only if we restrict to local Gaussian operations, while the quantum maximal correlation studies in previous sections is monotone under all local operations.  

\section{Examples}\label{sec:Examples}
In this section, we further illustrate various features of the correlation measures by considering examples of Gaussian states. We consider two classes of bipartite Gaussian states that are in the standard form with symmetric marginal states and zero first-order moments.

The first class of Gaussian states is described by the covariance matrix \eqref{eq:CM-StandForm} with $\lambda_A=\lambda_B=\lambda>1$ and $\nu_1=-\nu_2=\nu\geq0$. The physicality condition~\eqref{eq:uncertainty} implies that $\nu\leq\sqrt{\lambda^2-1}$, and the state is entangled for $\lambda-1<\nu<\sqrt{\lambda^2-1}$. We refer to these states as correlated-anticorrelated (CA) Gaussian states. The maximal correlation for these states, using \eqref{eq:cQ-1} and \eqref{eq:MC-Gaussian-SF}, is given by
\begin{align}
\mu_{_\text{CA}}(A,B)&= \max_{\|\bff\|=\|\bfg\|=1}  \begin{pmatrix}\bar{f}_0 & \bar{f}_1 \end{pmatrix}\!\!
\begin{pmatrix}
	0& \frac{\nu}{\sqrt{\lambda^2-1 }}\\
	\frac{\nu}{\sqrt{\lambda^2-1 }} & 0
\end{pmatrix}\!\!
\begin{pmatrix}g_0 \\ g_1 \end{pmatrix}\nonumber\\
&=\frac{\nu}{\sqrt{\lambda^2-1}},\label{eq:MC-CA}
\end{align} 
where the maximum is attained for $f_0=f_1=g_0=g_1=1/\sqrt{2}$. Hence, optimal local operators using \eqref{eq:X-optimal} and \eqref{eq:Y-optimal} can be found as $X_A=\bfa_A^\dagger/\sqrt{\lambda+1}+\bfa_A/\sqrt{\lambda-1}$ and $Y_A=\bfa_B^\dagger/\sqrt{\lambda+1}+\bfa_B/\sqrt{\lambda-1}$, which are not hermitian and cannot be viewed as physical observables.  
Note that for $\nu=\sqrt{\lambda^2-1}$, the Gaussian state corresponds to a two-mode squeezed vacuum state,
\begin{equation*}
\ket{\psi_{AB}}=\sqrt{\frac{2}{\lambda+1}}\sum_n \bigg(\frac{\lambda-1}{\lambda+1}\bigg)^{\!\!{n}/{2}}\! \ket{n_A,n_B},
\end{equation*}
which is a \emph{pure entangled state} with $\mu_{_\text{CA}}(A,B)=1$. Here $\{\ket{n_A}\}$ and  $\{\ket{n_B}\}$ are the Fock states. Thus, following Section~\ref{sec:MC}, one can easily verify that for two-mode squeezed vacuum states the maximal correlation of $\mu=1$ can be achieved using local hermitian operators $X_A=(\bfa_A^\dagger\bfa_A-\bar{n})/\Delta n=H_{1,1}^A$ and $Y_B=(\bfa_B^\dagger\bfa_B-\bar{n})/\Delta n=H_{1,1}^B$, where $\bar{n}=(\lambda-1)/2$ and $\Delta n=\sqrt{\lambda^2-1}/2$ are the mean and the standard deviation of photon-number distribution of the marginal states. This implies that the same maximal correlation can be obtained for two-mode squeezed vacuum states by using local \emph{hermitian} operators that are \emph{quadratic} in the phase-space quadrature operators. This means that, in general, optimal local operators for the maximal correlation are not unique.

As discussed, the maximal correlation for Gaussian states can be used to study the local state transformation problem when the target state is not Gaussian. For example, by using \eqref{eq:MC-CA}, it can be seen that any arbitrary number of copies of CA Gaussian states cannot be locally transformed into the noisy Bell state~\eqref{eq:Werner state} without classical communication if $\nu<\kappa\sqrt{\lambda^2-1}$. Note, however, that if two states have the same amount of the maximal correlation, it is not clear whether or how the two states can locally be transformed into one another in general. 

We also consider a second class of Gaussian states described by the covariance matrix \eqref{eq:CM-StandForm} with $\lambda_A=\lambda_B=\lambda>1$ and $\nu_1=\nu_2=\nu\geq0$. Note that for all physical values of $\nu\leq \lambda-1$, these states are separable with a nonnegative Glauber-Sudarshan $P$-function. We refer to these states as correlated-correlated (CC) Gaussian states. Using~\eqref{eq:cQ-1} and \eqref{eq:MC-Gaussian-SF}, the maximal correlation is given by
\begin{align}
	\mu_{_\text{CC}}(A,B)&= \max_{\|\bff\|=\|\bfg\|=1}  \begin{pmatrix}\bar{f}_0 & \bar{f}_1 \end{pmatrix}\!\!
	\begin{pmatrix}
		\frac{\nu}{\lambda+1}& \\
	0 & 	\frac{\nu}{\lambda-1}
	\end{pmatrix}\!\!
	\begin{pmatrix}g_0 \\ g_1 \end{pmatrix}\nonumber\\
	&=\frac{\nu}{\lambda-1}.\label{eq:MC-CC}
\end{align} 
Here, for $g_0=f_0=0$ and  $g_1=f_1=1$, we can see that optimal local operators $X_A=\bfa_A\sqrt{2/(\lambda-1)}$ and $Y_B=\bfa_B/\sqrt{2/(\lambda-1)}$ are not hermitian again. Of particular interest is the case of $\nu=\lambda-1$ where $\mu_{_\text{CC}}(A,B)=1$. In this case, the Gaussian state can be written as
\begin{equation*}
	\rho_{AB}=\int \frac{2e^{-2|\alpha|^2/(\lambda-1)}}{(\lambda-1)\pi}  \ketbra{\alpha}{\alpha} \otimes \ketbra{\alpha}{\alpha}\dd^{2}\alpha.
\end{equation*}
Using this representation, we can see that for any local measurements $\{\Pi_{A,n}: n\}$ and  $\{\Pi_{B,m}: m\}$, 
\begin{align}
	\begin{split}
\tr\big(\rho_{AB} \Pi_{A,n}\!\otimes \Pi_{B,m}\big)\!=\!\! \int & \frac{2e^{-2|\alpha|^2/(\lambda-1)}}{(\lambda-1)\pi}  \bra{\alpha}\Pi_{A,n}\ket{\alpha} \\
&\times \bra{\alpha}\Pi_{B,m}\ket{\alpha}\dd^{2}\alpha \neq0,
	\end{split}\label{eq:CC-joint-prob}
\end{align}
for any $n$ and $m$. 
Note that here the positivity of measurement operators implies $\bra{\alpha}\Pi_{B,n}\ket{\alpha}\geq0$ and $\bra{\alpha}\Pi_{A,m}\ket{\alpha}\geq0$. Also, $\bra{\alpha}\Pi\ket{\alpha}$ is, in general, an everywhere convergent power series in terms of $\alpha$ and $\bar{\alpha}$ and cannot be identical to zero on some nontrivial region, unless $\Pi=0$~\cite{Cahill1965}. Therefore, $\bra{\alpha}\Pi_{A,n}\ket{\alpha}\bra{\alpha}\Pi_{B,m}\ket{\alpha}$ cannot be zero almost everywhere. As the joint outcome probability distribution~\eqref{eq:CC-joint-prob} is not in the form of~\eqref{eq:perfect-shar-rand}, $\mu_{_\text{CC}}(A,B)=1$ for $\nu=\lambda-1$ cannot be achieved using hermitian local operators. This example shows that the maximal correlation of continuous-variable systems must be optimized over all hermitian and non-hermitian local operators.

It is easy to verify that even for non-Gaussian bipartite states of this form
\begin{equation}
\rho_{AB}=\int P(\alpha) \ketbra{\alpha}{\alpha} \otimes \ketbra{\alpha}{\alpha} \dd^{2}\alpha,
\end{equation}
where $P(\alpha)$ is not necessarily Gaussian, the quantum maximal correlation of one is achieved using non-hermitian local operators $X_A=(\bfa_{A}-\langle\bfa_{A}\rangle)/\Delta\bfa_{A}$ and $Y_B=(\bfa_{B}-\langle\bfa_{B}\rangle)/\Delta\bfa_{B}$ with $\langle\bfa_{A}\rangle=\langle\bfa_{A}\rangle=\int P(\alpha)\alpha\, \dd^{2}\alpha$ and $\Delta\bfa_{A}^2=\int P(\alpha)|\alpha|^2 \dd^{2}\alpha-\langle\bfa_{A}\rangle^2=\Delta\bfa_{B}^2$. This fact can also be verified by examples of two-qubit states; see the updated arXiv version of~\cite{Beigi13}.

Comparing the maximal correlation of the CC and CA states for the same marginal states and $0\leq\nu\leq\lambda-1$, in which case both states are separable, we can see that $\mu_{_\text{CC}}(A,B)<\mu_{_\text{CA}}(A,B)$. Although the only difference between the two classes is the sign of the correlation, this shows that any arbitrary number of copies of the CC state cannot be  transformed to the CA state by local operations and without classical communication. We note, however, that these two states have the same amount of Gaussian maximal correlation since by~\eqref{eq:GMC-SF}  
\begin{equation*}
	\mu_{\text{G}_\text{CA}}(A,B)=\mu_{\text{G}_\text{CC}}(A,B)=\frac{\nu}{\lambda}.
\end{equation*}
Hence, although the maximal correlation gives impossibility of local state transformation of the CC state to the CA state, the Gaussian maximal correlation cannot detect this. We also note that for these states the Gaussian maximal correlation is strictly less than the maximal correlation given by~\eqref{eq:MC-CA} and~\eqref{eq:MC-CC}. 

As an application of the Gaussian maximal correlation measure, suppose that a third party prepares a CA state $\rho_{AB}$ and sends one subsystem to Alice and the other one to Bob through lossy channels. The covariance matrix of the shared state between Alice and Bob is given by~\cite{Serafini17}
\begin{equation*}
	\begin{pmatrix}
		(\tau_A\lambda+1-\tau_A) I&\sqrt{\tau_A\tau_B}\nu Z \\
		\sqrt{\tau_A\tau_B}\nu Z& (\tau_B\lambda+1-\tau_B)I\\ 
	\end{pmatrix},
\end{equation*}
where $0\leq\tau_A\leq1$ and $0\leq\tau_B\leq1$ are the transmissivities of lossy channels to Alice and Bob, and $Z=\text{diag}(1,-1)$.  We can then see that, if $\tau_A<1$ and/or $\tau_B<1$, the Gaussian maximal correlation of the shared state is less than the Gaussian maximal correlation of the initial state,
\begin{equation*}
	\frac{\sqrt{\tau_A\tau_B}\nu}{\sqrt{(\tau_A\lambda+1-\tau_A)(\tau_B\lambda+1-\tau_B)}}<\frac{\nu}{\lambda}.
\end{equation*}
Therefore, using any arbitrary number of copies of the shared state and without classical communication, Alice and Bob cannot locally retrieve the initial state $\rho_{AB}$.


\section{Multipartite Gaussian States}\label{sec:MC-ribbon-Gaussian}

Classical maximal correlation for multipartite probability distributions is first defined in~\cite{BeigiGohari18}. The maximal correlation of an $m$-partite distribution is a \emph{subset} of $[0,1]^m$, called the \emph{maximal correlation ribbon}. This subset for $m=2$ is fully characterized in terms of  a single number that is the (bipartite) maximal correlation~\cite[Proposition~29]{BeigiGohari18}. Thus, the maximal correlation ribbon is really a generalization of the (bipartite) maximal correlation. Moreover, maximal correlation ribbon satisfies the data processing and tensorization properties, which are required for the local state transformation problem in the multipartite case.

In this section, we first generalize the classical notion of the maximal correlation ribbon for multipartite quantum states. Then, we characterize the quantum maximal correlation ribbon for multipartite Gaussian states. Here, we briefly discuss these results and give the details in  Appendix~\ref{app:MCRibbon} and Appendix~\ref{app:Gaussian}.

Following the definition of the maximal correlation ribbon in the classical case~\cite{BeigiGohari18}, in order to define a quantum maximal correlation ribbon we first need a notion of \emph{quantum conditional expectation}. Let $\rho_{A^m}=\rho_{A_1, \dots, A_m}$ be an $m$-partite quantum state, and let $\rho_{A_j}$, $j=1, \dots, m$ be its marginal states. For simplicity we assume that $\rho_{A_j}$ is full-rank for all $j$. Then, for any $1\leq j\leq m$ define the super-operator $\cE(\cdot)$ by
$$\cE_j(X) = \tr_{\neg j} (X\rho_{A^m} \rho_{A_j}^{-1}),$$
where by $\tr_{\neg j}$ we mean tracing out all the subsystems except the $j$-th one. $\cE_j(\cdot)$ behaves like a ``conditional expectation" whose properties are given in Lemma~\ref{lem:cE} in Appendix~\ref{app:MCRibbon}.

Next, recall that the variance of an operator $X$ is defined by
\begin{align*}
	\Var(X) &= \langle X, X\rangle_{\rho_{A^m}} - |\tr(\rho_{A^m}X)|^2\\
	&= \langle X-\tr(\rho_{A^m}X) I, X-\tr(\rho_{A^m}X) I\rangle_{\rho_{A^m}}\\ 
	&= \|X-\tr(\rho_{A^m}X) I\|^2,
\end{align*}
where as before $\langle X, Y\rangle_{\rho_{A^m}} = \tr(\rho_{A^m} X^\dagger Y)$.

\begin{definition}
	For an $m$-partite quantum state $\rho_{A_1, \dots, A_m}$ let $\fS(A_1, \dots, A_m)\subseteq [0,1]^m$ be the set of tuples $(\theta_1, \dots, \theta_m)$ such that for any operator $X=X_{A^n}$ we have
	\begin{align} \label{eq:def-MCR}
		\Var(X)\geq \sum_j \theta_j \Var(\cE_j(X)).
	\end{align}
	We call  $\fS(A_1, \dots, A_m)$ the \emph{maximal correlation ribbon} (MC ribbon) of $\rho_{A_1, \dots, A_m}$.
\end{definition}

Remark that since $\Var(X)\geq 0$, in the definition of $\fS(A_1, \dots, A_m)$ we restrict to $\theta_j\geq 0$. Moreover, as shown in Lemma~\ref{lem:cE} in Appendix~\ref{app:MCRibbon}, we have 
\begin{align}\label{eq:Var-E-Var-ineq}
\Var(X)\geq \Var(\cE_j(X)), \quad \forall j.
\end{align}
Thus,~\eqref{eq:def-MCR} implies $\theta_j\leq 1$. This is why $\fS(A_1, \dots, A_m)$ is defined as a subset of $[0,1]^m$. Moreover, by~\eqref{eq:Var-E-Var-ineq}, any $(\theta_1, \dots, \theta_m)\in [0,1]^m$ with $\sum_j \theta_j\leq 1$ belongs to $\fS(A_1, \dots, A_m)$ for any $\rho_{A^m}$. Thus, the interesting part of $\fS(A_1, \dots, A_m)$ is the subset $[0,1]^m$ beyond the above trivial subset. Finally, from the definition, it is clear that $\fS(A_1, \dots, A_m)$ is a convex set. See Figure~\ref{fig:MCRibbon} for a typical shape of the MC~ribbon. 

\begin{figure}
	\begin{center}
		\includegraphics[scale=0.42]{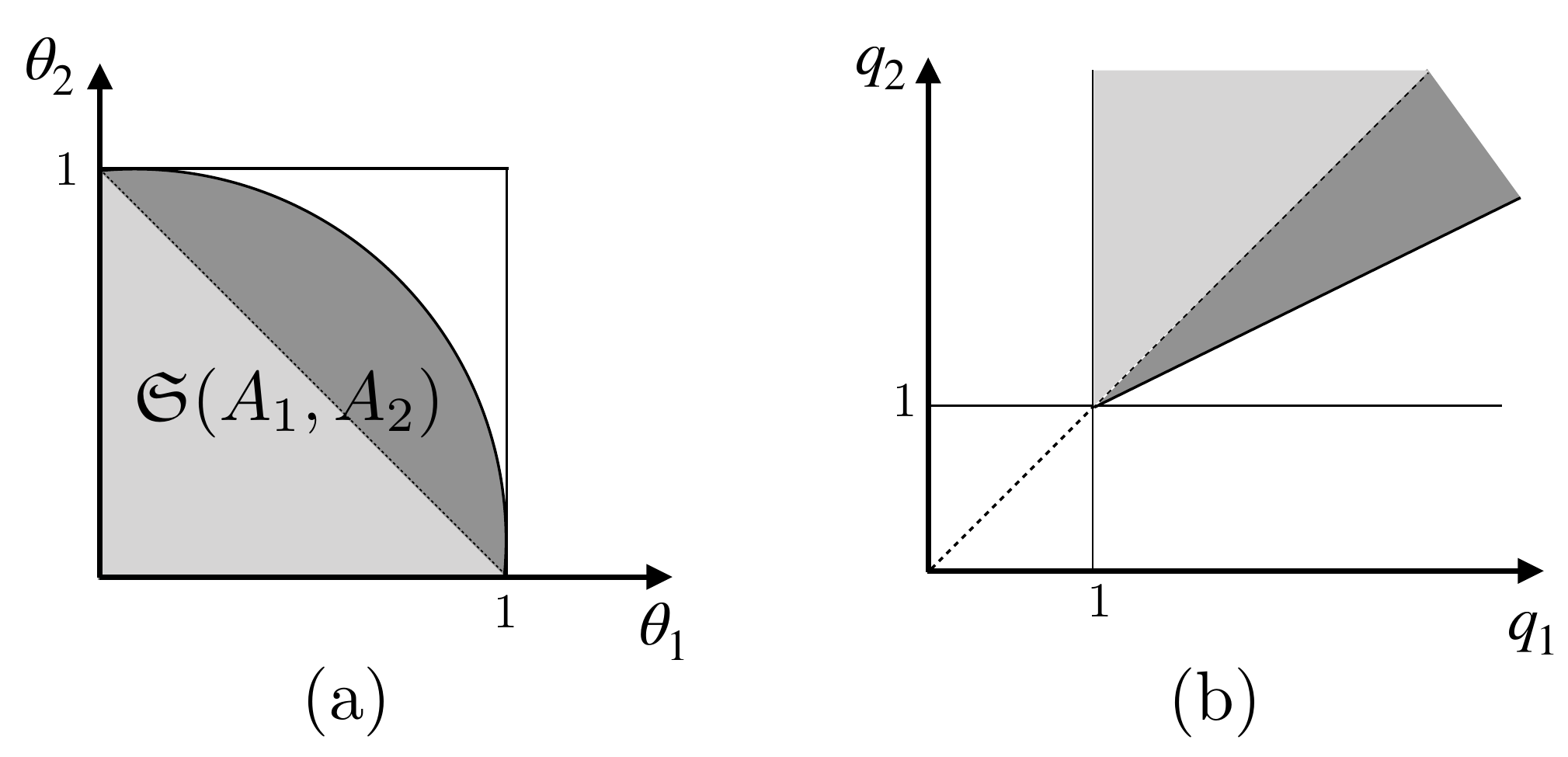}
		\caption{\footnotesize (a) The MC~ribbon is a convex subset of $[0,1]^m$, and the non-trivial part of $\fS(A_1, \dots, A_m)$ is the set of tuples $(\theta_1, \dots, \theta_m)$ with $\sum_j \theta_j>1$. For $m=2$, the MC ribbon of a generic state is the dark gray area plus the light gray area that is the trivial part, containing points satisfying $\theta_1+\theta_2\leq1$. We note that by equation~\eqref{eq:MCR-m2-q} when $\mu(A_1, A_2)=1$, the MC ribbon is equal to this trivial part. (b) In the case of $m=2$, with the re-parametrization $q_1= 1/(1-\theta_1)$ and $q_2=1/\theta_2$, by equation~\eqref{eq:MCR-m2-q}, the MC~ribbon is determined by the line $\mu^2(A_1, A_2)(q_1-1) = (q_2-1)$. We note that in this case the non-trivial part of the MC~ribbon takes the form a ribbon, thus the name.
		}
		\label{fig:MCRibbon}
	\end{center}
\end{figure}  

It can be shown that for product states $\rho_{A^m}= \rho_{A_1}\otimes \cdots \otimes \rho_{A_m}$, the MC~ribbon is the largest possible set: $\fS(A_1, \dots, A_m)= [0,1]^m$. On the other hand, for the maximally entangled state $\frac{1}{\sqrt 2}(\ket{00} +\ket{11})$, the MC~ribbon consists of only trivial points: $\fS(A_1, A_2)= \{(\theta_1, \theta_2)\in [0,1]^2:\,  \sum_j \theta_j\leq 1 \}$. See Appendix~\ref{app:MCRibbon} for more details.

In particular, as shown in Theorem~\ref{thm:MCR-bipartite} in Appendix~\ref{app:MCRibbon} when $m=2$, the pair $(\theta_1, \theta_2)\in [0,1]^2$ belongs to $\fS(A_1, A_2)$ if and only if
\begin{align}\label{eq:MCR-m2-q}
	\big(\frac{1}{\theta_1} -1 \big)\big(\frac{1}{\theta_2} -1 \big)\geq \mu^2(A_1, A_2),
\end{align}
where $\mu(A_1, A_2)$ is the maximal correlation of the bipartite state. This shows that the  MC~ribbon can be viewed as the generalization of the maximal correlation.

As shown in Theorem~\ref{thm:MCR-monotonicity-ternsorization} in Appendix~\ref{app:MCRibbon}, maximal correlation ribbon also satisfies the data processing and the tensorization properties:
\begin{itemize}
	\item \textbf{(Data processing)} For local quantum operations $\Phi^{(j)}_{A_j\to A'_j}$, if
	$$\sigma_{A'^m} = \Phi^{(1)}_{A_1\to A'_1}\otimes\cdots \otimes \Phi^{(m)}_{A_m\to A'_m}(\rho_{A^m}),$$
	then $\fS(A_1, \dots, A_m)\subseteq \fS(A'_1, \dots, A'_m)$.
	
	\item \textbf{(Tensorization)} For any $\rho_{A^m}$ and any integer $n$ we have
	$$\fS(A_1^n,\dots, A_m^n) = \fS(A_1,\dots,  A_m),$$
	where the left hand side is the MC ribbon of $\rho_{A^m}^{\otimes n}$.
\end{itemize}

These two properties show that the MC~ribbon is a relevant invariant for the local state transformation problem in the multipartite case. 

\medskip
Now we get to our main problem, namely, how to compute the MC~ribbon for multipartite Gaussian states. The following theorem, that is a generalization of Theorem~\ref{thm:main-mc}, is our main result in this direction.

\begin{theorem} (Informal)
	To compute the MC~ribbon for Gaussian state it suffices to restrict $X$ in~\eqref{eq:def-MCR} to operators that are linear in phase-space quadrature operators.
\end{theorem}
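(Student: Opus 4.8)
The plan is to upgrade the informal statement to: \emph{for a Gaussian state $\rho_{A^m}$ the set $\fS(A_1,\dots,A_m)$ is unchanged if in~\eqref{eq:def-MCR} one only demands the inequality for operators $X$ that are linear in the quadrature operators}, and to prove it by collapsing the infinite-dimensional condition~\eqref{eq:def-MCR} onto a $(2m)$-dimensional space, in close parallel with the proof of Theorem~\ref{thm:main-mc}. First I would record the reductions. The MC~ribbon is invariant under local Gaussian unitaries (apply data processing to such a unitary and to its inverse), so by Proposition~\ref{prop:standard-form} we may assume $\bfd(\rho_{A^m})=0$ and $\bfgamma(\rho_{A_j})=\lambda_j I$; a mode with $\lambda_j=1$ has a pure marginal, hence is in a product with the rest, so we may take $\lambda_j>1$ for all $j$, i.e.\ every marginal is a full-rank thermal state. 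Working in the Hilbert space $\mathcal H$ of operators with $\langle\cdot,\cdot\rangle_{\rho_{A^m}}$, the identity $\langle Y_{A_j}\otimes I_{\neg j}, X\rangle_{\rho_{A^m}}=\langle Y_{A_j},\cE_j(X)\rangle_{\rho_{A_j}}$ shows that the adjoint of $\cE_j$ is the embedding $Y_{A_j}\mapsto Y_{A_j}\otimes I_{\neg j}$ and, since $\cE_j\cE_j^{\dagger}=\mathrm{id}$, that $\Pi_j:=\cE_j^{\dagger}\cE_j$ is the orthogonal projection onto the subspace of mode-$j$ operators. Hence $\Var(\cE_j(X))=\|\Pi_j(X-\tr(\rho_{A^m}X)I)\|^2$ (the $\langle\cdot,\cdot\rangle_{\rho_{A^m}}$-norm), and~\eqref{eq:def-MCR} becomes the operator inequality $\sum_j\theta_j\Pi_j\leq I$ on the orthogonal complement $\mathcal H_0$ of $\mathbb C I$.

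The core of the argument is a grading of $\mathcal H$ that simultaneously block-diagonalizes all the $\Pi_j$. Let $\mathcal P_{\leq t}\subseteq\mathcal H$ be the span of operator-polynomials of degree at most $t$ in the quadratures, and let $\mathcal G^{(t)}$ be the $\langle\cdot,\cdot\rangle_{\rho_{A^m}}$-orthogonal complement of $\mathcal P_{\leq t-1}$ inside $\mathcal P_{\leq t}$, so that $\mathcal H=\bigoplus_{t\geq0}\mathcal G^{(t)}$ orthogonally, with $\mathcal G^{(0)}=\mathbb C I$ and $\mathcal G^{(1)}$ the space of mean-zero linear operators. The key lemma is: \emph{for a Gaussian state, $\cE_j$ maps $\mathcal P_{\leq t}$ into polynomials of degree $\leq t$ in $\bfx_j,\bfp_j$} — the operator analogue of Lancaster's classical observation behind~\eqref{eq:ClasMC-Gaus} that conditioning a polynomial on one coordinate of a Gaussian vector cannot raise its degree. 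Equivalently, in the bases of Theorem~\ref{thm:orth-basis}, $H^{A_j}_{\bfa}\otimes I_{\neg j}$ is $\langle\cdot,\cdot\rangle_{\rho_{A^m}}$-orthogonal to $\mathcal P_{<|\bfa|}$. I would prove this exactly as the block structure~\eqref{eq:inner-prod-sum-zero} was obtained: evaluate $\cE_j$ on the generating family (products of the operators $C_\bfr$ of Section~\ref{sec:orth-basis}, equivalently on $e^{\bfc^\top\bfR}$), use the Gaussian Wigner/characteristic function and $\rho_{A_j}^{-1}$ to see that the outcome has the form (exponential of a quadratic in $\bfc$) times $e^{(\text{linear in }\bfc)^\top\bfR_{A_j}}$, and compare Taylor coefficients in $\bfc$. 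Granting the lemma, $\Pi_j=\cE_j^{\dagger}\cE_j$ preserves each $\mathcal P_{\leq t}$, hence (being self-adjoint) each $\mathcal G^{(t)}$, so $\sum_j\theta_j\Pi_j\leq I$ on $\mathcal H_0$ splits into the family $\sum_j\theta_j\Pi_j^{(t)}\leq I_{\mathcal G^{(t)}}$, $t\geq1$, where $\Pi_j^{(t)}$ is the restriction; the $t=1$ instance is precisely the assertion that linear operators suffice.

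It then remains to show that the $t=1$ condition implies those for all $t>1$. Generalizing the isometry $S$ of~\eqref{eq:def-S-matrix} from $2$ to $2m$ degree-one directions, one obtains an isometry $W_t:\mathcal G^{(t)}\to(\mathcal G^{(1)})^{\otimes t}$ onto the symmetric subspace, with $W_t^{\dagger}W_t=I_{\mathcal G^{(t)}}$ and $\Pi_j^{(t)}=W_t^{\dagger}\big(\Pi_j^{(1)}\big)^{\otimes t}W_t$ — the multipartite counterpart of $\cQ^{(t)}=S\cQ^{\otimes t}S^{\dagger}$, verified by the same multinomial bookkeeping. From $\sum_j\theta_j\Pi_j^{(1)}\leq I_{\mathcal G^{(1)}}$ one deduces $\sum_j\theta_j\big(\Pi_j^{(1)}\big)^{\otimes t}\leq I$ on $(\mathcal G^{(1)})^{\otimes t}$ by the elementary estimate $\big(\Pi_j^{(1)}\big)^{\otimes t}\leq\Pi_j^{(1)}\otimes I^{\otimes(t-1)}$ summed over $j$, and conjugating by $W_t$ gives $\sum_j\theta_j\Pi_j^{(t)}\leq I_{\mathcal G^{(t)}}$. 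Therefore $\fS(A_1,\dots,A_m)$ is cut out by the single inequality $\sum_j\theta_j\Pi_j^{(1)}\leq I_{\mathcal G^{(1)}}$ on the $(2m)$-dimensional space of linear operators, whose matrix is built from $\bfgamma(\rho_{A^m})$ (the Gram matrix of the quadratures being $\tfrac{1}{2}(\bfgamma(\rho_{A^m})+i\Omega_m)$) — the genuine generalization of $\mu(A,B)=\|\cQ^{(1)}\|$.

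I expect the key lemma — that the quantum conditional expectation $\cE_j$ respects the polynomial-degree filtration of a Gaussian state — to be the main obstacle, since it is exactly what makes the $\Pi_j$ block-diagonal; the degree decomposition of $\mathcal H$ and the tensor-power comparison on each block are routine adaptations of steps already in the bipartite proof. The residual bookkeeping is the multipartite analogue of the matrix $S$ and care with domains (the $H_{k,\ell}$ and $\rho_{A_j}^{-1}$ are unbounded, so everything is set in the $\langle\cdot,\cdot\rangle_{\rho_{A^m}}$-completion, as in Section~\ref{sec:orth-basis}).
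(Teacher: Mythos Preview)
Your strategy is essentially correct and closely parallel to the paper's, but you and the paper diverge at the reduction step, and the path you choose creates extra work that you underestimate. The paper does \emph{not} work with the operator inequality $\sum_j\theta_j\Pi_j\leq I$ on the full graded space $\mathcal G^{(t)}$; instead it first invokes Theorem~\ref{thm:MCR-equiv} to convert~\eqref{eq:def-MCR} into the dual condition $\Theta^{-1}\geq\cG$ on the Gram matrix of the \emph{local} bases $\bigcup_j\{H^{(j)}_{k,\ell}:(k,\ell)\neq(0,0)\}$. The degree block-diagonality of $\cG$ is then immediate from~\eqref{eq:inner-prod-sum-zero}, and the relation between blocks is the explicit identity $\cG^{(t)}=S_m(\cG^{(1)})^{\boxtimes t}S_m^{\dagger}$ with $S_m=I_m\otimes S$, which follows verbatim from the bipartite computation~\eqref{eq:def-Q-t-ell} because only pairwise inner products of local operators are involved. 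The reduction $t\geq1\Rightarrow t=1$ then uses Lemma~\ref{lem:box-times} on the $\boxtimes$-product rather than the projection estimate you give.

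By contrast, your route stays on the full $\mathcal G^{(t)}$ (dimension $\binom{2m+t-1}{t}$, containing genuinely non-local polynomials such as $\bfx_1\bfx_2$) and therefore needs two ingredients the paper never touches: (i) the ``key lemma'' that $\cE_j$ preserves the total-degree filtration for \emph{arbitrary} polynomials in all $2m$ quadratures (the paper only computes $\langle H^{(j)}_{k,\ell},H^{(j')}_{k',\ell'}\rangle$, never $\cE_j$ on cross-mode monomials), and (ii) the isometry $W_t:\mathcal G^{(t)}\to\mathrm{Sym}^t(\mathcal G^{(1)})$ together with $\Pi_j^{(t)}=W_t^{\dagger}(\Pi_j^{(1)})^{\otimes t}W_t$. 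Both are true---they amount to a quantum Wiener--It\^o chaos decomposition for the Gaussian state and its compatibility with second quantization of projections---but calling them ``the same multinomial bookkeeping'' as the bipartite $S$ is optimistic: in the bipartite proof the matrix $S$ acts only on the $(t{+}1)$-dimensional local degree-$t$ space, not on the full $\mathcal G^{(t)}$ with its $\rho_{A^m}$-correlated inner product. The paper's detour through Theorem~\ref{thm:MCR-equiv} is precisely what lets it avoid building this chaos isometry. Your final step, $(\Pi_j^{(1)})^{\otimes t}\leq\Pi_j^{(1)}\otimes I^{\otimes(t-1)}$ summed over $j$, is correct and is the projection-side analogue of the paper's use of Lemma~\ref{lem:box-times}.
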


For a formal statement of this result and its proof see Appendix~\ref{app:Gaussian}. 

Motivated by the definition of the Gaussian maximal correlation~\eqref{eq:GMC}, we can define a Gaussian maximal correlation ribbon for multipartite Gaussian states as follows. For an $m$-mode Gaussian state $\rho_{A_1, \dots, A_m}$ let $\fS_{G}(A_1, \dots, A_m)$ be the set of tuples $(\theta_1, \dots, \theta_m)\in [0,1]^m$ satisfying 
	$$\Var\big(\sum_{j=1}^m X_j\big) \leq \sum_{j=1}^m \frac{1}{\theta_j} \Var(X_j),$$
	for all local operators $X_1, \dots, X_m$ that are hermitian and linear in terms of phase-space quadrature operators. We note that, by Theorem~\ref{thm:MCR-equiv} in Appendix~\ref{app:MCRibbon}, removing the latter constraints on $X_j$'s, we recover $\fS(A_1, \dots, A_m)$. Therefore, we have $\fS(A_1, \dots, A_m)\subseteq \fS_{G}(A_1, \dots, A_m)$. Moreover, it is not hard to verify that $\fS_{G}(A_1, \dots, A_m)$ is monotone under the action of local Gaussian operations and satisfies tensorization (in the sense of Theorem~\ref{thm:MCR-monotonicity-ternsorization}).

To find an equivalent characterization of $\fS_{G}(A_1, \dots, A_m)$,
	a straightforward computation as in the bipartite case shows that $(\theta_1, \dots, \theta_m)$ belongs to $\fS_{G}(A_1, \dots, A_m)$ if and only if 
	\begin{align*}
		\begin{pmatrix}
			\frac{1}{\theta_1} \bfgamma_{A_1} & 0 & \cdots & 0\\
			0 & \frac{1}{\theta_2} \bfgamma_{A_2} & \cdots & 0\\
			\vdots & \vdots & \ddots & \vdots \\
			0 & 0 & \cdots & \frac{1}{\theta_m} \bfgamma_{A_m}
		\end{pmatrix}
		\geq \bfgamma_{A_1, \dots, A_m},
	\end{align*}
	where $\bfgamma_{A_1, \dots, A_m}$ is the covariance matrix of $\rho_{A_1, \dots, A_m}$ and $\bfgamma_{A_j}$ is the covariance matrix of its marginal $\rho_{A_j}$.

\section{Discussion}\label{sec:discussion}

The maximal correlation is of particular interest for the local state transformation problem, where two parties are restricted to local operations but do not have access to classical communication. 
We have shown that the maximal correlation for bipartite Gaussian states can be simply calculated by restricting the optimization to local operators that are linear in terms of the phase-space quadrature operators. These optimal local operators may not be hermitian and therefore the maximal correlation, in general, is an upper bound on the classical maximal correlation of the joint outcome probability distribution associated with optimal local measurements (local hermitian observables). Using our results, one can investigate the problem of local state transformation with local operations when either the resource or the target state is Gaussian.

We have also introduced the Gaussian maximal correlation, as another measure of correlation for Gaussian states, by restricting the optimization to the class of hermitian and linear operators in terms of quadrature operators. This measure corresponds to performing optimal homodyne measurements on the phase-space quadratures, and is relevant to the local state transformation when both the resource and the target states are Gaussian. Nevertheless, we have shown through examples, that sometimes the maximal correlation gives stronger bounds on the local state transformation 
compared to the Gaussian maximal correlation even if both the states are Gaussian.

An interesting question, motivated by the definition of the Gaussian maximal correlation, is whether one can define other variants of maximal correlation by considering hermitian local operators that are quadratic or higher order in terms of quadrature operators. As we have shown for two-mode squeezed vacuum states, the optimal local measurements yielding the maximal correlation of one are photon-counting measurements that are non-Gaussian. This suggests that the maximization in the Gaussian maximal correlation can be further optimized using hermitian and quadratic operators in terms of quadrature operators. We leave the study of the properties of such invariants for future works.

We have also generalized the maximal correlation to the quantum maximal correlation ribbon for the multipartite case. We have shown that the quantum maximal correlation ribbon of Gaussian states can also be computed by restricting to local operators that are linear in terms of quadrature operators. Further, we have discussed its Gaussian version based on using Gaussian local observables.

In this paper, both in the bipartite and multipartite cases, in the computation of maximal correlation for Gaussian states we assume that each subsystem consists of a single mode. Another interesting problem it to generalize our results to cases where each party can have more than one mode of the shared Gaussian state. For example, one can consider the maximal correlation of a Gaussian state $\rho_{AB}$ in which each subsystem $A, B$ consists of two modes.

\emph{Hypercontractivity ribbon} is another invariant of quantum states that gives bounds on the local state transformation problem~\cite{DelgoshaBeigi2014}. It would be interesting to compute the hypercontractivity ribbon for Gaussian states. 

Finally, the quantum maximal correlation is really a measure of correlation and not a measure of entanglement, as it can take its maximum value on separable states. It is desirable to find a measure of entanglement that satisfies the tensorization property; see~\cite{beigi14} for an attempt  in this direction.


%

\newpage
\onecolumngrid
\appendix

\section{Quantum Maximal Correlation Ribbon}\label{app:MCRibbon}

In this appendix we discuss in details some properties of the quantum maximal correlation ribbon mentioned in Section~\ref{sec:MC-ribbon-Gaussian}. We start with the conditional expectation operator defined by
$$\cE_j(X) = \tr_{\neg j} (X\rho_{A^m} \rho_{A_j}^{-1}),$$
where $\tr_{\neg j}$ means partial trace with respect to all subsystems except the $j$-th one. We note that $\cE_j$ maps an operator acting on the whole system $A^m$ to an operator acting only on the $j$-th subsystem.

\begin{lemma}\label{lem:cE}
\begin{enumerate}[label=\emph{(\roman*)}]
\item If $Y$ is an operator acting on the $j$-th subsystem, then for any $X$ we have
$$\langle Y, X\rangle_{\rho_{A^m}} = \langle Y, \cE_j(X)\rangle_{\rho_{A_j}},$$
where the inner product is given in~\eqref{eq:def-inner-prod}.
\item $\cE_j$ is a projection, i.e., $\cE_j^2=\cE_j$.
\item $\cE_j$ is self-adjoint with respect to the inner product $\langle \cdot, \cdot\rangle_{\rho_{A^m}}$. 
\item For an operator $X$ define its \emph{variance} by
\begin{align}
\Var(X) &= \|X-\tr(\rho_{A^m}X) I\|_{\rho_{A^m}}^2\nonumber\\
&= \langle X-\tr(\rho_{A^m}X) I, X-\tr(\rho_{A^m}X) I\rangle_{\rho_{A^m}}\nonumber\\ 
&= \langle X, X\rangle_{\rho_{A^m}} - |\tr(\rho_{A^m}X)|^2. \label{eq:def-variance}
\end{align}
Then, we have
\begin{align}\label{eq:total-var}
\Var(X) = \Var(\cE_j(X)) + \Var(X-\cE_j(X)).
\end{align}
Moreover, since $\Var(X-\cE_j(X))\geq 0$ we have  $\Var(X)\geq \Var(\cE_j(X)).$
\end{enumerate}
\end{lemma}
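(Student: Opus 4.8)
The four statements together say that $\cE_j$ is the orthogonal projection, in the inner product $\langle\cdot,\cdot\rangle_{\rho_{A^m}}$, onto the space of operators supported on the $j$-th subsystem, so the plan is essentially to verify the defining features of such a projection (idempotence, self-adjointness, and the attendant Pythagorean decomposition of the variance), proving the parts in the order (i), (ii), (iii), (iv) since each uses its predecessors. The only genuine care needed is with operator orderings under partial traces, because $\rho_{A^m}$, $X$ and $\rho_{A_j}^{-1}$ do not commute; the workhorse identity throughout is $\tr_{\neg j}\big((A\otimes I_{\neg j})\,M\,(B\otimes I_{\neg j})\big) = A\,\tr_{\neg j}(M)\,B$ for operators $A, B$ supported on subsystem $j$, together with $\tr(M)=\tr_{A_j}\!\big(\tr_{\neg j}(M)\big)$.

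For (i), I would start from $\langle Y, X\rangle_{\rho_{A^m}} = \tr\big(Y^\dagger X\rho_{A^m}\big)$ (cyclicity) and, using that $Y^\dagger$ is supported on subsystem $j$, rewrite it as $\tr_{A_j}\!\big(Y^\dagger\,\tr_{\neg j}(X\rho_{A^m})\big)$; inserting $\rho_{A_j}^{-1}\rho_{A_j}$ on subsystem $j$ and recognising $\tr_{\neg j}(X\rho_{A^m})\,\rho_{A_j}^{-1} = \cE_j(X)$ then gives $\tr_{A_j}\!\big(\rho_{A_j} Y^\dagger \cE_j(X)\big) = \langle Y, \cE_j(X)\rangle_{\rho_{A_j}}$. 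Two byproducts worth recording at this stage: taking $Y=I$ shows that $\cE_j$ preserves the mean, $\tr(\rho_{A^m}\cE_j(X)) = \tr(\rho_{A^m}X)$; and since $Y^\dagger\cE_j(X)$ is supported on subsystem $j$, the identity can equally be read as $\langle Y, X\rangle_{\rho_{A^m}} = \langle Y, \cE_j(X)\rangle_{\rho_{A^m}}$, i.e. the two inner products agree on operators supported on subsystem $j$.

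For (ii), since $\cE_j$ always outputs an operator supported on subsystem $j$, it suffices to check $\cE_j(Z)=Z$ for such $Z$, which is immediate from the workhorse identity: $\cE_j(Z)=Z\,\tr_{\neg j}(\rho_{A^m})\,\rho_{A_j}^{-1} = Z\rho_{A_j}\rho_{A_j}^{-1}=Z$, whence $\cE_j^2=\cE_j$. For (iii), I would use the second reading of (i): for $Y$ supported on subsystem $j$, $\langle Y, X\rangle_{\rho_{A^m}} = \langle Y, \cE_j(X)\rangle_{\rho_{A^m}}$; taking $Y=\cE_j(X)$ gives $\langle \cE_j(X), X'\rangle_{\rho_{A^m}} = \langle \cE_j(X), \cE_j(X')\rangle_{\rho_{A^m}}$, and since the right-hand side equals the complex conjugate of itself with $X$ and $X'$ interchanged, conjugating and relabelling yields $\langle \cE_j(X), X'\rangle_{\rho_{A^m}} = \langle X, \cE_j(X')\rangle_{\rho_{A^m}}$.

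For (iv), set $\mu=\tr(\rho_{A^m}X)$ and split $X-\mu I = \big(\cE_j(X)-\mu I\big) + \big(X-\cE_j(X)\big)$. The first summand is supported on subsystem $j$, so by (i) its $\rho_{A^m}$-inner product with the second summand equals its $\rho_{A_j}$-inner product with $\cE_j\big(X-\cE_j(X)\big) = \cE_j(X)-\cE_j^2(X) = 0$ by (ii); hence the two summands are orthogonal, and the Pythagorean identity gives $\Var(X) = \|\cE_j(X)-\mu I\|^2 + \|X-\cE_j(X)\|^2$. Because $\mu$ is also the mean of $\cE_j(X)$ by mean preservation, the first term is $\Var(\cE_j(X))$, and because $X-\cE_j(X)$ has zero mean, the second term is $\Var(X-\cE_j(X))$; nonnegativity of the latter gives $\Var(X)\geq\Var(\cE_j(X))$. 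I do not expect a substantive obstacle here, as the content is elementary; the only points to stay vigilant about are the noncommutativity just mentioned and, in the continuous-variable setting, that all these manipulations are carried out on the appropriate domains of the (possibly unbounded) operators involved.
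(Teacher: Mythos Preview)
Your proposal is correct and follows essentially the same route as the paper for parts (i), (ii), and (iv). The one notable difference is in (iii): the paper proves self-adjointness by a direct, somewhat lengthy chain of trace manipulations expanding $\langle \cE_j(X), Y\rangle_{\rho_{A^m}}$ and $\langle X, \cE_j(Y)\rangle_{\rho_{A^m}}$ explicitly, whereas you deduce it abstractly from (i) via the identity $\langle \cE_j(X), X'\rangle_{\rho_{A^m}} = \langle \cE_j(X), \cE_j(X')\rangle_{\rho_{A^m}}$ and conjugate symmetry of the inner product. Your argument is cleaner and makes the conceptual point---that (i) already pins $\cE_j$ down as the orthogonal projection onto the subsystem-$j$ operators---explicit; the paper's computation has the minor advantage of being self-contained without invoking the ``second reading'' of (i).
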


Equation~\eqref{eq:total-var} can be understood as a quantum generalization of the \emph{law of total variance}. 

\begin{proof}
(i) Since $Y$ acts on the $j$-th subsystem we have
\begin{align*}
\langle Y, X\rangle_{\rho_{A^m}} &= \tr(\rho_{A^m} Y^\dagger X) = \tr(Y^\dagger X \rho_{A^m}) = \tr\big(Y^\dagger \tr_{\neg j}(X \rho_{A^m}) \big) =  \tr\big( \rho_{A_j}Y^\dagger \tr_{\neg j}(X \rho_{A^m}) \rho_{A_j}^{-1} \big)\\
&= \tr\big( \rho_{A_j}Y^\dagger \tr_{\neg j}(X \rho_{A^m} \rho_{A_j}^{-1})  \big)= \langle Y, \cE_j(X)\rangle_{\rho_{A_j}}.
\end{align*}

\noindent
(ii) We need to show that for $Y_{A_j}$, acting on the $j$-th subsystem, we have $\cE_j(Y)=Y$. We compute
$$\cE_j(Y)= \tr_{\neg j}(Y\rho_{A^m} \rho_{A_j}^{-1}) = Y \tr_{\neg j}(\rho_{A^m}) \rho_{A_j}^{-1}=Y \rho_{A_j} \rho_{A_j}^{-1}=Y.$$

\noindent
(iii) We need to show that $\langle \cE_j(X), Y\rangle_{\rho_{A^m}} = \langle X, \cE_j(Y)\rangle_{\rho_{A^m}}.$ We compute
\begin{align*}
\langle \cE_j(X), Y\rangle_{\rho_{A^m}} & = \tr(\rho_{A^m} \cE_j(X)^\dagger Y) = \tr\big( \rho_{A^m} \tr_{\neg j}(X\rho_{A^m} \rho_{A_j}^{-1})^\dagger Y  \big) \\
&= \tr\big(  \tr_{\neg j}(X\rho_{A^m} \rho_{A_j}^{-1})^\dagger \tr_{\neg j} (Y\rho_{A^m})  \big) = \tr\big(  \tr_{\neg j}( \rho_{A_j}^{-1} \rho_{A^m} X^\dagger) \tr_{\neg j} (Y\rho_{A^m})  \big) \\
&= \tr\big(  \rho_{A_j}^{-1} \tr_{\neg j}(  \rho_{A^m} X^\dagger) \tr_{\neg j} (Y\rho_{A^m})  \big) = \tr\big(   \tr_{\neg j}(  \rho_{A^m} X^\dagger) \tr_{\neg j} (Y\rho_{A^m}) \rho_{A_j}^{-1} \big) \\
&= \tr\big(   \tr_{\neg j}(  \rho_{A^m} X^\dagger) \tr_{\neg j} (Y\rho_{A^m} \rho_{A_j}^{-1})  \big)
= \tr\big(    \rho_{A^m} X^\dagger \tr_{\neg j} (Y\rho_{A^m} \rho_{A_j}^{-1})  \big) \\
&= \tr(\rho_{A^m} X^\dagger \cE_j(Y))\\
&= \langle X, \cE_j(Y)\rangle_{\rho_{A^m}}.
\end{align*}

\noindent
(iv) Let $Y=X-\cE_j(X)$. By (i) and the fact that $I_{A^m}$ can be considered as an operator acting on the $j$-th subsystem, we have $ \langle I_{A^m}, X\rangle_{\rho_{A^m}} = \langle I_{A_j}, \cE_j(X)\rangle_{\rho_{A_j}}$
 and $\tr(\rho_{A^m}Y) = \langle I_{A^m}, Y\rangle_{\rho_{A^m}} = 0$. This, in particular, means that $ \tr(\rho_{A^m}X)=  \tr(\rho_{A_j}\cE_j(X))$. Therefore,
\begin{align*}
\Var(Y) &= \langle Y, Y\rangle_{\rho_{A^m}}= \langle X-\cE_j(X), X-\cE_j(X)\rangle_{\rho_{A^m}} \\
&= \langle X, X\rangle_{\rho_{A^m}} -\langle X, \cE_j(X)\rangle_{\rho_{A^m}} - \langle \cE_j(X), X\rangle_{\rho_{A^m}} +\langle \cE_j(X), \cE_j(X)\rangle_{\rho_{A_j}}\\
&= \langle X, X\rangle_{\rho_{A^m}} -\langle \cE_j(X), \cE_j(X)\rangle_{\rho_{A_j}}\\
&= \langle X, X\rangle_{\rho_{A^m}} + |\tr(\rho_{A^m}X)|^2 -  |\tr(\rho_{A_j}\cE_j(X))|^2 -\langle \cE_j(X), \cE_j(X)\rangle_{\rho_{A_j}}\\
&= \Var(X) - \Var(\cE_j(X)),
\end{align*}
where in the third line we use (i).

\end{proof}

Now recall that the maximal correlation ribbon is defined by
\begin{align}\label{eq:app-def-MCR}
\fS(A_1, \dots, A_m) = \Big\{(\theta_1, \dots, \theta_m):\, \Var(X)\geq \sum_j \theta_j \Var(\cE_j(X)), \forall X\Big\}.
\end{align}
In the above definition (when the underlying Hilbert space is infinite dimensional), $X$ runs over the space $\cB_{A^m}$ of operators for which $\langle X, X\rangle_{\rho_A^m}$ is finite. We note that for such an $X\in \cB_{A^m}$, by the Cauchy-Schwarz inequality $\tr(\rho_{A^m} X)=\langle I,  X\rangle_{\rho_{A^m}}$ is also finite. Moreover, by the low of total variance established in Lemma~\ref{lem:cE}, $\cE_j$ maps $\cB_{A^m}$ to $\cB_{j}$, the space of operators $X_j$ acting on the $j$-th subsystem with $\langle X_j, X_j\rangle_{\rho_j}<\infty$. Thus, $\fS(A_1, \dots, A_m) $ is well-defined even in the infinite dimensional case.

To establish $(\theta_1, \dots, \theta_m)\in \fS(A_1, \dots, A_m) $ we need to verify an inequality for all operators $X\in \cB_{A^m}$ acting on the whole system $A^m$. In the following we show that we may restrict $X$ to be a linear combination of local operators belonging to $\cB_{j}$. We note that 
$$\cL=\overline{\cB_{1}+\cdots + \cB_{m}}\subset \cB_{A^m},$$ 
is a subspace of $\cB_{A^m}$, where $\overline{\cB_{1}+\cdots + \cB_{m}}$ is the closure of the subspace $\cB_{1}+\cdots + \cB_{m}$. Moreover, $\cB_{A^m}$ is equipped with the inner product $\langle \cdot, \cdot\rangle_{\rho_{A^m}}$. Thus, we may consider the orthogonal complement of $\cL$ in $\cB_{A^m}$:
$$\cK= \cL^{\perp}, \qquad \cB_{A^m} = \cL\oplus \cK=(\overline{\cB_{1}+\cdots + \cB_{m}})\oplus \cK.$$
Let $Y\in \cK$. Then, by the orthogonality condition and part (i) of Lemma~\ref{lem:cE} for $X_{A_j}\in \cB_{j}\subset \cL$  we have
$$0=\langle Y, X_{A_j}\rangle_{\rho_{A^n}}= \langle \cE_j(Y), X_{A_j}\rangle_{\rho_{A_j}}.$$
Thus, $\cE_j(Y)$ is an operator acting on the $j$-th subsystem that is orthogonal to all local operators in $\cB_{j}$. This means that $\cE_j(Y)=0$ for all $j$ and $Y\in \cK$. Moreover, since $I_{A^m}\in \cL$ we have $\tr(\rho_{A^m}Y)=\langle I_{A^m}, Y\rangle_{\rho_{A^m}} =0$.

\begin{proposition}\label{prop:L}
In the definition of the MC~ribbon in~\eqref{eq:app-def-MCR} we may restrict to $X\in \cL$. That is, we may restrict to operators that are linear combinations of local ones. Moreover, we may assume that $\tr(\rho_{A^m}X)=0$. 
\end{proposition}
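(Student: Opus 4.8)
The plan is to exploit the orthogonal decomposition $\cB_{A^m} = \cL \oplus \cK$ set up just before the statement, together with the two facts already established for $Y \in \cK$: that $\cE_j(Y) = 0$ for every $j$ and that $\tr(\rho_{A^m}Y) = 0$. Given an arbitrary $X \in \cB_{A^m}$, write $X = X_{\cL} + X_{\cK}$ with $X_{\cL} \in \cL$ and $X_{\cK} \in \cK$. Since $\cE_j$ is linear and annihilates $\cK$, we have $\cE_j(X) = \cE_j(X_{\cL})$ for all $j$, so the right-hand side $\sum_j \theta_j \Var(\cE_j(X))$ of the defining inequality~\eqref{eq:app-def-MCR} depends only on the component $X_{\cL}$.

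First I would show $\Var(X) \geq \Var(X_{\cL})$. Because $I_{A^m} \in \cL$ (it is a local operator) and $\tr(\rho_{A^m}X_{\cK}) = 0$, we may write $X - \tr(\rho_{A^m}X)I = \big(X_{\cL} - \tr(\rho_{A^m}X_{\cL})I\big) + X_{\cK}$, where the first summand lies in $\cL$ and the second in $\cK = \cL^{\perp}$. The Pythagorean identity for $\langle \cdot, \cdot\rangle_{\rho_{A^m}}$ then yields $\Var(X) = \Var(X_{\cL}) + \|X_{\cK}\|^2_{\rho_{A^m}} \geq \Var(X_{\cL})$. Consequently, if $\Var(X') \geq \sum_j \theta_j \Var(\cE_j(X'))$ holds for all $X' \in \cL$, then for arbitrary $X \in \cB_{A^m}$,
$$\Var(X) \geq \Var(X_{\cL}) \geq \sum_j \theta_j \Var(\cE_j(X_{\cL})) = \sum_j \theta_j \Var(\cE_j(X)),$$
so the inequality automatically extends to all of $\cB_{A^m}$. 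The reverse implication is trivial since $\cL \subseteq \cB_{A^m}$, so the membership condition defining $\fS(A_1, \dots, A_m)$ is unchanged when the quantifier ranges only over $X \in \cL$.

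For the last assertion, I would observe that replacing $X$ by $X_0 := X - \tr(\rho_{A^m}X)I$ alters neither side of the inequality: variance is invariant under adding a multiple of $I$, and since $\cE_j(X_0) - \cE_j(X) = -\tr(\rho_{A^m}X)\,\cE_j(I) = -\tr(\rho_{A^m}X)\,I$ is itself a multiple of $I$ (using $\cE_j(I) = I$ from Lemma~\ref{lem:cE}, equivalently $\tr_{\neg j}(\rho_{A^m}) = \rho_{A_j}$), we get $\Var(\cE_j(X_0)) = \Var(\cE_j(X))$. Moreover $X_0 \in \cL$ whenever $X \in \cL$, again because $I \in \cL$, and $\tr(\rho_{A^m}X_0) = 0$ by construction. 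Hence within $\cL$ we may further restrict to operators of vanishing mean.

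The argument is essentially bookkeeping built on the Pythagorean splitting of the variance; the only genuine point requiring care is the infinite-dimensional functional-analytic setup — that $\cL = \overline{\cB_1 + \cdots + \cB_m}$ is a closed subspace so the orthogonal projection onto it is well-defined, and that $\cE_j$ really maps $\cB_{A^m}$ into $\cB_j$ so that $\Var(\cE_j(X))$ is meaningful. Both were arranged in the discussion preceding the statement (the latter via the law of total variance in Lemma~\ref{lem:cE}), so the main obstacle has in effect already been dispatched, and what remains is just the short decomposition argument above.
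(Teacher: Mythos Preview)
Your argument is correct and follows essentially the same route as the paper: both decompose $X$ into an $\cL$-part and a $\cK$-part, use $\cE_j(\cK)=0$ and $\tr(\rho_{A^m}\cK)=0$ to see that the right-hand side depends only on the $\cL$-part, and then apply a Pythagorean splitting to show $\Var(X)\geq \Var(X_{\cL})$. The only cosmetic difference is that the paper first invokes a density argument to reduce to $X\in(\cB_1+\cdots+\cB_m)\oplus\cK$ before decomposing, whereas you project directly onto the closed subspace $\cL$; your version is arguably a bit cleaner.
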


\begin{proof}
First, by continuity we may assume that $X$ belongs to $(\cB_{1}+\cdots + \cB_{m})\oplus \cK\subseteq \overline{(\cB_{1}+\cdots + \cB_{m}})\oplus \cK = \cB_{A^m}$. 
Such an $X$ can be written as 
$$X = X_1+\cdots + X_m + Y,$$
where $X_j\in \cB_j$ and $Y\in \cK$. Let $X'=X_1+\cdots + X_m\in \cL$. Then, by the above discussion we have $\cE_j(X) = \cE_j(X')$. Moreover, using $\langle X', Y\rangle_{\rho_{A^m}}=0$ and $\tr(\rho_{A^m}Y)$ we have
\begin{align*}
\Var(X) &= \Var(X' + Y) =\langle X'+Y - \tr(\rho_{A^m}X'), X'+Y - \tr(\rho_{A^m}X')\rangle_{\rho_{A^m}}\\
&=\langle X' - \tr(\rho_{A^m}X'), X' - \tr(\rho_{A^m}X')\rangle_{\rho_{A^m}} + \langle Y , Y \rangle_{\rho_{A^m}}\\
& = \Var(X') + \Var(Y)\\
&\geq \Var(X').
\end{align*}
Putting these together we find that if the inequality in~\eqref{eq:app-def-MCR} holds for $X'$, then it holds for $X$. 

We also note that $\Var(X) = \Var(X+ cI_{A^n})$ and $\Var(\cE_j(X)) = \Var(\cE_j(X) + cI_{A_j})=\Var(\cE_j(X+ cI_{A^n})) $. Thus, without loss of generality we can assume that $X$ in~\eqref{eq:app-def-MCR} satisfies $\tr(\rho_{A^m}X)=0$.
\end{proof}

For any $1\leq j\leq m$ let 
\begin{align}\label{eq:def-B0}
\cB_j^0 = \{X\in \cB_j:\, \tr(\rho_{A_j} X)=0\},
\end{align}
and $\cL^0 = \overline{\cB_1^0+\cdots+ \cB_m^0}$. Observe that $\cB_j^0$ is the orthogonal complement of the identity operator in $\cB_j$, and we have
$$\cB_{A^m} = \mathbb CI\oplus \cL_m^0\oplus \cK.$$
By the above proposition, in the definition of the MC~ribbon we may restrict to $X\in \cB_1^0+\cdots+ \cB_m^0$.


\begin{theorem}\label{thm:MCR-equiv}
For any $1\leq j\leq m$ let 
$\{F_{j,k}: k=1, 2, \dots\}\subset \cB_j^0,$
be an orthonormal basis for $\cB_j^0$, where $\cB_j^0$ is defined in~\eqref{eq:def-B0}. Define
$$g_{jk, j'k'}:= \langle F_{j, k}, F_{j', k'}\rangle_{\rho_{A^m}},$$
and let $\mathcal G=(g_{jk, j'k'})$ be the Gram matrix associated with the set $\bigcup_j \{F_{j,k}: k=1, 2, \dots\}$. Also for $(\theta_1, \dots, \theta_m)\in [0,1]^m$ let $\Theta$ be the diagonal matrix 
\begin{align}\label{eq:def-Theta}
\Theta_{jk, j'k'}=\delta_{j, j'}\delta_{k, k'}\theta_j.
\end{align}
Then $(\theta_1, \dots, \theta_m)\in \fS(A_1, \dots, A_m)$ if and only if 
$\Theta^{-1}\geq \cG,$
meaning that $\Theta^{-1}- \cG$ is positive semidefinite. Equivalently, $(\theta_1, \dots, \theta_m)\in \fS(A_1, \dots, A_m)$ if and only if for every $X_j\in \cB_j^0$, $j=1, \dots, m$, we have
\begin{align}\label{eq:MCR-Var-equiv}
\Var\big(\sum_{j=1}^m X_j\big) \leq \sum_{j=1}^m \frac{1}{\theta_j} \Var(X_j).
\end{align}
\end{theorem}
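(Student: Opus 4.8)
The plan is to reduce the definition of $\fS(A_1,\dots,A_m)$, which quantifies over all operators $X$ on the whole system, to a statement about a single positive operator on the subspace $\cL^0$ of mean-zero linear combinations of local operators, and then to recognize both conditions in the theorem as equivalent reformulations of the operator inequality $\sum_j\theta_j P_j\le I$ for suitable orthogonal projections $P_j$.

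First I would invoke Proposition~\ref{prop:L} to restrict, in the defining condition~\eqref{eq:app-def-MCR}, to operators $X\in\cL^0=\overline{\cB_1^0+\cdots+\cB_m^0}$ with $\tr(\rho_{A^m}X)=0$; by density it then suffices to take finite sums $X=\sum_j X_j$ with $X_j\in\cB_j^0$. Equip $\cL^0$ with $\langle\cdot,\cdot\rangle_{\rho_{A^m}}$. Using Lemma~\ref{lem:cE} I would check that $\cE_j$ restricted to $\cL^0$ is bounded (part~(iv)), self-adjoint (part~(iii)), idempotent (part~(ii)), maps into $\cB_j^0$ (since $\tr(\rho_{A_j}\cE_j(X))=\tr(\rho_{A^m}X)=0$ by part~(i)), and acts as the identity on $\cB_j^0$; hence $P_j:=\cE_j|_{\cL^0}$ is the orthogonal projection of $\cL^0$ onto the closed subspace $\cB_j^0$. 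For such $X$ one has $\Var(X)=\|X\|^2$ and $\Var(\cE_j(X))=\|P_jX\|^2$ (the mean terms vanish, and $\|\cE_j(X)\|_{\rho_{A_j}}=\|\cE_j(X)\|_{\rho_{A^m}}$ because $\cE_j(X)$ is supported on the $j$-th mode). Therefore $(\theta_1,\dots,\theta_m)\in\fS(A_1,\dots,A_m)$ if and only if $T:=\sum_j\theta_j P_j\le I$ on $\cL^0$.

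Next I would show $T\le I$ is equivalent to the variance inequality~\eqref{eq:MCR-Var-equiv}. Set $V=\bigoplus_{j=1}^m\cB_j^0$ and let $A\colon V\to\cL^0$ be $A((X_j)_j)=\sum_j X_j$, which is bounded; since each $X_j$ lies in the range of $P_j$, its adjoint is $A^\dagger Z=(P_jZ)_j$. With $D=\mathrm{diag}(\theta_1 I,\dots,\theta_m I)$ on $V$ (assume $\theta_j>0$; the faces $\theta_j=0$ follow from closedness of $\fS$), a direct computation gives $ADA^\dagger=\sum_j\theta_j P_j=T$. Now~\eqref{eq:MCR-Var-equiv} says precisely $\langle A((X_j)),A((X_j))\rangle\le\langle(X_j),D^{-1}(X_j)\rangle$ for all $(X_j)\in V$, i.e.\ $A^\dagger A\le D^{-1}$, i.e.\ $D^{1/2}A^\dagger A D^{1/2}\le I$; writing $B=AD^{1/2}$ this is $B^\dagger B\le I$, and since $\|B^\dagger B\|=\|BB^\dagger\|=\|B\|^2$ while a positive operator $S$ satisfies $S\le I$ iff $\|S\|\le1$, this is in turn equivalent to $BB^\dagger=ADA^\dagger=T\le I$. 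This closes the equivalence $(\theta)\in\fS\iff$~\eqref{eq:MCR-Var-equiv}.

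Finally, the Gram-matrix reformulation is routine. Expanding $X_j=\sum_k c_{j,k}F_{j,k}$, orthonormality of the $F_{j,k}$ in $\cB_j^0$ with respect to $\langle\cdot,\cdot\rangle_{\rho_{A_j}}$ gives $\Var(X_j)=\sum_k|c_{j,k}|^2$, while $\Var(\sum_j X_j)=\sum_{(j,k),(j',k')}\bar c_{j,k}c_{j',k'}\langle F_{j,k},F_{j',k'}\rangle_{\rho_{A^m}}=c^\dagger\cG c$. Hence~\eqref{eq:MCR-Var-equiv} reads $c^\dagger\cG c\le c^\dagger\Theta^{-1}c$ for all square-summable $c$, which is exactly $\cG\le\Theta^{-1}$. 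I expect the main obstacle to be the functional-analytic bookkeeping in the infinite-dimensional case — verifying that $\cE_j|_{\cL^0}$ is genuinely an orthogonal projection (boundedness and closed range), that the closures and direct sums are handled so the quantifier legitimately reduces to finite sums $\sum_j X_j$, and that $\cG$ (equivalently $A^\dagger A$) is a bona fide bounded positive operator so that $\cG\le\Theta^{-1}$ makes sense — rather than the algebraic identity $ADA^\dagger=T$, which is the conceptual heart but is short.
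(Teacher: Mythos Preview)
Your proposal is correct and takes a genuinely different route from the paper. The paper works directly in coordinates: it expands $X\in\cL^0$ in the basis $\{F_{j,k}\}$, computes $\Var(X)=\bfc^\dagger\cG\bfc$ and $\sum_j\theta_j\Var(\cE_j(X))=\bfc^\dagger\cG\Theta\cG\bfc$, obtains the condition $\cG\ge\cG\Theta\cG$, then invokes invertibility of the Gram matrix and operator monotonicity of $t\mapsto 1/t$ to reach $\Theta^{-1}\ge\cG$, with~\eqref{eq:MCR-Var-equiv} noted as an easy reformulation. You instead abstract the defining condition to the operator inequality $\sum_j\theta_j P_j\le I$ on $\cL^0$, introduce the bounded map $A\colon\bigoplus_j\cB_j^0\to\cL^0$ so that $T=ADA^\dagger$ and the variance inequality is $A^\dagger A\le D^{-1}$, and bridge the two via $\|B^\dagger B\|=\|BB^\dagger\|$ with $B=AD^{1/2}$. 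Your approach buys robustness in the infinite-dimensional setting: you avoid the step of inverting $\cG$ (which in infinite dimensions requires more than linear independence of the $F_{j,k}$) and the appeal to operator monotonicity, at the cost of setting up the abstract operators $A$, $P_j$. The paper's approach is more elementary and computational, and has the advantage of making the intermediate object $\cG\Theta\cG$ explicit, which is closer in spirit to how the Gaussian application in Theorem~\ref{thm:gaussian-multipratite} is subsequently phrased.
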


\begin{proof}
By assumption any $X\in \cL^0$ can be written as
$$X= \sum_{j=1}^m \sum_k c_{jk} F_{j, k}.$$
Since $\tr(\rho_{A^m}X)=0$ we have
$$\Var(X) =\langle Y, Y\rangle_{\rho_{A^m}} = \sum_{j, j'=1}^m \sum_{k, k'} \bar c_{jk} c_{j'k'}  \langle F_{j, k}, F_{j', k'}\rangle_{\rho_{A^m}} = \bfc^\dagger \cG \bfc,$$
where $\bfc$ is the vector of coefficients $c_{jk}$. Next, using $\cE_j(X)\in \cB_j^0$ we have
\begin{align*}
\Var(\cE_j(X)) &= \langle \cE_j(X), \cE_j(X)\rangle_{\rho_{A^m}} \\
&= \|\cE_j(X)\|^2\\
&= \sum_k |\langle F_{j, k}, \cE_j(X)\rangle_{\rho_{A_j}} |^2 \\
&= \sum_k |\langle F_{j, k}, X\rangle_{\rho_{A^m}} |^2 \\
&= \sum_k \Big|\sum_{j', k'} c_{j'k'} g_{jk, j'k'} \Big|^2. 
\end{align*}
Also, a simple computation shows that 
$$\sum_{j=1}^m \theta_j \Var(\cE_j(X))  =\sum_{j=1}^m \theta_j \Big(\sum_k \Big|\sum_{j', k'} c_{j'k'} g_{jk, j'k'} \Big|^2\Big) = \bfc^{\dagger} G^\dagger \Theta \cG\bfc.$$
Therefore, $(\theta_1, \dots, \theta_m)\in \fS(A_1,\dots, A_m)$ if and only if 
$$\bfc^\dagger \cG \bfc \geq \bfc^{\dagger} \cG \Theta \cG\bfc.$$
Equivalently, this means that $\cG\geq  \cG^\dagger \Theta \cG$. Now note that $\cG$ is the Gram matrix of a linearly independent set, and is invertible. Hence, multiplying both sides from left and right by $\cG^{-1}=(\cG^\dagger)^{-1}$, we find that $\cG\geq  \cG \Theta \cG$ is equivalent to $\cG^{-1}\geq \Theta$. Next, using the fact that $t\mapsto 1/t$ is operator monotone~\cite{Bhatia-P}, we obtain the equivalence of $(\theta_1, \dots, \theta_m)\in \fS(A_1, \dots, A_m)$ and
$$\Theta^{-1}\geq \cG.$$
This equation means that for any $\bfc$ we have $\bfc^\dagger \Theta^{-1}\bfc \geq \bfc^\dagger \cG\bfc$. Then, thinking of $\bfc$ as the vector of coefficients of the expansions of operators $X_j\in\cB_j^0$ in  bases $\{F_{j, k}:\, k=1, 2, \dots\}$ for $j=1, \dots, m$, and following similar calculations as above, it is not hard to verify that the above equation is equivalent to~\eqref{eq:MCR-Var-equiv}.

\end{proof}


Let us use this theorem to compute the MC~ribbon for the example of a product state $\rho_{A^m}=\rho_{A_1}\otimes \cdots \rho_{A_m}$. In this case, for basis operators $F_{j, k}, F_{j', k'}$ with $j\neq j'$ we have
$$\langle F_{j, k}, F_{j', k'}\rangle_{\rho_{A^m}} = \tr(\rho_{A_j}\otimes \rho_{A_{j'}}  F_{j, k}^{\dagger}\otimes F_{j', k'}  ) 
= \tr(\rho_{A_j} F_{j, k}^{\dagger})\tr(\rho_{A_{j'}} F_{j', k'}  )=0.$$
Thus, the Gram matrix $\cG$ is the identity matrix. In this case, for any $(\theta_1, \dots, \theta_m)\in [0,1]^m$ and its associated matrix $\Theta$ we have $\Theta^{-1}\geq \cG$. This means that $\fS(A_1, \dots, A_m) =[0,1]^m$. That is, for product states, the MC~ribbon is the largest possible set. 


The following theorem shows that MC~ribbon is really a generalization of the maximal correlation in the bipartite case.

\begin{theorem}\label{thm:MCR-bipartite}
In the case of $m=2$ we have
$$\fS(A_1, A_2) = \Big\{(\theta_1, \theta_2)\in [0,1]^2:\, \big(\frac{1}{\theta_1} -1 \big)\big(\frac{1}{\theta_2} -1 \big)\geq \mu^2(A_1, A_2))\Big\}.$$
\end{theorem}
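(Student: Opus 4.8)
The plan is to reduce the statement to Theorem~\ref{thm:MCR-equiv}, which already recasts membership in $\fS(A_1, A_2)$ as a variance inequality for local operators, and then to carry out a short optimization. By \eqref{eq:MCR-Var-equiv}, $(\theta_1, \theta_2)\in[0,1]^2$ lies in $\fS(A_1, A_2)$ if and only if $\Var(X_1+X_2)\leq \tfrac{1}{\theta_1}\Var(X_1)+\tfrac{1}{\theta_2}\Var(X_2)$ for all $X_1\in\cB_1^0$ and $X_2\in\cB_2^0$. Since operators in $\cB_j^0$ have zero expectation, one has $\Var(X_j)=\langle X_j,X_j\rangle_{\rho_{A_1A_2}}$ and $\Var(X_1+X_2)=\Var(X_1)+\Var(X_2)+2\Re\langle X_1, X_2\rangle_{\rho_{A_1A_2}}$, so the condition is equivalent to
$$2\Re\langle X_1, X_2\rangle_{\rho_{A_1A_2}}\leq a\,\Var(X_1)+b\,\Var(X_2),\qquad \forall\, X_1\in\cB_1^0,\ X_2\in\cB_2^0,$$
with $a=\tfrac{1}{\theta_1}-1\geq 0$ and $b=\tfrac{1}{\theta_2}-1\geq 0$.

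Next I would bring in the maximal correlation. Replacing $X_1$ by a unit-modulus phase multiple leaves $\Var(X_1)$ unchanged while making $\langle X_1, X_2\rangle_{\rho_{A_1A_2}}$ real and nonnegative, so the supremum of $\Re\langle X_1, X_2\rangle_{\rho_{A_1A_2}}$ over $X_1\in\cB_1^0,\ X_2\in\cB_2^0$ with $\Var(X_1)=\Var(X_2)=1$ equals $\mu(A_1, A_2)$; this is exactly the definition \eqref{eq:def-max-cor}, the operators in $\cB_j^0$ being precisely the zero-expectation ones appearing there. Homogeneity in the scalings $\Var(X_1)=s^2$, $\Var(X_2)=t^2$ then shows the displayed inequality holds for all local operators if and only if $\sup_{s,t\geq 0}\bigl(2st\,\mu(A_1, A_2)-as^2-bt^2\bigr)\leq 0$. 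For $a, b>0$, maximizing this concave quadratic in $s$ for fixed $t$ (at $s=t\,\mu(A_1,A_2)/a\geq 0$) leaves $t^2\bigl(\mu^2(A_1,A_2)/a-b\bigr)$, whose supremum over $t\geq 0$ is $\leq 0$ exactly when $\mu^2(A_1, A_2)\leq ab$, i.e. $\bigl(\tfrac{1}{\theta_1}-1\bigr)\bigl(\tfrac{1}{\theta_2}-1\bigr)\geq\mu^2(A_1, A_2)$. The remaining boundary cases $a=0$ or $b=0$ force $\mu(A_1, A_2)=0$, which is again precisely what the claimed inequality asserts in those cases, so the characterization holds on all of $[0,1]^2$.

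An equivalent and perhaps slicker route, which I might present instead, is to apply the Schur-complement criterion directly to the matrix inequality $\Theta^{-1}\geq\cG$ furnished by Theorem~\ref{thm:MCR-equiv}. For $m=2$ the Gram matrix has block form $\cG=\left(\begin{smallmatrix} I & M \\ M^\dagger & I\end{smallmatrix}\right)$, where $M=(\langle F_{1,k}, F_{2,k'}\rangle_{\rho_{A_1A_2}})_{k,k'}$ satisfies $\|M\|=\mu(A_1, A_2)$ by the very definition of quantum maximal correlation, while $\Theta^{-1}$ is block-diagonal with blocks $\theta_1^{-1}I$ and $\theta_2^{-1}I$; positive semidefiniteness of $\Theta^{-1}-\cG=\left(\begin{smallmatrix} aI & -M \\ -M^\dagger & bI\end{smallmatrix}\right)$ reduces, via the Schur complement with respect to the $aI$ block (plus the degenerate $a=0$ case), to $ab\,I\geq M^\dagger M$, i.e. $ab\geq\mu^2(A_1, A_2)$. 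The only point requiring a little care — but which is effectively handled already by Theorem~\ref{thm:MCR-equiv} and the discussion preceding it — is the legitimacy of these manipulations in the infinite-dimensional setting (convergence of the suprema, or of the Schur-complement argument, over $\cB_1^0$ and $\cB_2^0$); beyond that the argument is routine.
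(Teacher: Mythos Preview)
Your proposal is correct and follows essentially the same route as the paper: both start from the variance characterization \eqref{eq:MCR-Var-equiv} of Theorem~\ref{thm:MCR-equiv}, expand $\Var(X_1+X_2)$, and reduce to the inequality $2\Re\langle X_1,X_2\rangle_{\rho_{A_1A_2}}\leq a\Var(X_1)+b\Var(X_2)$ with $a=\tfrac{1}{\theta_1}-1$, $b=\tfrac{1}{\theta_2}-1$; the paper then packages the scaling step as positive semidefiniteness of the $2\times 2$ matrix $\left(\begin{smallmatrix} a\Var(X_1) & -\langle X_1,X_2\rangle \\ -\langle X_2,X_1\rangle & b\Var(X_2)\end{smallmatrix}\right)$, which is exactly your quadratic optimization in $s,t$ rephrased. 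Your alternative Schur-complement argument on $\Theta^{-1}-\cG$ is a clean variant the paper does not spell out, but it encodes the same computation at the level of the Gram matrix rather than of individual $X_1,X_2$.
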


\begin{proof}
By Theorem~\ref{thm:MCR-equiv}, $(\theta_1, \theta_2)\in [0,1]^2$ belongs to $\fS(A_1, A_2)$ iff for any $X_j\in \cB_j^0$, $j=1, 2$ we have
$$\Var(X_1+X_2)\leq \frac{1}{\theta_1}\Var(X_1) +\frac{1}{\theta_2}\Var(X_2).$$
We note that since $\tr(\rho_{A_1}X_1)=\tr(\rho_{A_2}X_2)=0$ we have 
$$\Var(X_1+X_2) = \langle X_1, X_1\rangle_{\rho_{A_1}}+\langle X_2, X_2\rangle_{\rho_{A_2}} + 2\text{Re}\langle X_2, X_1\rangle_{\rho_{A^2}}.$$
Then, $(\theta_1, \theta_2)\in \fS(A_1, A_2)$ iff for every $X_j\in \cB_j^0$, $j=1, 2$ we have
$$2\text{Re}\langle X_2, X_1\rangle_{\rho_{A^2}}\leq \big(\frac{1}{\theta_1}-1\big) \Var(X_1) +\big(\frac{1}{\theta_2}-1\big) \Var(X_2).$$
Scaling $X_1, X_2$ and replacing them with $c_1X_1, c_2X_2$ we obtain
$$2\text{Re} \big(c_1\bar c_2 \langle X_2, X_1\rangle_{\rho_{A^2}}\big)\leq  |c_1|^2 \big(\frac{1}{\theta_1}-1\big) \Var(X_1) + |c_2|^2\big(\frac{1}{\theta_2}-1\big) \Var(X_2), \qquad \forall c_1, c_2.$$
This is equivalent to 
\begin{align*}
\begin{pmatrix}
\big(\frac{1}{\theta_1}-1\big) \Var(X_1) & - \langle X_1, X_2\rangle_{\rho_{A^2}}\\
-\langle X_2, X_1\rangle_{\rho_{A^2}} & \big(\frac{1}{\theta_2}-1\big) \Var(X_2)
\end{pmatrix} \geq 0,
\end{align*}
being positive semidefinite, that itself is equivalent to 
$$|\langle X_1, X_2\rangle_{\rho_{A^2}}| \leq \sqrt{\big(\frac{1}{\theta_1}-1\big)\big( \frac{1}{\theta_2}-1\big) }\sqrt{\Var(X_1)\Var(X_2)}.$$
Therefore, $(\theta_1, \theta_2)\in \fS(A_1, A_2)$ iff
\begin{align*}
\max\frac{|\langle X_1, X_2\rangle_{\rho_{A_2}}|}{\sqrt{\Var(X_1)\Var(X_2)}}\leq   \sqrt{\big(\frac{1}{\theta_1}-1\big)\big( \frac{1}{\theta_2}-1\big) },
\end{align*}
where the maximum is taken over $X_1\in \cB_1^0$ and $X_2\in \cB_2^0$. We note that by definition the left hand side is equal to $\mu(A_1, A_2)$.

\end{proof}

Recall that for $\ket\psi_{A_1A_2} = \frac{1}{\sqrt 2} (\ket{00}+\ket{11})$ we have $\mu(A_1, A_2)=1$. Thus, by the above theorem, $(\theta_1, \theta_2)\in \fS(A_1, A_2)$ iff $(1/\theta_1-1)(1/\theta_2-1)\geq 1$ that is equivalent to $\theta_1+\theta_2\leq 1$.
Therefore, the MC~ribbon for $\ket{\psi}_{A_1A_2}$ is the smallest possible set.

For the next result we first need a lemma.

\begin{lemma}\label{lem:box-times}
Let $X, X', Y, Y'$ be positive semidefinite matrices with the block forms $X=(X_{ij}), X'=(X'_{ij}), Y=(Y_{ij}), Y'=(Y'_{ij})$.
Suppose that $X\geq Y$ and $X'\geq Y'$, meaning that $X-Y$ and $X'-Y'$ are positive semidefinite. Then,
$$X\boxtimes X'\geq Y\boxtimes Y',$$
where $X\boxtimes X' $ is a matrix whose $ij$-th block equals $X_{ij}\otimes X'_{ij}$, and $Y\boxtimes Y'$ is defined similarly.
\end{lemma}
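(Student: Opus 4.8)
The plan is to reduce the claim to the single statement that $\boxtimes$ preserves positive semidefiniteness, and then obtain monotonicity from a telescoping identity. First I would record that $\boxtimes$ is \emph{bilinear}: since the $(i,j)$-block of $X\boxtimes X'$ is the ordinary tensor product $X_{ij}\otimes X'_{ij}$, and $(\cdot)\otimes(\cdot)$ is bilinear, the map $(X,X')\mapsto X\boxtimes X'$ is linear in each argument on matrices sharing the fixed block partition. Consequently
\begin{align*}
X\boxtimes X' - Y\boxtimes Y' = (X-Y)\boxtimes X' + Y\boxtimes (X'-Y').
\end{align*}
Since $X'\geq Y'\geq 0$ gives $X'\geq 0$, and $Y\geq 0$ by hypothesis, each summand on the right has the form $P\boxtimes Q$ with $P,Q\geq 0$; so it suffices to show $P\boxtimes Q\geq 0$ whenever $P,Q\geq 0$.

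For that I would use a Gram/factorization argument. Write $P=B^\dagger B$ and partition the columns of $B$ according to the block structure, $B=[\,B_1\ \cdots\ B_n\,]$, so that $P_{ij}=B_i^\dagger B_j$; similarly $Q=C^\dagger C$ with $Q_{ij}=C_i^\dagger C_j$. Then, by the mixed-product property of the Kronecker product,
\begin{align*}
(P\boxtimes Q)_{ij} = P_{ij}\otimes Q_{ij} = (B_i^\dagger B_j)\otimes(C_i^\dagger C_j) = (B_i\otimes C_i)^\dagger (B_j\otimes C_j),
\end{align*}
so $P\boxtimes Q=D^\dagger D$ with $D=[\,B_1\otimes C_1\ \cdots\ B_n\otimes C_n\,]$, hence $P\boxtimes Q\geq 0$. (Equivalently, one can note that $P\boxtimes Q$ is the compression $V^\dagger(P\otimes Q)V$ of the ordinary Kronecker product $P\otimes Q\geq 0$, where $V$ is the isometry that identifies the $i$-th block with the $(i,i)$ block of $P\otimes Q$; this again forces $P\boxtimes Q\geq 0$.)

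Combining the steps, $(X-Y)\boxtimes X'\geq 0$ and $Y\boxtimes(X'-Y')\geq 0$, and the sum of positive semidefinite matrices is positive semidefinite, so $X\boxtimes X'-Y\boxtimes Y'\geq 0$. I do not anticipate a genuine obstacle here; the only point requiring care is the bookkeeping in the block-column factorization — making sure the partition of $B$ matches the row/column partition defining the blocks $X_{ij}$ — together with the observation that the telescoping identity is legitimate exactly because $\boxtimes$ is bilinear on matrices with a common fixed block structure.
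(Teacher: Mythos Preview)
Your proof is correct and follows essentially the same route as the paper. The paper writes the telescoping identity at the level of the ordinary Kronecker product, $X\otimes X'-Y\otimes Y'=(X-Y)\otimes X'+Y\otimes(X'-Y')\geq 0$, and then observes that $X\boxtimes X'-Y\boxtimes Y'$ is a principal submatrix of $X\otimes X'-Y\otimes Y'$; you telescope directly in $\boxtimes$ and prove $P\boxtimes Q\geq 0$ by a Gram factorization (and note the equivalent compression argument, which is exactly the paper's principal-submatrix step).
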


\begin{proof}
Since $X-Y$ and $X'-Y'$ as well as $X', Y$ are positive semidefinite, $(X-Y)\otimes X'$ and $Y\otimes (X'-Y')$ are positive semidefinite. This means that $(X-Y)\otimes X'+Y\otimes (X'-Y')= X\otimes X'-Y\otimes Y'$ is positive semidefinite. Now, we note that $X\boxtimes X'-Y\boxtimes Y'$ is a principal submatrix of $X\otimes X'-Y\otimes Y'$, so is positive semidefinite. 
\end{proof}

We now prove the main properties of the MC~ribbon, namely the data processing inequality and the tensorization.

\begin{theorem}\label{thm:MCR-monotonicity-ternsorization}
\begin{itemize}
\item  \textbf{(Data processing)} Suppose that $\Phi_{A_j\to A'_j}$ is a quantum operator acting on the subsystem $A_i$. Then for $\rho_{A_1, \dots, A_m}$ and $\sigma_{A'_1, \dots, A'_m}$ we have
$$\fS(A_1, \dots, A_m) \subseteq \fS(A'_1, \dots, A'_m).$$

\item  \textbf{(Tensorization)} For any $\rho_{A_1, \dots, A_m}, \sigma_{A'_1, \dots, A'_m}$ we have 
$$\fS(A_1A'_1, \dots, A_mA'_m) = \fS(A_1, \dots, A_m)\cap \fS(A'_1, \dots, A'_m),$$
where the left hand side is the MC~ribbon of the state $\rho_{A_1, \dots, A_m}\otimes \sigma_{A'_1, \dots, A'_m}$ considered as an $m$-partite state. 
\end{itemize}
\end{theorem}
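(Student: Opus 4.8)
The plan is to work throughout with the equivalent description of the MC~ribbon provided by Theorem~\ref{thm:MCR-equiv}: $(\theta_1,\dots,\theta_m)\in\fS(A_1,\dots,A_m)$ if and only if~\eqref{eq:MCR-Var-equiv} holds for all local operators $X_j\in\cB_j^0$, equivalently if and only if $\Theta^{-1}\ge\cG$, where $\cG$ is the Gram matrix of orthonormal bases of the spaces $\cB_j^0$ with respect to $\langle\cdot,\cdot\rangle_{\rho_{A^m}}$ and $\Theta$ is given by~\eqref{eq:def-Theta}. The quantifier over \emph{all} $X$ in~\eqref{eq:def-MCR} is unwieldy to use directly, whereas both properties become transparent in this reformulation. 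Write $\rho=\rho_{A_1,\dots,A_m}$, $\sigma=\sigma_{A'_1,\dots,A'_m}$, $\tau=\rho\otimes\sigma$, and let $\Var_\rho,\Var_\sigma$ denote variances taken in $\rho,\sigma$.

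\emph{Data processing.} Let $\Phi^{(j)\dagger}$ be the Heisenberg-picture adjoint of $\Phi^{(j)}_{A_j\to A'_j}$, so $\Phi^{(j)\dagger}$ is completely positive and unital, $\tr(\Phi^{(j)\dagger}(B)A)=\tr(B\Phi^{(j)}(A))$, and $\Phi^\dagger=\bigotimes_j\Phi^{(j)\dagger}$ satisfies $\tr(\rho_{A^m}\Phi^\dagger(M))=\tr(\sigma M)$ since $\sigma=(\bigotimes_j\Phi^{(j)})(\rho_{A^m})$. Given arbitrary $X'_j\in\cB^0_{A'_j}$, set $X_j:=\Phi^{(j)\dagger}(X'_j)$. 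Unitality gives $\tr(\rho_{A_j}X_j)=\tr(\sigma_{A'_j}X'_j)=0$, so $X_j\in\cB^0_{A_j}$, and the Kadison--Schwarz inequality (valid as $\Phi^{(j)\dagger}$ is $2$-positive and unital) gives $\Var_\rho(X_j)=\tr(\rho_{A_j}X_j^\dagger X_j)\le\tr(\rho_{A_j}\Phi^{(j)\dagger}((X'_j)^\dagger X'_j))=\Var_\sigma(X'_j)$. Put $v_j:=\Var_\sigma(X'_j)$ and $w_j:=\Var_\rho(X_j)\le v_j$. Moreover, for $i\ne j$ one has the \emph{exact} identity $\langle X_i,X_j\rangle_{\rho_{A^m}}=\langle X'_i,X'_j\rangle_\sigma$, because $\Phi^\dagger$ sends the product operator acting as $(X'_i)^\dagger$ on $A'_i$, $X'_j$ on $A'_j$ and the identity on the remaining factors, to the product operator acting as $X_i^\dagger$ on $A_i$, $X_j$ on $A_j$ and the identity elsewhere (using $\Phi^{(k)\dagger}(I)=I$), and one then transports the expectation across $\sigma=\Phi(\rho_{A^m})$. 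Consequently
\begin{align*}
\Var_\sigma\Big(\sum_j X'_j\Big)
&=\Var_\rho\Big(\sum_j X_j\Big)+\sum_j(v_j-w_j)\\
&\le\sum_j\frac{w_j}{\theta_j}+\sum_j(v_j-w_j)
=\sum_j\Big(v_j+\big(\tfrac1{\theta_j}-1\big)w_j\Big)
\le\sum_j\frac{v_j}{\theta_j},
\end{align*}
the first inequality being~\eqref{eq:MCR-Var-equiv} for $\rho$ and the last using $\tfrac1{\theta_j}\ge1$ and $w_j\le v_j$. By Theorem~\ref{thm:MCR-equiv} this shows $(\theta_1,\dots,\theta_m)\in\fS(A'_1,\dots,A'_m)$.

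\emph{Tensorization.} The inclusion $\fS(A_1A'_1,\dots,A_mA'_m)\subseteq\fS(A_1,\dots,A_m)\cap\fS(A'_1,\dots,A'_m)$ follows at once from the data-processing part, applied to the local operations $\mathrm{id}_{A_j}\otimes\tr_{A'_j}$ and $\tr_{A_j}\otimes\mathrm{id}_{A'_j}$, which take $\tau$ to $\rho$ and to $\sigma$ respectively. For the reverse inclusion I would compute, for the state $\tau$, the Gram matrix of Theorem~\ref{thm:MCR-equiv}. If $\{F_{j,k}\}_{k\ge1}$ and $\{F'_{j,l}\}_{l\ge1}$ are orthonormal bases of $\cB^0_{A_j}$ and $\cB^0_{A'_j}$, then $\{F_{j,k}\otimes I_{A'_j}\}_k\cup\{I_{A_j}\otimes F'_{j,l}\}_l\cup\{F_{j,k}\otimes F'_{j,l}\}_{k,l}$ is an orthonormal basis of $\cB^0_{A_jA'_j}$. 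Because $\tau$ is a product across the $A|A'$ cut and the $F$'s and $F'$'s are centered, the Gram matrix $\widehat\cG$ of this basis with respect to $\langle\cdot,\cdot\rangle_\tau$ is block-diagonal with respect to these three sectors, equal to $\cG\oplus\cG'\oplus(\cG\boxtimes\cG')$, where $\cG,\cG'$ are the Gram matrices attached to $\rho,\sigma$ and $\boxtimes$ is the operation of Lemma~\ref{lem:box-times} with the outer blocking indexed by $j$; correspondingly the matrix $\Theta^{-1}$ for $\tau$ splits as $\Theta^{-1}_\rho\oplus\Theta^{-1}_\sigma\oplus D$, with $D$ diagonal carrying the entry $1/\theta_j$ on all indices belonging to party $j$. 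Hence $(\theta_1,\dots,\theta_m)$ lies in $\fS(A_1A'_1,\dots,A_mA'_m)$ iff $\Theta^{-1}_\rho\ge\cG$, $\Theta^{-1}_\sigma\ge\cG'$ and $D\ge\cG\boxtimes\cG'$; the first two conditions are exactly $(\theta)\in\fS(A_1,\dots,A_m)$ and $(\theta)\in\fS(A'_1,\dots,A'_m)$, so only the third remains. For it, observe that the $j$-th diagonal block of $\cG'$ equals the identity, since the $F'_{j,l}$ are orthonormal and supported on $A'_j$, so $\langle F'_{j,l},F'_{j,l'}\rangle_\sigma=\delta_{l,l'}$; a short computation then gives $\Theta^{-1}_\rho\boxtimes\cG'=D$. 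Since $\cG\le\Theta^{-1}_\rho$ and $\cG,\Theta^{-1}_\rho,\cG'$ are all positive semidefinite, Lemma~\ref{lem:box-times} (with $X=\Theta^{-1}_\rho$, $Y=\cG$, $X'=Y'=\cG'$) yields $\cG\boxtimes\cG'\le\Theta^{-1}_\rho\boxtimes\cG'=D$, as needed.

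I expect the delicate point to be the reverse inclusion in tensorization: first, verifying that the $\tau$-Gram matrix is exactly the block-diagonal $\cG\oplus\cG'\oplus(\cG\boxtimes\cG')$, which requires tracking carefully how centering and the $A|A'$ product structure interact across the three sectors; and second, realizing that Lemma~\ref{lem:box-times} must be applied to the pair $(\Theta^{-1}_\rho,\cG')$ rather than $(\Theta^{-1}_\rho,\Theta^{-1}_\sigma)$ --- the latter only gives $\mathrm{diag}(1/\theta_j^2)\ge\cG\boxtimes\cG'$, which is too weak --- together with the observation that the diagonal blocks of $\cG'$ are identities, which is precisely what turns $\Theta^{-1}_\rho\boxtimes\cG'$ into $D$. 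The data-processing half is routine once one commits to the criterion~\eqref{eq:MCR-Var-equiv}; its only subtlety is that Kadison--Schwarz produces equality of the cross inner products $\langle X_i,X_j\rangle$ but only an inequality for the diagonal terms $i=j$, which is absorbed using $\tfrac1{\theta_j}\ge1$.
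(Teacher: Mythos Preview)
Your proof is correct. The tensorization half is essentially identical to the paper's argument: both compute the Gram matrix for $\tau=\rho\otimes\sigma$ using the product basis $\{F_{j,k}\otimes I\}\cup\{I\otimes F'_{j,l}\}\cup\{F_{j,k}\otimes F'_{j,l}\}$, obtain the block-diagonal form $\cG\oplus\cG'\oplus(\cG\boxtimes\cG')$, and then apply Lemma~\ref{lem:box-times} to the pair $(\Theta^{-1},\cG')$ (not $(\Theta^{-1},\Theta'^{-1})$), using that the diagonal blocks $\cG'_{jj}$ are identities. You even flag this last point explicitly; the paper does the same.

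The data-processing half takes a genuinely different route. The paper works with the \emph{original} definition~\eqref{eq:def-MCR} and Stinespring: it factors any channel as an isometry followed by a partial trace, notes that isometries leave the MC~ribbon unchanged, and then observes that for partial trace the inequality $\Var(X)\ge\sum_j\theta_j\Var(\cE_j(X))$ for the larger system specializes to the smaller system by restricting $X$ to act trivially on the traced-out registers. You instead work with the dual criterion~\eqref{eq:MCR-Var-equiv}, pull back the local operators via the Heisenberg adjoints $\Phi^{(j)\dagger}$, use Kadison--Schwarz to control the diagonal variances, and observe that the off-diagonal inner products are preserved exactly. Your route is a bit more hands-on but avoids invoking Stinespring and handles a general channel in one stroke; the paper's route is shorter once the Stinespring reduction is granted. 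Both are valid.
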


\begin{proof}
(Data processing) Any quantum operation is a combination of an isometry and a partial trace. Clearly, local isometries do not change the MC~ribbon. Thus, it suffices to prove data processing under partial traces. To this end, we need to show that for a state $\rho_{A_1B_1, \dots, A_mB_m}$ we have
$$\fS(A_1B_1, \dots, A_mB_m) \subseteq \fS(A_1, \dots, A_m).$$
This inclusion is immediate once we note that in the definition of the MC~ribbon in~\eqref{eq:app-def-MCR}, we may restrict $X_{A_1B_1, \dots, A_mB_m}$ to act non-trivially only on the subsystems $(A_1, \dots, A_m)$.

\medskip
\noindent (Tensorization)
Let $\{F_{j,k}: k=0, 1, \dots\}$ be orthonormal bases for $\cB_{A_j}$ with $F_{j, 0}=I_{A_j}$.  We note that $\{F_{j,k}: k=1, 2, \dots\}$ is an orthonormal basis for $\cB_{A_j}^0 = \{X\in \cB_{A_j}:\, \langle I_{A_j}, X\rangle_{\rho_{A_j}} =0\}$.  
Then,  the Gram matrix $\cG$ of the set $\bigcup_j \{F_{j,k}: k=1, 2, \dots\}$ can be decomposed into blocks $\cG=(\cG_{jj'})$ where the block $\cG_{jj'}$ consists of the inner products of elements of $\{F_{j,k}: k=1, 2, \dots\}$ and $\{F_{j',k}: k=1, 2, \dots\}$.
Similarly, letting $\{F'_{j,k'}: k'=0, 1, \dots\}$ be an orthonormal basis for $\cB_{A'_j}$ with $F_{j, 0}=I_{A'_j}$, we can consider the associated Gram matrix $\cG'=(\cG'_{jj'})$ as above. 

We note that the space of operators acting on $A_jA'_{j}$ is equal to $\cB_{A_jA'_j} = \cB_{A_j}\otimes \cB_{A'_j}$. Then, 
$\{F_{j,k}\otimes F'_{j, k'}: k, k'=0, 1, \dots \}$ is an orthonormal basis for $\cB_{A_jA'_j}$. Moreover, since $F_{j, 0}\otimes F'_{j, 0} = I_{A_jA'_j}$, the set $\{F_{j,k}\otimes F'_{j, k'}: (k, k')\neq (0,0) \}$ is an orthonormal basis for $\cB^0_{A_jA'_j}$.
Thus, following Theorem~\ref{thm:MCR-equiv}, we need to compute the Gram matrix of the set $\bigcup_j \{F_{j,k}\otimes F'_{j, k'}: (k, k')\neq (0,0) \}$. This set can be decomposed into $\bigcup_j \{F_{j,k}\otimes F'_{j, k'}: (k, k')\neq (0,0) \} = \mathcal F\cup \mathcal F'\cup  \widehat{ \mathcal{F}}$, where 
\begin{align*}
\mathcal F&=  \bigcup_j  \{F_{j,k}\otimes I_{A'_j}: k=1, 2,\dots \},\\
\mathcal F' &=  \bigcup_j \{I_{A_j}\otimes F'_{j, k'}: k'=1, 2,\dots \},\\
\widehat{\mathcal F}& = \bigcup_j\{F_{j,k}\otimes F'_{j, k'}: k, k'=1, 2,\dots \}.
\end{align*}
Observe that, first, these three sets are orthogonal to each other. Second, the Gram matrix of $\mathcal F$ equals $\cG$, the Gram matrix of $\bigcup_j  \{F_{j,k}: k=1, 2,\dots \}$, and similarly the Gram matrix of $\mathcal F'$ equals $\cG'$. Third, a straightforward computation shows that the Gram matrix of $\widehat{\mathcal F}$ is equal to $\cG\boxtimes \cG'$, defined in Lemma~\ref{lem:box-times}. Putting these together, we find that the Gram matrix of the union is given by
\begin{align*}
 \begin{pmatrix}
\cG & 0 & 0\\
0 & \cG' & 0\\
0 & 0& \cG \boxtimes \cG'
\end{pmatrix}.
\end{align*}
Now, suppose that $(\theta_1, \dots, \theta_m)\in \fS(A_1, \dots, A_m)\cup \fS(A'_1, \dots, A'_m)$. Then, by Theorem~\ref{thm:MCR-equiv} we have $\Theta^{-1}\geq \cG$ and $\Theta'^{-1}\geq \cG'$, where $\Theta'$ is defined similarly to $\Theta$ but probably with a different size. Thus, by Lemma~\ref{lem:box-times} we have
\begin{align}\label{eq:G-G'-Upsilon}
\begin{pmatrix}
\Theta^{-1} & 0 & 0\\
0 & \Theta'^{-1} & 0\\
0 & 0& \Theta^{-1} \boxtimes \cG'
\end{pmatrix}
\geq  \begin{pmatrix}
\cG & 0 & 0\\
0 & \cG' & 0\\
0 & 0& \cG \boxtimes \cG'
\end{pmatrix}.
\end{align}
Next, we note that $\Theta^{-1}$ is diagonal, so 
$$\Theta^{-1} \boxtimes \cG' = \Theta^{-1}\boxtimes  \text{diag}(\cG'_{11}, \dots, \cG'_{mm}),$$
where $ \text{diag}(\cG'_{11}, \dots, \cG'_{mm})$ is a the block-diagonal matrix with blocks $\cG'_{jj}$ on the diagonal. On other hand, recall that $\cG'_{jj}$ is the Gram matrix of the set $\{F'_{j, k'}:\, k'=1, 2, \dots\}$, that is orthonormal. Thus, $\cG'_{jj}$ is the identity matrix. Therefore, 
$$\Theta^{-1}\boxtimes \cG' = \Theta^{-1}\boxtimes \text{diag}(I^{(1)}, \dots, I^{(m)}),$$ 
where $I^{(j)}$ is the identity matrix of an appropriate size. Then,  using Theorem~\ref{thm:MCR-equiv}, equation~\eqref{eq:G-G'-Upsilon} implies that $(\theta_1, \dots, \theta_m) \in \fS(A_1A'_1, \dots, A_mA'_m)$ and
$$\fS(A_1, \dots, A_m)\cap \fS(A'_1, \dots, A'_m)\subseteq \fS(A_1A'_1, \dots, A_mA'_m).$$
Inclusion in the other direction is a consequence of the data processing property proven in the first part.

\end{proof}

\section{Maximal Correlation Ribbon for Gaussian States}\label{app:Gaussian}

In this appendix we compute the MC~ribbon for multipartite Gaussian states. We show that, similar to the bipartite case, in order to compute the MC~ribbon via~\eqref{eq:MCR-Var-equiv}, it suffices to consider only observables $X_1, \dots, X_m$ that are linear in the position and momentum operators. To prove this result, we follow similar ideas used in Section~\ref{sec:bipart-Gaussian}.

Let $\rho_{A^m}=\rho_{A_1,\dots,A_m}$ be an $m$-mode Gaussian state. We note that, as is clear from its definition, the MC~ribbon does not change under local unitaries. Thus, by Theorem~\ref{prop:standard-form} we may assume that $\bfd(\rho_{A^m})=0$ and 
\begin{align}\label{eq:cov-multipartite}
\bfgamma=\bfgamma(\rho_{A^m}) = \begin{pmatrix}
\lambda_1 I & \bfnu_{12} & \cdots & \bfnu_{1m}\\
\bfnu_{21} & \lambda_2 I & \cdots & \bfnu_{2m}\\
\vdots  & \vdots & \ddots  & \vdots\\
\bfnu_{m1} & \bfnu_{m2} & \cdots  & \lambda_m I
\end{pmatrix},
\end{align}
where $\bfgamma(\rho_{A_j}) = \lambda_j I$ and $\bfnu_{jj'} =\bfnu_{j'j}^\top$.

\begin{theorem}\label{thm:gaussian-multipratite}
Let $\rho_{A^m}$ be an $m$-mode Gaussian state with first moment $\bfd(\rho^m)=0$ and covariance matrix given by~\eqref{eq:cov-multipartite}.  Let 
$$\Lambda ={\rm diag}\Big(\sqrt{\frac{\lambda_1+1}{2}}, \sqrt{\frac{\lambda_1-1}{2}}, \dots, \sqrt{\frac{\lambda_m+1}{2}}, \sqrt{\frac{\lambda_m-1}{2}}\Big),$$ 
and $\Upsilon_m=\Upsilon\oplus\cdots \oplus \Upsilon,$
where $\Upsilon$ is given in~\eqref{eq:def-M}. Then, $(\theta_1, \dots, \theta_m)\in [0,1]^m$ belongs to $\fS(A_1, \dots, A_m)$ if and only if 
\begin{align}\label{eq:MCR-G-lambda}
\frac 12 \Lambda^{-1} \bar \Upsilon_m (\bfgamma + i\Omega_m) \Upsilon_m^{\top} \Lambda^{-1}\leq \Theta^{-1}\otimes I_2,
\end{align}
where $\Theta={\rm diag}(  \theta_1,  \dots, \theta_m)$ and $I_2$ is the $2\times 2$ identity matrix.
In particular, to compute the MC~ribbon of $\rho_{A^m}$, it suffices to consider $X_1, \dots, X_m$ in~\eqref{eq:MCR-Var-equiv} that are linear in terms of 
phase-space quadrature operators. 
\end{theorem}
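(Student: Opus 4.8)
The plan is to run the argument of Section~\ref{sec:bipart-Gaussian} and feed its output into the Gram-matrix criterion of Theorem~\ref{thm:MCR-equiv}. Since the MC~ribbon is unchanged by local Gaussian unitaries, by Proposition~\ref{prop:standard-form} I may assume $\bfd(\rho_{A^m})=0$ and that $\bfgamma=\bfgamma(\rho_{A^m})$ has the form~\eqref{eq:cov-multipartite} with $\bfgamma(\rho_{A_j})=\lambda_j I$. By Theorem~\ref{thm:orth-basis} the operators $\{H^{A_j}_{k,\ell}:k+\ell\geq 1\}$ form an orthonormal basis of $\cB_j^0$, so by Theorem~\ref{thm:MCR-equiv} the statement $(\theta_1,\dots,\theta_m)\in\fS(A_1,\dots,A_m)$ is equivalent to $\Theta^{-1}\geq\cG$, where $\cG$ is the Gram matrix with entries $\langle H^{A_j}_{k,\ell},H^{A_{j'}}_{k',\ell'}\rangle_{\rho_{A^m}}$ and $\Theta$ is the corresponding diagonal weight matrix.

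Next I would compute these inner products exactly as in Section~\ref{sec:bipart-Gaussian}: expand $C^{A_j}_\bfr,C^{A_{j'}}_\bfs$ through~\eqref{eq:def-C-r} and~\eqref{eq:def-H-k-ell}, use that they are local and commuting, and evaluate $\langle C^{A_j}_\bfr,C^{A_{j'}}_\bfs\rangle_{\rho_{A^m}}$ with the Gaussian Wigner function~\eqref{eq:gaussian-W}. Matching coefficients gives three facts. First, $\langle H^{A_j}_{k,\ell},H^{A_{j'}}_{k',\ell'}\rangle_{\rho_{A^m}}=0$ unless $k+\ell=k'+\ell'$, so $\cG=\bigoplus_{t\geq 1}\cG^{(t)}$ splits over the total degree $t$, and correspondingly $\Theta^{-1}\geq\cG$ is equivalent to $\Theta^{-1}\otimes I_{t+1}\geq\cG^{(t)}$ for all $t\geq 1$. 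Second, the degree-one block is $\cG^{(1)}=\tfrac12\Lambda^{-1}\bar\Upsilon_m(\bfgamma+i\Omega_m)\Upsilon_m^\top\Lambda^{-1}$: the diagonal $2\times 2$ blocks collapse to $I_2$ because $\bar\Upsilon\Upsilon^\top=I_2$ and $\bar\Upsilon\Omega_1\Upsilon^\top=\mathrm{diag}(-i,i)$ give $\bar\Upsilon(\lambda_jI+i\Omega_1)\Upsilon^\top=\mathrm{diag}(\lambda_j+1,\lambda_j-1)$, which is rescaled to $I_2$ by $\Lambda^{-1}$; the off-diagonal $(j,j')$ blocks reproduce the matrix $\cQ^{(1)}$ of~\eqref{eq:cQ-1} built from $\bfnu_{jj'}$. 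Thus the $t=1$ case of $\Theta^{-1}\otimes I_{t+1}\geq\cG^{(t)}$ is precisely~\eqref{eq:MCR-G-lambda}. Third, for every $t$ the off-diagonal blocks satisfy $\cG^{(t)}_{jj'}=S(\cG^{(1)}_{jj'})^{\otimes t}S^\dagger$ with the matrix $S$ of~\eqref{eq:def-S-matrix} (this is the bipartite identity of Theorem~\ref{thm:main-mc} applied to the pair $j\neq j'$), while $\cG^{(t)}_{jj}=I_{t+1}=SS^\dagger$. Hence $\cG^{(t)}=D_S\,(\cG^{(1)})^{\boxtimes t}\,D_S^\dagger$ with $D_S=S\oplus\cdots\oplus S$ and $\boxtimes$ the block-tensor-power of Lemma~\ref{lem:box-times}.

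It then remains to show that the $t=1$ inequality forces all the others. Set $P=\Theta^{-1}\otimes I_2$, so the hypothesis reads $0\leq\cG^{(1)}\leq P$. Applying Lemma~\ref{lem:box-times} with $\cG^{(1)}\leq P$ in a single slot and the trivial inequality $(\cG^{(1)})^{\boxtimes(t-1)}\leq(\cG^{(1)})^{\boxtimes(t-1)}$ in the remaining ones gives $(\cG^{(1)})^{\boxtimes t}\leq P\boxtimes(\cG^{(1)})^{\boxtimes(t-1)}$. But $P$ is block-diagonal in the mode index and $\cG^{(1)}$ has identity diagonal blocks, so the right-hand side collapses to $\Theta^{-1}\otimes I_{2^t}$; conjugating by $D_S$ and using $SS^\dagger=I_{t+1}$ yields $\cG^{(t)}\leq\Theta^{-1}\otimes I_{t+1}$. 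This establishes the equivalence of $(\theta_1,\dots,\theta_m)\in\fS(A_1,\dots,A_m)$ with~\eqref{eq:MCR-G-lambda}. Finally, since the degree-$\leq 1$ part of $\cB_j^0$ is exactly the set of complex-linear combinations of $\bfx_j,\bfp_j$ (all of which have vanishing mean, as $\bfd=0$), the $t=1$ instance of~\eqref{eq:MCR-Var-equiv} is precisely the variance inequality restricted to operators linear in the quadratures; so it suffices to test~\eqref{eq:MCR-Var-equiv} on such operators.

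I expect the bulk of the work---though not the conceptual difficulty---to lie in the coefficient-matching that produces $\cG^{(t)}_{jj'}=S(\cG^{(1)}_{jj'})^{\otimes t}S^\dagger$ and the explicit form of $\cG^{(1)}$; this is the multipartite transcription of the computation already carried out for Theorem~\ref{thm:main-mc}, and one must be careful to apply it only to distinct pairs $j\neq j'$ and to treat the diagonal blocks separately via orthonormality. The short but genuinely new ingredient is the one-slot application of Lemma~\ref{lem:box-times}, which exploits the block-diagonality of $\Theta^{-1}\otimes I_2$ to avoid the lossy estimate $(\cG^{(1)})^{\boxtimes t}\leq P^{\boxtimes t}=\Theta^{-t}\otimes I_{2^t}$. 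Cases where some $\theta_j\in\{0,1\}$ are handled by the usual conventions or a limiting argument.
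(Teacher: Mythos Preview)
Your proposal is correct and follows essentially the same route as the paper: both use the orthonormal bases $\{H^{A_j}_{k,\ell}\}$ to reduce to the Gram-matrix criterion of Theorem~\ref{thm:MCR-equiv}, obtain the block-diagonal decomposition $\cG=\bigoplus_t\cG^{(t)}$ by degree, identify $\cG^{(1)}$ with the matrix in~\eqref{eq:MCR-G-lambda}, prove the structural identity $\cG^{(t)}=D_S(\cG^{(1)})^{\boxtimes t}D_S^\dagger$ (with your $D_S$ equal to the paper's $S_m=I_m\otimes S$), and then apply Lemma~\ref{lem:box-times} in a single slot together with the fact that $\Theta^{-1}\otimes I_2$ is block-diagonal and $\cG^{(1)}$ has identity diagonal blocks. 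Your emphasis on treating the diagonal blocks $j=j'$ separately via orthonormality and on the one-slot use of Lemma~\ref{lem:box-times} matches the paper's argument exactly.
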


\begin{proof}
We follow similar steps as in the proof of Theorem~\ref{thm:main-mc}. First, we note that the set $\{H_{k, \ell}^{(j)}:\, k, \ell\geq 0\}$ defined via~\eqref{eq:def-H-k-ell}, for $\lambda=\lambda_j$, forms an orthonormal basis for $\cB_j$, the space of operators acting on $A_j$. Moreover, we have $H_{0,0}^{(j)} = I_{A_j}$, so $\{H_{k, \ell}^{(j)}:\, (k, \ell)\neq 0\}$ is an orthonormal basis for $\cB_{A_j}^0$. Thus, to apply Theorem~\ref{thm:MCR-equiv} we need to compute the Gram matrix of the set $\bigcup_j \{H_{k, \ell}^{(j)}:\, (k, \ell)\neq 0\}$. To this end, we decompose this set in terms of the total degrees:
$$\bigcup_j \{H_{k, \ell}^{(j)}:\, (k, \ell)\neq 0\} = \bigcup_{t=1}^\infty \Big(\bigcup_j \{H_{k, \ell}^{(j)}:\, k+\ell=t\}\Big).$$
We note that by~\eqref{eq:inner-prod-sum-zero}, basis operators with different degrees are orthogonal to each other. Then, the associated Gram matrix takes the form 
$$\text{diag}\big(\cG^{(1)}, \cG^{(2)}, \dots),$$ 
where $\cG^{(t)}$ is the Gram matrix of $\bigcup_j \{H_{k, \ell}^{(j)}:\, k+\ell=t\}$. As computed in~\eqref{eq:cQ-1}, the inner products of elements of $\{H_{k, \ell}^{(j)}:\, k+\ell=1\}$ and $\{H_{k, \ell}^{(j')}:\, k+\ell=1\}$ equals
$$
\cG^{(1)}_{jj'} = \frac 12 \begin{pmatrix}
\zeta_0^{-1}(\lambda_j) & 0\\
0 & \zeta_1^{-1}(\lambda_j)
\end{pmatrix}  \bar \Upsilon \bfnu_{jj'} \Upsilon^\top \begin{pmatrix}
\zeta_0^{-1}(\lambda_{j'}) & 0\\
0 & \zeta_1^{-1}(\lambda_{j'})
\end{pmatrix},
$$
where $\zeta_{0}(\lambda), \zeta_1(\lambda)$ are defined in~\eqref{eq:def-zeta}. On the other hand, a simple computation shows that 
$$\frac 12 \begin{pmatrix}
\zeta_0^{-1}(\lambda_j) & 0\\
0 & \zeta_1^{-1}(\lambda_j)
\end{pmatrix}  \bar \Upsilon (\lambda_j I + i\Omega) \Upsilon^\top \begin{pmatrix}
\zeta_0^{-1}(\lambda_{j}) & 0\\
0 & \zeta_1^{-1}(\lambda_{j})
\end{pmatrix}=I_2,
$$
is the identity matrix, i.e., the Gram matrix of $\{H_{k, \ell}^{(j)}:\, k+\ell=1\}$. Putting these together, we conclude that  
$$\cG^{(1)}=\frac 12 \Lambda^{-1} \bar \Upsilon_m (\bfgamma + i\Omega_m) \Upsilon_m^{\top} \Lambda^{-1},$$
which is the left hand side of~\eqref{eq:MCR-G-lambda}. Thus, the statement of the theorem says that $(\theta_1, \dots, \theta_m)\in \fS(A_1, \dots, A_m)$ if and only if
$$\cG^{(1)}\leq \Theta^{-1}\otimes I_2.$$
We note that by Theorem~\ref{thm:MCR-equiv}, $(\theta_1, \dots, \theta_m)\in \fS(A_1, \dots, A_m)$ if and only if for any $t$ we have
$$\cG^{(t)}\leq \Theta^{-1}\otimes I_{t+1},$$
where $\Theta=\text{diag}(\theta_1, \dots, \theta_m)$ and $I_{t+1}$ is the $(t+1)\times (t+1)$ identity matrix. Thus, to prove the theorem we need to show that if the above inequality holds for $t=1$, then it holds for all $t$. 
We note that $t=1$ corresponds to degree-one basis operators, that are linear in terms of quadrature operators.



Let $S$ be the matrix of size $(t+1)\times 2^t$ used in the proof of Theorem~\ref{thm:main-mc} whose entries are given by~\eqref{eq:def-S-matrix}. Also, let 
$$S_m = S\oplus \cdots \oplus S = I_m\otimes S.$$ 
Based on the computations in the proof of Theorem~\ref{thm:main-mc} we have $SS^{\dagger}=I$. Moreover, letting $\cG^{(t)}_{jj'}$ be the $jj'$-th block of $\cG^{(t)}$ that consists of the inner products of elements of $ \{H_{k, \ell}^{(j)}:\, k+\ell=t\}$ and $ \{H_{k, \ell}^{(j')}:\, k+\ell=t\}$, we have $S\big(\cG^{(1)}_{jj'}\big)^{\otimes t}S^\dagger = \cG^{(t)}$. Therefore, using the notation of Lemma~\ref{lem:box-times} we have
\begin{align}\label{eq:S-m-cG-t}
S_m \big(\cG^{(1)}\big)^{\boxtimes t} S^\dagger_m = \cG^{(t)}.
\end{align}
Thus, to prove the theorem we need to show that if $\cG^{(1)}\leq \Theta^{-1}\otimes I_2$, then $S_m \big(\cG^{(1)}\big)^{\boxtimes t} S^\dagger_m \leq \Theta^{-1}\otimes I_{t+1}$. Starting from $\cG^{(1)}\leq \Theta^{-1}\otimes I_2$ and using Lemma~\ref{lem:box-times} we have
$$\big(\cG^{(1)}\big)^{\boxtimes t}\leq \big(\Theta^{-1}\otimes I_2\big)\boxtimes \big(\cG^{(1)}\big)^{\boxtimes (t-1)} = \Theta^{-1}\otimes I_{2}^{\otimes t},$$
where the equality follows from the fact that $\Theta^{-1}\otimes I_2$ is diagonal and the blocks on the diagonal of $\big(\cG^{(1)}\big)^{\boxtimes (t-1)}$ are equal to $I_2^{\otimes (t-1)}$. Next, conjugating both sides with $S_m=I_m\otimes S $ and using~\eqref{eq:S-m-cG-t} yield
$$\cG^{(t)}\leq S_m\big(\Theta^{-1}\otimes I_{2}^{\otimes t}\big)S_m^\dagger = (I_m\otimes S)\big(\Theta^{-1}\otimes I_{2}^{\otimes t}\big)(I_m \otimes S^\dagger) = \Theta^{-1}\otimes SS^{\dagger} =\Theta^{-1}\otimes I_{t+1},$$
which proves the theorem.

\end{proof}

\end{document}